\def\llncs{0}
\authorrunning{Gupte, Vaikuntanathan}
\titlerunning{How to Construct Quantum FHE, Generically}
\title{How to Construct Quantum FHE, \\ Generically}
\author{Aparna Gupte \and Vinod Vaikuntanathan}
\institute{MIT CSAIL}
\author{Aparna Gupte\thanks{Research supported by the Ida M. Green MIT Office of Graduate Education Fellowship and by the grants of the second author.}\\MIT\\
\texttt{agupte@mit.edu}\and Vinod Vaikuntanathan\thanks{Research supported in part by DARPA under Agreement Number HR00112020023, NSF CNS-2154149, a Simons Investigator award and a Thornton Family Faculty Research Innovation Fellowship
from MIT.}\\MIT\\\texttt{vinodv@mit.edu}}
\begin{document}
\maketitle
\begin{abstract}
    We construct a (compact) quantum fully homomorphic encryption (QFHE) scheme starting from {\em any} (compact) classical fully homomorphic encryption scheme with decryption in $\mathsf{NC}^{1}$, together with a dual-mode trapdoor function family. Compared to previous constructions (Mahadev, FOCS 2018; Brakerski, CRYPTO 2018) which made non-black-box use of similar underlying primitives, our construction provides a pathway to instantiations from different assumptions. Our construction uses the techniques of Dulek, Schaffner and Speelman (CRYPTO 2016) and shows how to make the client in their QFHE scheme classical using dual-mode trapdoor functions. As an additional contribution, we show a new instantiation of dual-mode trapdoor functions from group actions.
\end{abstract}

\ifnum\llncs=0
\newpage 
\tableofcontents
\newpage
\fi

\section{Introduction}

A quantum fully homomorphic encryption (QFHE)~\cite{mahadev2020classical} scheme enables a classical client with an input $x$ to outsource the computation of a quantum circuit $Q$ (on $x$) to an untrusted quantum server, while hiding $x$.  The first QFHE scheme was constructed under the (super-polynomial) learning with errors (LWE) assumption in a groundbreaking work of Mahadev~\cite{mahadev2020classical}; this was followed shortly after by a work of Brakerski~\cite{DBLP:conf/crypto/Brakerski18} who, with a different construction, improved the assumption to polynomial LWE.\footnote{In both cases, we get a leveled QFHE scheme, one that supports quantum circuits of an a-priori bounded depth, from LWE. To construct an unleveled QFHE scheme, both works rely on an appropriate circular security assumption, different from the one used to get an unleveled classical FHE scheme. From this point on, we focus on constructing a leveled QFHE scheme.}

The QFHE constructions of \cite{mahadev2020classical,DBLP:conf/crypto/Brakerski18} built on classical FHE schemes that can be constructed from the LWE assumption. However, their schemes use very specific properties of specific FHE schemes. For example, \cite{mahadev2020classical} used the dual-GSW FHE scheme~\cite{DBLP:conf/crypto/GentrySW13} together with a noisy trapdoor claw-free function built from LWE. The construction used the fact that the dual-GSW encryption of a bit $b$ can be converted into the description of a claw-free function pair $(f_0,f_1)$ that ``encodes'' the bit, in the sense that for two pre-images $x_0$ and $x_1$ such that $f_0(x_0)=f_1(x_1)$, the first bits of $x_0$ and $x_1$ xor to $b$. In a similar vein, \cite{DBLP:conf/crypto/Brakerski18} crucially used the GSW encryption scheme and the discrete Gaussian structure of the set of all random strings consistent with a given ciphertext. In other words, both schemes used a particular instantiation of the underlying primitives and exploited their intricate interplay.

This raises a natural question which is the starting point of our work: 
\begin{center}
    \begin{quote}
        {\em Can we generically transform any FHE scheme into a QFHE scheme?}
    \end{quote}
\end{center}

\noindent
Aside from the aesthetic appeal of such a result, it would give us a recipe to convert new classical FHE schemes, as they arise, into quantum FHE schemes; an example of a different FHE scheme is the IO-based FHE construction from \cite{canetti2015obfuscation}.

\subsection{Our Results} 
In this work, we make progress towards this goal by constructing a QFHE scheme starting with {\em any} (classical) FHE scheme (whose decryption circuit can be implemented in logarithmic depth) and {\em any} dual-mode trapdoor function (dTF) family. 

We define a dual-mode trapdoor function family as follows: a dTF is a pair of functions $(f_0,f_1)$ that either has the same image in one mode, or disjoint images in the other. Further, these two modes are computationally indistinguishable (see Section~\ref{section:dTFs} for the formal definition). In contrast, a plain claw-free trapdoor function (as defined in \cite{mahadev2018classical, brakerski2021cryptographic}) is a pair of functions $(f_0,f_1)$ for which it is computationally hard to find a claw, i.e. a pair of inputs $(x_0,x_1)$ such that $f_0(x_0)=f_1(x_1)$.

\begin{informaltheorem}\label{informaltheorem:main-theorem}
    Let $\secp$ be the security parameter. Assuming the existence of a leveled fully homomorphic encryption scheme $\mathsf{HE}$ whose decryption algorithm can be implemented as a Boolean circuit with depth $O(\log \secp)$, and a dual-mode trapdoor function family, there exists a quantum leveled fully homomorphic encryption scheme with a classical key generation procedure. The scheme encrypts classical information using classical ciphertexts.
\end{informaltheorem}

In other words, our result offers an answer to the open question above, except with two caveats: (1) we require the classical FHE scheme to have a shallow decryption circuit; and (2) we require an additional primitive, namely a dual-mode trapdoor function family. Both building blocks required for our construction can be instantiated from the Ring LWE assumption giving us the first construction of QFHE from Ring LWE. We note that one could have come up with this instantiation by looking into the innards of \cite{mahadev2020classical,DBLP:conf/crypto/Brakerski18}; however, with our construction, this instantiation is simply plug-and-play. Even with LWE-based constructions, any improvements in the underlying LWE-based FHE or LWE-based dual-mode trapdoor function translate directly into corresponding improvements in our QFHE scheme.

As an additional result, we show how to instantiate the dual-mode trapdoor function from group actions, building on the work of \cite{alamati2022candidate} who construct a plain claw-free trapdoor function from the same assumption. 

\begin{informaltheorem}
    There exists a dual-mode trapdoor function family under the extended linear hidden shift assumption (see \cite{alamati2022candidate} and Definition~\ref{def:elhs}) on group actions.
\end{informaltheorem}

It also turns out that we can construct an FHE scheme given an indistinguishability obfuscation (IO) scheme, together with a perfectly rerandomizable encryption scheme, following \cite{canetti2015obfuscation}. Together with a construction of a perfectly rerandomizable encryption scheme from group actions~\cite{Wichs}, this gives us a QFHE scheme from (post-quantum, subexponentially secure) IO and group actions.

\begin{informaltheorem}
    Assuming that post-quantum sub-exponentially secure indistinguishability obfuscation exists and assuming the post-quantum hardness of the extended linear hidden shift assumption, there is a quantum leveled fully homomorphic encryption scheme with a classical key generation procedure. The scheme encrypts classical information using classical ciphertexts.
\end{informaltheorem}

We remark that we can extend this result to give unleveled QFHE using post-quantum IO.

\subsection{Technical Overview}

The starting point of our QFHE construction is the beautiful work of Dulek, Schaffner and Speelman~\cite{dulek2016quantum} which predated \cite{mahadev2020classical}, and showed how to build a protocol whereby a {\em semi-quantum} client with a (classical or quantum) input $x$ can delegate to a fully quantum server with a quantum circuit $Q$ the computation of $Q(x)$. In addition to encrypting the input $x$, the client also generates polynomially many copies of a {\em quantum gadget} that helps the server do the quantum homomorphic evaluation of $Q$ on the encrypted $x$. 

In a nutshell, the key idea in our work is to come up with a mechanism for the client to outsource to the server the preparation of these quantum gadgets, using a dual-mode trapdoor function. This results in a completely classical client, and consequently a QFHE scheme.\footnote{This is, in effect, a remote state preparation protocol for the DSS gadget states; however, in the QFHE setting (as opposed to the verification setting), the protocol only needs to be secure against a semi-honest/specious server.} We now proceed to describe the construction in more detail, starting with the DSS protocol. 

The DSS scheme builds on the work of Broadbent and Jeffery~\cite{broadbent2015quantum} and encrypts quantum states (or classical strings) with a quantum one-time pad.
That is, a single-qubit state $\ket{\psi}$ is encrypted as 
$\X^{x}\Z^{z} \ket{\psi}$
where $x,z\gets \{0,1\}$ are uniformly random bits and 
$$ \X = \left( \begin{array}{cc} 0 & 1 \\ 1 & 0 
\end{array} \right),~
\Z = \left( \begin{array}{cc} 1 & 0 \\ 0 & -1 
\end{array} \right)
$$are the Pauli $\X$ and $\Z$ operators. Additionally, the ciphertext also contains a (classical) homomorphic encryption of the one-time pad keys $(x,z)$.

We recall that any quantum circuit can be implemented using Clifford gates and the single-qubit $\T$ gate 
$$ \T = \left( \begin{array}{cc} 1 & 0 \\ 0 & e^{i\pi/4} \end{array} \right)$$
Homomorphic evaluation of Clifford gates on states encrypted using the Pauli one-time pad is easy as Clifford gates nicely ``commute'' through the Pauli one-time pad. For example, 
$$ \H\X^x\Z^z \ket{\psi} = \Z^x\X^z\H\ket{\psi}$$
All that needs to be done is to homomorphically update encryption of the one-time pads.
When the non-Clifford $\T$ gate is applied to a one-time padded state, we end up with an extra non-Pauli error in the form of a $\P$ gate,
\begin{align*}
    \T \X^x \Z^z \ket{\psi} = \P^x \X^x \Z^z \T \ket{\psi}.
\end{align*}
The natural solution would be to apply a $\P^{\dagger}$ gate to this state if $x = 1$, and otherwise do nothing. However, the evaluator only has access to an encryption $\tilde{x}$ of the one-time pad key $x$. Dulek, Schaffner and Speelman~\cite{dulek2016quantum} solve this issue by writing the decryption algorithm (with the secret key $sk$ hardcoded) as a branching program and converting it into a  quantum ``teleportation gadget'' $\Gamma(sk)$ which the client prepares during key generation. We henceforth refer to this gadget as the DSS teleportation gadget. The server then teleports the qubit $\P^x \X^x \Z^z \T \ket{\psi}$ through the gadget in a way that depends on $\tilde{x}$, and the gadget implicitly decrypts $\tilde{x}$ and applies the $(\P^\dagger)^x$ correction to the qubit. The functionality achieved by the DSS gadget is depicted in Figure~\ref{fig:DSS}.
\ifnum\llncs=0
For completeness, we describe how the DSS gadget looks and how it is used to achieve the desired functionality in Appendix~\ref{section:DSSgadget}. 
\fi 

\begin{figure}[t]
\begin{center}
\includegraphics[height=2in]{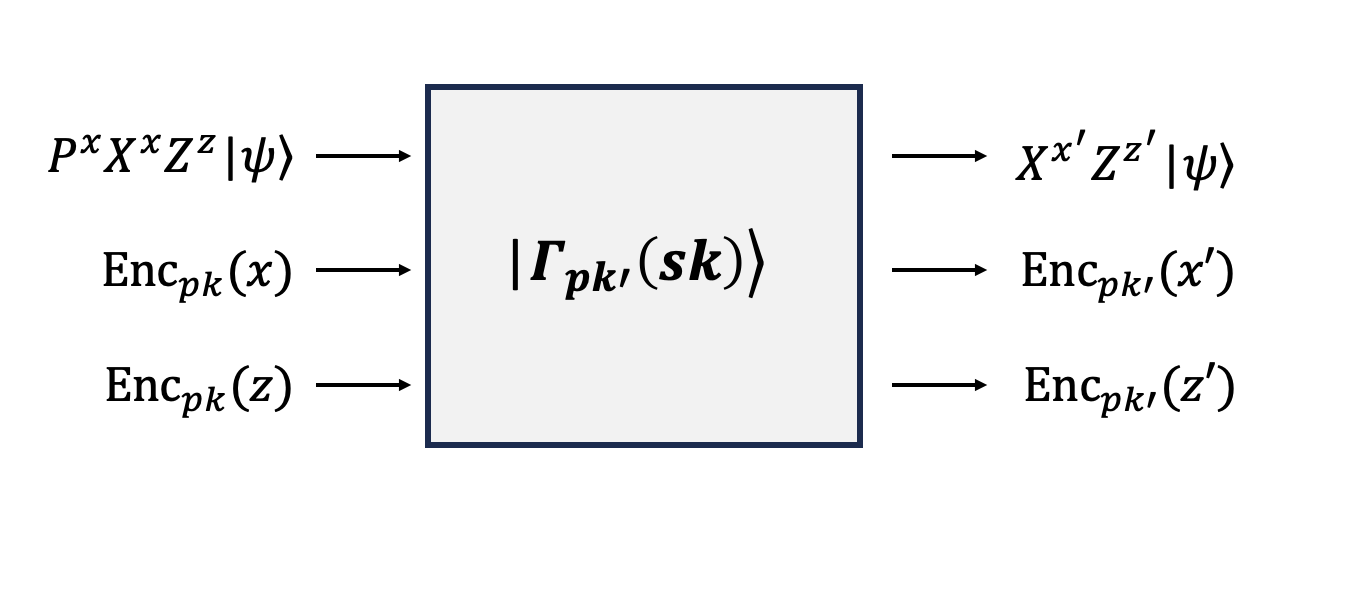}
\end{center}
\caption{The functionality of the DSS quantum gadget $\ket{\Gamma(sk)}$: take as input a state $P^xX^xZ^z\ket{\psi}$ as well as encryptions of $x$ and $z$, and produce as output $X^{x'}Z^{z'}\ket{\psi}$ together with encryptions of the new one-time pad keys $x'$ and $z'$.}
\label{fig:DSS}
\end{figure}

At a high level, the DSS gadget consists of Bell pairs that are wired in a way that depends on the bits of the secret key $sk$ used to decrypt the ciphertext $\Enc_{pk}(x)$. The key new contribution of our work is a method that allows a completely classical client to delegate the preparation of the DSS teleportation gadget to the server. At the center of our method is a procedure that prepares a ``hidden'' Bell pair. The procedure takes three qubits in registers $0, 1, 2$ in the product state $\ket{+}_0 \otimes \ket{+}_1 \otimes \ket{+}_2$, and a particular kind of ``encryption'' of a bit $\mu$. The procedure then outputs a (one-time padded) state where the qubits on registers $\mu$ and $2$ wired together to form a maximally entangled Bell state  that is in tensor product with the qubit $\ket{+}_{1-\mu}$ in the $1-\mu$ register,
\begin{align*}
    \ket{\Phi^+}_{\mu, 2} \otimes \ket{+}_{1-\mu}.
\end{align*}
We do this with the help of a ``$4$-to-$2$'' dual-mode trapdoor function (which we can obtain generically from dual-mode trapdoor functions, see Section~\ref{section:4to2dTFs}), a new primitive we define as follows (see Figure~\ref{figure:4-2-dTF} for a pictorial illustration): a tuple of four functions $f_{b_1, b_2}: \mathcal{X} \rightarrow \mathcal{Y}$ for $b_1, b_2$ such that depending on a ``mode'' bit $\mu \in\{0,1\}$, either
\begin{enumerate}
    \item{(Case $\mu = 0$)} two functions $f_{b_1, b_2}$ and $f_{b'_1, b'_2}$ have the same image if $b_1 = b'_1$ and otherwise they have disjoint images, or
    \item{(Case $\mu = 1$)} two functions $f_{b_1, b_2}$ and $f_{b'_1, b'_2}$ have the same image if $b_2 = b'_2$. Otherwise they have disjoint images.
\end{enumerate}
Further, without a trapdoor, no efficient adversary should be able to guess $\mu$ significantly better than randomly guessing.
Such a tuple of functions encoding the mode bit $\mu$ is the specific ``encryption'' we use to prepare the required hidden Bell pairs.
Given such a family of functions, we prepare the state
\begin{align*}
    \frac{1}{2 \sqrt{2 |\mathcal{X}|}}\sum_{\substack{u, v, w \in \{0,1\}\\x \in \mathcal{X}}} \ket{u, v, w}_{0, 1, 2} \otimes \ket{x}_3 \otimes \ket{f_{u \oplus w, v \oplus w}(x)}_4.
\end{align*}
Suppose for simplicity that $\mu = 0$.
Measuring the fourth register to get some $y \in \mathcal{Y}$ collapses the state to preimages under either the $b_1 = 0$ pair $(f_{0, b_2})_{b_2 \in \{0,1\}}$  or the $b_1 = 1$ pair $(f_{1, b_2})_{b_2 \in \{0,1\}}$. Again, for simplicity, let us consider the branch where $b_1 = 0$. The state we are now left with is
\begin{align*}
    \frac{1}{2} \left(\ket{000}\ket{x_{0,0}} + \ket{101}\ket{x_{0,1}} + 
    \ket{111}\ket{x_{0,0}} +
    \ket{010}\ket{x_{0,1}}\right)_{0,1,2,3},
\end{align*}
where $x_{b_1, b_2}$ is such that $f_{b_1, b_2}(x_{b_1, b_2}) = y$.
Now, we measure the register 3 (containing the preimages) in the Hadamard basis, so that we obtain a string $d$ and the state (up to global phase)
\begin{gather*}
    \frac{1}{2} \left( \ket{000} + (-1)^{z} \ket{101} + \ket{111} + (-1)^z \ket{010} \right) \\
    = \frac{1}{2} (\Z^{z} \otimes \Z^z \otimes \I) \ket{\Phi^+}_{0,2} \otimes \ket{+}_{1},
\end{gather*}
where $z = d \cdot (x_{0,0} \oplus x_{0,1})$.

This basic protocol allows us to dequantize the DSS client in the following manner: depending on the secret key $sk$, the client generates {\emph classical} evaluation keys for the $4$-to-$2$ dual-mode trapdoor functions (where the modes depend on the bits of $sk$) and sends these to the server. The server then performs the above procedure for all the qubits in the gadget, thereby wiring the Bell pairs correctly depending on the bits of $sk$. Once the server has prepared this gadget $\Gamma(sk)$, it can simply follow the DSS scheme for evaluating the $\T$ gates and correcting the $\P$ errors.

\subsection{Related Work}
\paragraph{Quantum Fully-Homomorphic Encryption.} Broadbent and Jeffery~\cite{broadbent2015quantum} gave the first QFHE scheme assuming only a classical FHE scheme, although they considered the setting of a quantum client. They constructed a (non-compact) scheme where the ciphertext size grows polynomially with the number of non-Clifford gates in the circuit. They also gave a different quantum FHE scheme in which the quantum client produces quantum evaluation keys, the size of which grows doubly exponentially in the non-Clifford gate-depth. Dulek, Schaffner and Speelman~\cite{dulek2016quantum} improved on this and gave a QFHE scheme, also for a quantum client, but with the size of the quantum evaluation key growing polynomially with the non-Clifford gate-depth. However, they required the underlying classical FHE scheme to have a decryption circuit in $\mathsf{NC}^1$.

The work of Mahadev~\cite{mahadev2020classical} was a major breakthrough and constructed the first QFHE scheme for a completely classical client, under the quantum hardness of LWE (with superpolynomial modulus-to-noise ratio). Brakerski~\cite{DBLP:conf/crypto/Brakerski18} later improved the parameters, and gave a different QFHE scheme for a classical client under a weaker assumption, namely LWE with a polynomial modulus-to-noise ratio.

\paragraph{Remote State Preparation.} The main conceptual idea of our work is to give a protocol by which the server can blindly prepare the quantum evaluation keys of \cite{dulek2016quantum}, given only classical messages from the client and therefore side-stepping the need for a quantum client as well as quantum communication. This idea is inspired by the work on remote state preparation, a protocol that allows a quantum server to blindly prepare a quantum state specified by the classical client. The notion of remote state preparation was first introduced by~\cite{dunjko2016blind}, and subsequently studied by \cite{gheorghiu2019computationally, gheorghiu2022quantum}. We note that remote state preparation usually refers to a stronger variant in the setting of a malicious server, where the protocol to prepare the quantum state is not only \textit{blind}, but also \textit{verifiable} by the classical client. However in the QFHE setting, the server is semi-honest, and we only require the weaker notion of blind (but not necessarily verifiable) remote state preparation.

\paragraph{Organization of the Paper.} In Section~\ref{section:preliminaries}, we introduce some notation and define (quantum) homomorphic encryption. In Section~\ref{section:dTFs}, we define dual-mode trapdoor functions (dTFs), and show how to construct a variant we call ``$4$-to-$2$'' dual-mode trapdoor functions from plain dual-mode trapdoor functions. We also give an amplification lemma that will later allow us to construct dTFs from group actions with negligible correctness error. In Section~\ref{section:dss-rsp} we use this primitive to show that we can dequantize the DSS client, and we present our QFHE scheme in Section~\ref{section:our-scheme}. Section~\ref{section:instantiations} lists the known instantiations of the two primitives we require, namely classical FHE and dual-mode trapdoor functions, from several cryptographic assumptions. Finally, in Section~\ref{sec:groupactions}, we show an instantiation based on hardness assumptions related to group actions.

\section{Preliminaries}\label{section:preliminaries}
\paragraph{Notation.}
Following the convention in \cite{dulek2016quantum}, we  denote ciphertexts with a tilde sign, when the encryption scheme and the public/private key are clear from the context. For example $\tilde{a}$ denotes an encryption of $a$. When a sequence of keys is involved, we write $\tilde{a}^{[i]}$ to denote an encryption of $a$ under the $i$-th key in a set.  If $\mathcal{M}$ is a quantum register, we denote the set of density operators on $\mathcal{M}$ by $D(\mathcal{M})$. We denote the trace distance between density matrices $\rho$ and $\sigma$ as $T(\rho, \sigma)$, and the total variation distance between probability distribution $D_1, D_2$ as $\TV(D_1, D_2)$.

\begin{definition}[Hellinger Distance] For two probability density functions $f, g$ over a finite domain $\mathcal{X}$, the squared Hellinger distance between $f$ and $g$ is defined as
\begin{align*}
    H^2(f, g) = \frac{1}{2} \sum_{x \in \mathcal{X}} \left(\sqrt{f(x)} - \sqrt{g(x)}\right)^2 = 1 - \sum_{x \in \mathcal{X}} \sqrt{f(x) g(x)}.
\end{align*}
\end{definition}

\begin{lemma}\label{lemma:hellinger-TV}
    Let $D_1, D_2$ be two probability density functions over a finite domain $\mathcal{X}$, then
    \begin{align*}
        \frac{1}{2}H^2(D_1, D_2) \le \TV(D_1, D_2) \le H(D_1, D_2).
    \end{align*}
\end{lemma}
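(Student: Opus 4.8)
The plan is to prove the two inequalities independently; both are elementary and need nothing from the rest of the paper beyond the two defining expressions for $H^2$. Throughout write $f := D_1$, $g := D_2$ for the two densities, and let $B := \sum_x \sqrt{f(x)g(x)}$ be the Bhattacharyya overlap, so that the second form of the definition reads $H^2 = 1 - B$. I would also record one preliminary identity for total variation, namely $\TV(f,g) = 1 - \sum_x \min\{f(x),g(x)\}$, which follows from $\min\{a,b\} = \frac{1}{2}(a+b-|a-b|)$ together with $\sum_x f = \sum_x g = 1$.

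For the lower bound, the point is simply to compare $B$ with $\sum_x \min\{f,g\}$ termwise. For each $x$ we have $\sqrt{f(x)g(x)} \ge \min\{f(x),g(x)\}$, since $f(x)g(x) = \max\{f(x),g(x)\}\cdot\min\{f(x),g(x)\} \ge \min\{f(x),g(x)\}^2$. Summing over $x$ gives $B \ge \sum_x \min\{f,g\}$, hence by the identity above $1 - B \le \TV$, which is exactly $\frac{1}{2}H^2 \le \TV$ in the lemma's normalization (and in fact a touch stronger).

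For the upper bound, the natural tool is Cauchy--Schwarz after the factorization $|f-g| = |\sqrt f - \sqrt g|\,(\sqrt f + \sqrt g)$. Writing $\TV = \frac{1}{2} \sum_x |\sqrt f - \sqrt g|(\sqrt f + \sqrt g)$ and splitting the two factors yields
\[
\TV \;\le\; \frac{1}{2}\Big(\sum_x(\sqrt f - \sqrt g)^2\Big)^{1/2}\Big(\sum_x(\sqrt f + \sqrt g)^2\Big)^{1/2}.
\]
The second factor I would bound crudely via $(\sqrt f + \sqrt g)^2 \le 2(f+g)$, giving $\sum_x(\sqrt f + \sqrt g)^2 \le 2\sum_x(f+g) = 4$, so its square root is at most $2$; this collapses the prefactor and leaves $\TV \le \big(\sum_x(\sqrt f - \sqrt g)^2\big)^{1/2}$, which is the Hellinger distance $H$.

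The only delicate point --- and what I would treat as the main obstacle --- is tracking the constant and the $\frac{1}{2}$ normalization so that the right-hand side comes out as $H$ rather than a spurious multiple of it: the Cauchy--Schwarz step produces $\big(\sum_x(\sqrt f - \sqrt g)^2\big)^{1/2}$, and identifying this with $H$ (as opposed to $\sqrt2\,H$) relies on reading $H^2$ with the normalization intended in the lemma, while the ``mass'' factor must be controlled tightly (the bound $4$, equivalently $B \le 1$) so that no constant is lost. As a final sanity check I would verify the two extreme cases, $f = g$ (all three quantities $0$) and disjointly supported $f,g$ (all three equal to $1$), to confirm the constants are pinned down.
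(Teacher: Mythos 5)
The paper states this lemma without proof (it is invoked as a standard fact), so there is no in-paper argument to compare against. Your lower bound is correct and even a bit stronger than what is asked: the termwise bound $\sqrt{f(x)g(x)} \ge \min\{f(x),g(x)\}$ together with $\TV(f,g) = 1 - \sum_x \min\{f(x),g(x)\}$ gives $1 - B \le \TV(f,g)$, and under the paper's definition $H^2 = 1 - B$ this is $H^2 \le \TV$, hence certainly $\frac{1}{2}H^2 \le \TV$.

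The upper bound is where the genuine problem lies, and you have correctly located it but not resolved it --- because it cannot be resolved. Your Cauchy--Schwarz computation establishes $\TV(f,g) \le \bigl(\sum_x (\sqrt{f(x)} - \sqrt{g(x)})^2\bigr)^{1/2}$, which under the paper's own definition $H^2 = \frac{1}{2}\sum_x(\sqrt{f(x)}-\sqrt{g(x)})^2$ equals $\sqrt{2}\,H$, not $H$; so the argument only yields $\TV \le \sqrt{2}\,H$. This is not a constant you can recover by tightening the ``mass'' factor: with the paper's normalization the claimed inequality $\TV \le H$ is simply false. Take $D_1 = (0.6, 0.4)$ and $D_2 = (0.5, 0.5)$ on a two-point domain; then $\TV = 0.1$ while $B = \sqrt{0.3} + \sqrt{0.2} \approx 0.995$, so $H \approx 0.071 < \TV$. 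The chain $\frac{1}{2}H^2 \le \TV \le H$ is the standard statement for the \emph{other} convention $H^2 := \sum_x(\sqrt{f(x)}-\sqrt{g(x)})^2$ (no factor $\frac{1}{2}$), under which $\frac{1}{2}H^2 = 1-B \le \TV$ and $\TV \le \sqrt{1-B^2} \le \sqrt{2(1-B)} = H$; your proof goes through verbatim there. So the defect you flagged is real: it is an inconsistency between the paper's Definition of $H$ and its Lemma, and your proposed sanity check would have exposed it had you actually evaluated the disjoint-support case under the paper's definition (there $\frac{1}{2}H^2 = \frac{1}{2}$ while $\TV = H = 1$, so the three quantities are not all equal to $1$ under either convention).
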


\begin{lemma}\label{lemma:hellinger-trace-distance}
    Let $\mathcal{X}$ be a finite set and let $D_1, D_2$ be probability densities on $\mathcal{X}$. Let
    \begin{align*}
        \ket{\psi_1} = \sum_{x\in \mathcal{X}} \sqrt{D_1(x)} \ket{x} \quad \text{ and } \quad \ket{\psi_2} = \sum_{x \in \mathcal{X}} \sqrt{D_2(x)}\ket{x}\;.
    \end{align*}
    Then, the trace distance between the pure states $\ket{\psi_1}, \ket{\psi_2}$ is
    \begin{align*}
        T(\ket{\psi_1} \bra{\psi_1}, \ket{\psi_2} \bra{\psi_2}) = \sqrt{1 -  (1 - H^2(D_1, D_2))^2}.
    \end{align*}
\end{lemma}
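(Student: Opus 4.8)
The plan is to reduce the claim to the standard closed-form for the trace distance between two pure states, and then to recognize that the relevant inner product is exactly the Hellinger overlap of $D_1$ and $D_2$.

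First I would recall the fact that for any two pure states $\ket{\psi_1}, \ket{\psi_2}$ one has
\[
T(\ket{\psi_1}\bra{\psi_1}, \ket{\psi_2}\bra{\psi_2}) = \sqrt{1 - |\langle \psi_1 | \psi_2 \rangle|^2}.
\]
If I want the lemma to be self-contained I would give the short derivation: the Hermitian operator $\ket{\psi_1}\bra{\psi_1} - \ket{\psi_2}\bra{\psi_2}$ is supported on the (at most two-dimensional) span of $\ket{\psi_1}$ and $\ket{\psi_2}$, and a direct computation of its eigenvalues in that subspace gives the two nonzero eigenvalues $\pm\sqrt{1 - |\langle \psi_1 | \psi_2 \rangle|^2}$. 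Since the trace distance is half the sum of the absolute values of the eigenvalues, the displayed formula follows.

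Second I would compute the inner product for the specific states in the lemma. Because $\ket{\psi_1} = \sum_x \sqrt{D_1(x)}\ket{x}$ and $\ket{\psi_2} = \sum_x \sqrt{D_2(x)}\ket{x}$ are superpositions over the same orthonormal basis $\{\ket{x}\}_{x \in \mathcal{X}}$ with real non-negative amplitudes, the overlap is
\[
\langle \psi_1 | \psi_2 \rangle = \sum_{x \in \mathcal{X}} \sqrt{D_1(x) D_2(x)} = 1 - H^2(D_1, D_2),
\]
where the last equality is just the definition of the squared Hellinger distance. In particular $\langle \psi_1 | \psi_2 \rangle$ is real and lies in $[0, 1]$, so $|\langle \psi_1 | \psi_2 \rangle|^2 = (1 - H^2(D_1, D_2))^2$.

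Finally, substituting this into the pure-state trace-distance formula yields the claimed identity
\[
T(\ket{\psi_1}\bra{\psi_1}, \ket{\psi_2}\bra{\psi_2}) = \sqrt{1 - (1 - H^2(D_1, D_2))^2}.
\]
I do not anticipate a genuine obstacle: the only nontrivial ingredient is the pure-state trace-distance formula, which is standard, and everything else is bookkeeping. The one point worth stating carefully is that the amplitudes are real and non-negative, so that the inner product coincides with the Hellinger overlap with no absolute-value or relative-phase subtleties.
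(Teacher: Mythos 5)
Your proof is correct, and since the paper states this lemma without proof in its preliminaries, your argument is exactly the standard derivation the authors implicitly rely on: the pure-state trace-distance formula $\sqrt{1-|\langle\psi_1|\psi_2\rangle|^2}$ combined with the observation that the overlap $\sum_{x}\sqrt{D_1(x)D_2(x)}$ equals $1-H^2(D_1,D_2)$ by the paper's definition of the squared Hellinger distance. Your care in noting that the amplitudes are real and non-negative (so no phase issues arise) is the right point to flag.
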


\subsection{Homomorphic Encryption}
\begin{definition}[Classical Homomorphic encryption scheme] Let $\secp$ be the security parameter. A homomorphic encryption scheme $\HE$ is a tuple of p.p.t.~ algorithms $\HE = (\HE.\KeyGen, \HE.\Enc, \HE.\Dec, \HE.\Eval)$ with the following properties.
\begin{itemize}
    \item{Key Generation.} The probabilistic key generation algorithm $(pk, evk, sk) \leftarrow \HE.\KeyGen(1^\secp)$ outputs a public key $pk$, an evaluation $evk$ and a secret key $sk$.
    \item{Encryption.} The probabilistic encryption algorithm $c \leftarrow \HE.\Enc_{pk}(x)$ uses the public key $pk$ and encrypts a single bit message $x \in \{0,1\}$ into a ciphertext $c$.
    \item{Decryption.} The deterministic decryption algorithm $x \leftarrow \HE.\Dec_{sk}(c)$ uses the secret key $sk$ and decrypts a ciphertext $c$ to recover the message $x \in \{0,1\}$.
    \item{Evaluation.} The (deterministic or probabilistic) evaluation algorithm $c' \leftarrow \HE.\Eval_{evk}(\mathcal{C}, (c_1, \ldots, c_\ell))$ uses the evaluation key $evk$, takes as input a circuit $\mathcal{C}:\{0,1\}^\ell \rightarrow\{0,1\}$ and sequence of ciphertexts $c_1, \ldots, c_\ell$, and outputs a ciphertext $c'$.
\end{itemize}
\end{definition}
We often overload the functionality of the encryption and decryption procedures by allowing them to take in multi-bit messages as inputs, and produce a sequence of ciphertexts that correspond to bit-by-bit encryptions.

\begin{definition}[Full homomorphism and compactness]
    A scheme $\HE$ is fully homomorphic, if for any efficiently computable circuit $\mathcal{C}$ and any set of inputs $x_1, x_2, \ldots, x_\ell$,
    \begin{align*}
        \Pr[\HE.\Dec_{sk}(c') \neq \mathcal{C}(x_1, \ldots, x_\ell)] = \negl(\secp).
    \end{align*}
where the probability is taken over $(pk, evk, sk) \leftarrow \HE.\KeyGen(1^\secp)$ and $c_i \leftarrow \HE.\Enc_{pk}(x_i)$, and $c' \leftarrow \HE.\Eval_{evk}(\mathcal{C}, (c_1, \ldots, c_\ell))$.

A fully homomorphic encryption scheme is compact if its decryption circuit is independent of the evaluated function. The scheme is leveled fully homomorphic if it takes $1^L$ as additional input in key generation, and can only evaluate depth $L$ Boolean circuits.
\end{definition}

\subsection{Quantum Homomorphic Encryption}
We now give a definition for quantum homomorphic encryption based on the definition given in \cite{broadbent2015quantum}.
In this work, we focus on quantum homomorphic encryption schemes with entirely classical key generation procedures.
\begin{definition}[Quantum Homomorphic Encryption]
A quantum homomorphic encryptions scheme is a $\mathsf{QHE}$ is a tuple of quantum polynomial-time algorithms $(\mathsf{QHE}.\mathsf{KeyGen}, \mathsf{QHE}.\mathsf{Eval}, \mathsf{QHE}.\mathsf{Dec})$:
\begin{itemize}
    \item{Key Generation.} The probabilistic key generation algorithm $(pk, sk, evk) \leftarrow \mathsf{QHE}.\mathsf{KeyGen}(1^\lambda)$ takes a unary representation of the security parameter as input and outputs a classical public key $pk$, a classical secret key $sk$ and a classical evaluation key $evk$.
    \item{Encryption.} For every possible value of $pk$, the quantum channel $\mathsf{QHE}.\mathsf{Enc}_{pk} : D(\mathcal{M}) \rightarrow D(\mathcal{C})$ maps a state in the message space $\mathcal{M}$ to a state (the cipherstate) in the cipherspace $\mathcal{C}$.
    \item{Decryption.} For every possible value of $sk$, $\mathsf{QHE}.\mathsf{Dec}_{sk} : D(\mathcal{C}') \rightarrow \mathcal{M}$ is a quantum channel that maps the state in $D(\mathcal{C}')$ to a quantum state in $D(\mathcal{M})$.
    \item{Homomorphic Circuit Evaluation.} For every quantum circuit $\mathsf{C}$, with induced channel $\Phi_{\mathsf{C}}: D(\mathcal{M}^{\otimes n}) \rightarrow D(\mathcal{M}^{\otimes m})$, we define a channel $\mathsf{QHE}.\mathsf{Eval}_{evk}(\mathsf{C}, \cdot) : D(\mathcal{C}^{\otimes n}) \rightarrow D(\mathcal{C}'^{\otimes m})$ that maps an $n$-fold cipherstate to an $m$-fold cipherstate.
\end{itemize}
\end{definition}

\begin{definition}[Quantum Full Homomorphism and Compactness] Let $\lambda$ be the security parameter. A quantum homomorphic encryption scheme $\QHE$ is fully homomorphic, if for any efficiently computable quantum circuit $\mathsf{C}$ with induced channels $\Phi_\mathsf{C} : \mathcal{M}^{\otimes n(\lambda)} \rightarrow \mathcal{M}^{\otimes m(\lambda)}$, and for any input $\rho \in D(\mathcal{M}^{\otimes n(\lambda)} \otimes \mathcal{E})$, there exists a negligible function $\negl$ s.t. for $(pk, sk, evk) \leftarrow \QHE.\KeyGen(1^\lambda)$, the state
\begin{align*}
    \QHE.\Dec_{sk}^{\otimes m(\lambda)} \left( \QHE.\Eval_{evk}(\mathsf{C}, \QHE.\Enc_{pk}^{n(\lambda)}(\rho)) \right)
\end{align*}
is at most $\negl(\lambda)$-away in trace distance from the state $\Phi_{\mathsf{C}}(\rho)$.

A quantum homomorphic encryption is compact if its decryption circuit is independent of the evaluated circuit.

The scheme is leveled fully homomorphic if it takes $1^L$ as additional input in key generation, and can only evaluate circuits of $\T$-depth at most $L$.
\end{definition}

\def\QINDCPA{\mathsf{G}^{\text{q-CPA}}}

\begin{definition}[The Quantum IND-CPA Game~\cite{broadbent2015quantum}] Let $\mathcal{M}, \mathcal{S}, \mathcal{C}$ be registers. The quantum CPA indistinguishability experiment with respect to a quantum homomorphic encryption scheme $\mathsf{S}$ and a quantum polynomial-time adversary $\mathcal{A} = (\mathcal{A}_1, \mathcal{A}_2)$ with security parameter $\lambda$, denoted as $\QINDCPA_{\langle\mathsf{S}, \mathcal{A}\rangle}(1^\lambda)$ is defined as follows.
\begin{enumerate}
    \item The challenger runs $\mathsf{S}.\mathsf{Gen}(1^\lambda)$ to obtain keys $(pk, sk, evk)$.
    \item Adversary $\mathcal{A}_1$ is given $(pk, evk)$ and outputs a quantum state $\rho_{\mathcal{M}, \mathcal{S}}$ on $\mathcal{M} \otimes \mathcal{S}$.
    \item The challenger samples a uniformly random bit $r \leftarrow \{0,1\}$. If $r = 0$, the challenger outputs in $\mathcal{C}$ an encryption $\mathsf{S}.\mathsf{Enc}_{pk}(\ket{0} \bra{0})$ of the state $\ket{0} \bra{0}$. If $r = 1$, the challenger outputs in $\mathcal{C}$ an encryption $\mathsf{S}.\mathsf{Enc}_{pk}(\rho_{\mathcal{M}})$ of the state in $\mathcal{M}$.
    \item Adversary $\mathcal{A}_2$ obtains $\mathcal{C} \otimes \mathcal{S}$ and outputs a bit $r'$.
    \item If $r = r'$, the adversary $\mathcal{A}$ wins, and the outcome of the experiment is defined to be $1$. Otherwise, the outcome of the experiment is defined to be $1$.
\end{enumerate}
\end{definition}

\begin{definition}[Quantum CPA Indistinguishability~\cite{broadbent2015quantum}]
    A quantum homomorphic encryption scheme $\mathsf{S}$ is quantum-IND-CPA secure if there is no quantum polynomial-time adversary wins the quantum IND-CPA game with more than negligible advantage. That is, for any quantum polynomial-time adversary $\mathcal{A} = (\mathcal{A}_1, \mathcal{A}_2)$, there exists a negligible function $\negl$ such that
    \begin{align*}
        \Pr[\QINDCPA_{\langle \mathsf{S}, \mathcal{A} \rangle}(1^\lambda) = 1] \le \frac{1}{2} + \negl(\lambda).
    \end{align*}
\end{definition}
Broadbent and Jeffery~\cite{broadbent2015quantum} show that the above security definition is equivalent to security w.r.t.~a quantum IND-CPA game where the adversary sends multiple messages to the challenger, and the challenger either encrypts all of them or encrypts a sequence of all $\ket{0}\bra{0} $ states.

\ifnum\llncs=0
\subsection{Discrete Gaussians}

\begin{definition}[Truncated Discrete Gaussian] Let $B > 0$ be a real number and let $q$ be a positive integer. The truncated discrete Gaussian distribution over $\mathbb{Z}_q$ with parameter $B$ is the distribution supported on $\{x \in \mathbb{Z}_q : \|x\| \le B\}$ with density proportional to
\begin{align*}
    \rho_{\mathbb{Z}_q, B}(x) \propto \exp\left( \frac{-\pi \|x\|^2}{B^2} \right) \;.
\end{align*}
For a positive integer $m$ the truncated discrete Gaussian distribution on $\mathbb{Z}_q^m$ with parameter $B$ is the distribution supported on $\{\vecx \in \mathbb{Z}_q : \|x\| \le B \sqrt{m}\}$ with density
\begin{align*}
     \rho_{\mathbb{Z}^m_q, B}(\vecx) = \rho_{\mathbb{Z}_q, B}(x_1) \cdots \rho_{\mathbb{Z}_q, B}(x_m).
\end{align*}
\end{definition}

\begin{lemma}\label{lemma:hellinger-shifted-discrete-gaussian}
Let $B > 0$ be a real number and let $q, m$ be positive integers. Consider $\vece \in \mathbb{Z}_q^m$ such that $\|\vece\| \le B\sqrt{m}$. The Hellinger distance between the distribution $\rho = \rho_{\mathbb{Z}^m_q, B}$ and the shifted distribution $\rho + \vece$ with density $(\rho + \vece)(\vecx) = \rho(\vecx - \vece)$ satisfies
\begin{align*}
    H^2(\rho, \rho + \vece) \le 1 - \exp \left( \frac{-2 \pi \sqrt{m} \|\vece\|}{B} \right) \le \frac{2 \pi \sqrt{m} \|\vece\|}{B} \;.
\end{align*}
\end{lemma}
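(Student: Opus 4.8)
The plan is to work with the second expression for the squared Hellinger distance given in the definition, so that it suffices to lower bound the overlap sum $\sum_{\vecx}\sqrt{\rho(\vecx)\,\rho(\vecx-\vece)}$ by $\exp(-2\pi\sqrt m\,\|\vece\|/B)$: the claimed bound on $H^2(\rho,\rho+\vece)$ is then immediate, and the final inequality is just $1-e^{-t}\le t$ for $t\ge 0$. The first observation is that the summand vanishes unless $\vecx$ lies in the overlap $T$ of the two truncation balls, i.e.\ unless both $\|\vecx\|\le B\sqrt m$ and $\|\vecx-\vece\|\le B\sqrt m$, since otherwise one of the two densities is $0$. Write $S=\{\vecx:\|\vecx\|\le B\sqrt m\}$ for the support of $\rho$, so that $T=S\cap(S+\vece)$.

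On $T$ I would exploit the Gaussian form directly. Because a translation is a bijection of $\mathbb Z_q^m$, the densities $\rho$ and $\rho+\vece$ share the same normalizing constant, so for $\vecx\in T$ the ratio $\rho(\vecx)/\rho(\vecx-\vece)$ equals $\exp(-\tfrac{\pi}{B^2}(\|\vecx\|^2-\|\vecx-\vece\|^2))=\exp(-\tfrac{\pi}{B^2}(2\langle\vecx,\vece\rangle-\|\vece\|^2))$. Dropping the nonnegative $\|\vece\|^2$ term and bounding $\langle\vecx,\vece\rangle\le\|\vecx\|\,\|\vece\|\le B\sqrt m\,\|\vece\|$ by Cauchy--Schwarz together with $\vecx\in S$, this ratio is at least $\exp(-2\pi\sqrt m\,\|\vece\|/B)$. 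Writing $\sqrt{\rho(\vecx)\rho(\vecx-\vece)}=\rho(\vecx-\vece)\sqrt{\rho(\vecx)/\rho(\vecx-\vece)}$ and summing gives
\[
\sum_{\vecx}\sqrt{\rho(\vecx)\rho(\vecx-\vece)}\;\ge\;\exp\!\left(-\tfrac{\pi\sqrt m\,\|\vece\|}{B}\right)\sum_{\vecx\in T}\rho(\vecx-\vece)\;=\;\exp\!\left(-\tfrac{\pi\sqrt m\,\|\vece\|}{B}\right)\Pr_{\vecy\sim\rho}\!\big[\vecy+\vece\in S\big],
\]
where the equality is the substitution $\vecy=\vecx-\vece$ (which ranges over $S\cap(S-\vece)$).

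The remaining and main obstacle is to show the overlap probability $\Pr_{\vecy\sim\rho}[\vecy+\vece\in S]$ is at least $\exp(-\pi\sqrt m\,\|\vece\|/B)$; multiplying the two factors then yields the desired $\exp(-2\pi\sqrt m\,\|\vece\|/B)$. This step is where the support mismatch between $\rho$ and its shift genuinely costs Hellinger mass, so it cannot be avoided, but the linear (rather than quadratic) dependence on $\|\vece\|$ and the extra $\sqrt m$ in the target leave ample slack. I would bound it by restricting to the sub-ball $\{\vecy:\|\vecy\|\le B\sqrt m-\|\vece\|\}\subseteq S\cap(S-\vece)$ (for which $\vecy+\vece\in S$ by the triangle inequality), reducing the claim to a tail estimate $\Pr_{\vecy\sim\rho}[\|\vecy\|>B\sqrt m-\|\vece\|]\le 1-\exp(-\pi\sqrt m\,\|\vece\|/B)$ for the truncated discrete Gaussian, which follows from standard discrete Gaussian shell/tail bounds; note that in the extreme regime $\|\vece\|\approx B\sqrt m$ the target is already as small as $e^{-2\pi m}$, so only a very weak lower bound on the overlap is needed. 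An equivalent packaging of the same computation is to complete the square, rewriting the summand on $T$ as $\tfrac1Z e^{-\pi\|\vece\|^2/(4B^2)}e^{-\pi\|\vecx-\vece/2\|^2/B^2}$, after which the identical overlap deficit reappears as the gap between the recentered Gaussian sum over $T$ and the normalizer $Z$.
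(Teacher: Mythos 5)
The paper itself states this lemma without proof (it is imported from prior work on trapdoor claw-free functions), so there is no in-paper argument to compare against; judging your sketch on its own, the first half is correct and the second half has a genuine gap. The reduction to lower-bounding $\sum_{\vecx}\sqrt{\rho(\vecx)\rho(\vecx-\vece)}$, the observation that the summand is supported on $T=S\cap(S+\vece)$, the ratio computation $\rho(\vecx)/\rho(\vecx-\vece)=\exp(-\tfrac{\pi}{B^2}(2\langle\vecx,\vece\rangle-\|\vece\|^2))$, and the Cauchy--Schwarz bound $\sqrt{\rho(\vecx)/\rho(\vecx-\vece)}\ge\exp(-\pi\sqrt m\,\|\vece\|/B)$ on $T$ are all fine, and they correctly reduce the lemma to the overlap estimate $\Pr_{\vecy\sim\rho}[\vecy+\vece\in S]\ge\exp(-\pi\sqrt m\,\|\vece\|/B)$. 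You are also right that this overlap deficit is where the real content of the lemma lies.

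The gap is that your proposed route to the overlap estimate does not work. Restricting to the sub-ball $\{\vecy:\|\vecy\|\le B\sqrt m-\|\vece\|\}$ is provably too lossy over the full range $\|\vece\|\le B\sqrt m$ allowed by the hypothesis: as $\|\vece\|$ approaches $B\sqrt m$ the sub-ball degenerates to $\{0\}$, whose mass is $1/Z$ with $Z=\sum_{\vecx\in S}e^{-\pi\|\vecx\|^2/B^2}\ge(cB)^m$ for an absolute constant $c>0$, so the sub-ball carries only about $B^{-m}$ of the mass, far below the target $e^{-\pi m}$ once $B$ exceeds a small constant (and in the cryptographic setting $B=2^{\Theta(\lambda)}$). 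The true overlap $S\cap(S-\vece)$ does still carry $e^{-O(m)}$ mass in that regime, but it sits in a lens centered near $-\vece/2$, far from the origin, which the sub-ball misses entirely. Even in the regime $\|\vece\|\ll B$ where the sub-ball is essentially all of $S$, the estimate you invoke, namely that $\Pr_{\vecy\sim\rho}[\|\vecy\|>B\sqrt m-\|\vece\|]$ is at most roughly $\sqrt m\,\|\vece\|/B$, is a bound on the mass of a thin shell at the truncation boundary of a truncated discrete Gaussian; it is true, but it is not an off-the-shelf fact and it is precisely the quantity the lemma is about, so it cannot be deferred to ``standard tail bounds.'' One way to close the gap cleanly: the paper's $\rho_{\mathbb{Z}_q^m,B}$ is defined as a product of one-dimensional truncated Gaussians, and $1-H^2$ (the Bhattacharyya coefficient) is multiplicative over product distributions, so it suffices to prove the one-dimensional statement $1-H^2(\rho_{\mathbb{Z}_q,B},\rho_{\mathbb{Z}_q,B}+c)\ge e^{-2\pi|c|/B}$ and then use $\sum_i|e_i|\le\sqrt m\,\|\vece\|$; in one dimension the boundary shell contains at most $O(|c|)$ integer points, each of relative mass $O(1/B)$, which is an elementary count.
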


\subsection{Learning With Errors}
\begin{definition}[The $\LWE_{m, n, q, \chi}$ problem] Let $\lambda$ be the security parameter, let $m, n, q$ be integer functions of $\lambda$. Let $\chi = \chi(\lambda)$ be a distribution over $\mathbb{Z}$. The $\LWE_{m, n, q, \chi}$ problem is to distinguish between the distributions $(\matA, \matA \vecs + \vece \pmod{q})$ and $(\matA, \vecu)$, where $\matA \leftarrow \mathbb{Z}^{m \times n}_q$, $\vecs \leftarrow \mathbb{Z}_q^n$, $\vece \leftarrow \chi^m$ and $\vecu \leftarrow \mathbb{Z}_q^m$.

The problem where $m$ can be an arbitrary polynomially bounded function of $n \log q$ is written as $\LWE_{n, q, \chi}$.
\end{definition}

\begin{definition}[The $\LWE_{n, q, \chi}$ Assumption] There is no quantum polynomial-time proceudre that solves the $\LWE_{n, q, \chi}$ problem with advantage more than negligible in $\lambda$.
\end{definition}

There is evidence to believe that the $\LWE_{n, q, \rho_{\mathbb{Z}_q, \sigma}}$ assumption is true for certain parameter regimes, where $\sigma = \alpha q \ge 2\sqrt{n}$, and $\alpha$ is the noise-to-modulus ratio. In particular, \cite{regev2009lattices} and \cite{peikert2017pseudorandomness} show that a quantum polynomial-time algorithm for $\LWE_{n, q, \rho_{\mathbb{Z}_q, \sigma}}$ would give a quantum polynomial-time algorithm for SIVP (shortest independent vectors problem) in for $n$-dimensional lattices within an approximation factor of $\gamma = \tilde{O}(n/\alpha)$. The best known (classical or quantum) algorithms for SIVP run in time $2^{\tilde{O}(n/\log \gamma)}$, which is efficient for $\gamma = 2^{\tilde{O}(n/\log n)}$. If SIVP is hard for a (smaller) superpolynomial approximation factor $\gamma$, $\LWE$ is hard for a corresponding inverse superpolynomial noise-to-modulus ratio.

\begin{theorem}[Theorem~5.1 in \cite{micciancio2012trapdoors}]\label{theorem:mp12}
Let $n, m \ge 1$ and $q \ge 2$ be such that $m = \Omega(n \log q)$. There is an efficient randomized algorithm $\mathsf{GenTrap}(1^n, t^m, q)$ that returns a matrix $\matA \in \mathbb{Z}^{m \times n}_q$ and a trapdoor $t_\matA$ such that the distribution of $\matA$ is negligibly (in $n$) close to the uniform distribution. Moreover, there is an efficient algorithm $\mathsf{Invert}$ that, on input $\matA, t_\matA$ and $\matA \vecs + \vece$ where $\|\vece\|_2 \le q/(C \sqrt{n \log q})$ and $C$ is a universal constant, returns $\vecs$ and $\vece$ with overwhelming probability over $(\matA, t_\matA) \leftarrow \mathsf{GenTrap}(1^n, t^m, q)$.
\end{theorem}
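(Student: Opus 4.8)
The plan is to follow the gadget-based trapdoor construction of Micciancio and Peikert. The crux is a fixed \emph{gadget matrix} $\mathbf{G} \in \mathbb{Z}_q^{w \times n}$ for which the inversion problem is trivially solvable, together with a way to reduce inversion for a (pseudo)random $\matA$ to inversion for $\mathbf{G}$. Concretely, set $k = \lceil \log q \rceil$, $w = nk$, and let $\mathbf{G} = \mathbf{I}_n \otimes \mathbf{g}$ where $\mathbf{g} = (1, 2, 4, \ldots, 2^{k-1})^\top$. Given $\mathbf{G}\vecs + \vece'$ with $\|\vece'\| < q/4$, one recovers $\vecs$ coordinate-by-coordinate by a simple rounding / nearest-plane procedure, so gadget inversion succeeds whenever the perturbation is short. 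The goal is then to endow a near-uniform $\matA$ with a short trapdoor that transports a general inversion instance into a gadget instance with only a mild blowup in the error.

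For the construction, sample $\bar{\matA} \in \mathbb{Z}_q^{\bar m \times n}$ uniformly and a short matrix $\mathbf{R} \in \mathbb{Z}^{w \times \bar m}$ with i.i.d.\ subgaussian (e.g.\ discrete Gaussian or $\{0,\pm 1\}$) entries, and output
\begin{align*}
    \matA = \begin{pmatrix} \bar{\matA} \\ \mathbf{G} - \mathbf{R}\bar{\matA} \end{pmatrix} \in \mathbb{Z}_q^{m \times n}, \qquad m = \bar m + w,
\end{align*}
with trapdoor $t_\matA = \mathbf{R}$. By construction $[\,\mathbf{R} \mid \mathbf{I}_w\,]\,\matA = \mathbf{R}\bar{\matA} + \mathbf{G} - \mathbf{R}\bar{\matA} = \mathbf{G}$, which is the trapdoor relation I will exploit. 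To argue that $\matA$ is statistically close to uniform, I would invoke the leftover hash lemma / regularity: the top block $\bar{\matA}$ is uniform, and conditioned on it the bottom block is $\mathbf{G} - \mathbf{R}\bar{\matA}$, so it suffices to show $\mathbf{R}\bar{\matA}$ is close to uniform and (nearly) independent of $\bar{\matA}$. This holds once $\bar m$ is large enough that $\mathbf{R}$ carries at least $n\log q + \omega(\log n)$ bits of min-entropy, which is exactly what forces $m = \bar m + w = \Omega(n \log q)$.

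For inversion, $\mathsf{Invert}(\matA, \mathbf{R}, \vecy)$ with $\vecy = \matA\vecs + \vece$ first computes $[\,\mathbf{R} \mid \mathbf{I}_w\,]\,\vecy = \mathbf{G}\vecs + [\,\mathbf{R} \mid \mathbf{I}_w\,]\,\vece$, then runs gadget inversion on the right-hand side to recover $\vecs$, and finally sets $\vece = \vecy - \matA\vecs$. Correctness reduces to bounding the transported error: writing $\vece = (\vece_1, \vece_2)$ we have $\|[\,\mathbf{R} \mid \mathbf{I}_w\,]\,\vece\|_2 \le (\|\mathbf{R}\| + 1)\,\|\vece\|_2$, where $\|\mathbf{R}\|$ is the operator norm. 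The main obstacle — and the reason for the precise hypothesis $\|\vece\|_2 \le q/(C\sqrt{n\log q})$ — is the concentration bound $\|\mathbf{R}\| = O(\sqrt{\bar m} + \sqrt{w}) = O(\sqrt{n\log q})$, which holds with overwhelming probability for subgaussian $\mathbf{R}$. Combining, $\|[\,\mathbf{R} \mid \mathbf{I}_w\,]\,\vece\|_2 = O(\sqrt{n\log q}) \cdot q/(C\sqrt{n\log q}) < q/4$ for a suitable universal constant $C$, so the perturbation is small enough for gadget inversion, and the overall failure probability is negligible over the choice of $(\matA, t_\matA)$. The two quantitative heavy-lifting steps are therefore the regularity argument (for near-uniformity, fixing the $\Omega(n\log q)$ bound on $m$) and the subgaussian operator-norm tail bound (for inversion, fixing the constant $C$).
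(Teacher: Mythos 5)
The paper does not prove this statement: it is imported verbatim as Theorem~5.1 of Micciancio--Peikert \cite{micciancio2012trapdoors}, so there is no in-paper argument to compare against. Your sketch correctly reconstructs the original MP12 proof --- the gadget matrix $\mathbf{G}$, the construction $\matA = \bigl(\begin{smallmatrix}\bar{\matA}\\ \mathbf{G}-\mathbf{R}\bar{\matA}\end{smallmatrix}\bigr)$ with trapdoor relation $[\,\mathbf{R}\mid\mathbf{I}\,]\matA=\mathbf{G}$, the leftover-hash-lemma argument for near-uniformity (which should be applied row-by-row of $\mathbf{R}$, with $\bar m = \Omega(n\log q)$ so each row has enough min-entropy), and the subgaussian operator-norm bound $\|\mathbf{R}\|=O(\sqrt{n\log q})$ that justifies the hypothesis $\|\vece\|_2 \le q/(C\sqrt{n\log q})$ --- so all the essential ideas are in place and correct.
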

\fi

\section{Dual-mode Trapdoor Functions}\label{section:dTFs}
A central primitive we make use of in our QFHE scheme is what we will call a dual-mode trapdoor function (dTF) family (and a variant, $4$-to-$2$ dTFs, defined in Section~\ref{section:4to2dTFs}). This primitive is a slight variant of noisy trapdoor claw-free function families (NTCFs) and \textit{extended} NTCFs, which find several applications in quantum cryptography, starting with the breakthrough schemes for QFHE and classical verification of BQP computations of Mahadev~\cite{mahadev2020classical, mahadev2018classical}.\footnote{Our dTFs are a weakening of Mahadev's extended NTCFs in two ways: we do not require the adaptive hardcore bit property, and we do not need the functions to be injective. Relaxing the definition to allow many-to-one functions allows us to obtain dTFs with good correctness properties from group actions, as well as dTFs from weaker LWE assumption based on \cite{DBLP:conf/crypto/Brakerski18}.}

Informally, a dTF family is a collection of pairs of functions $f_0, f_1$ that have either disjoint images (disjoint mode $\mu = 0$) or they the same image (lossy mode $\mu = 1$). We note that in reality the lossy mode property is slightly more technical to accommodate non-injective functions. %
Further, given the evaluation keys for $f_0, f_1$, it should be computationally hard to tell in which mode it is. The keys are sampled along with trapdoors that allow efficient inversion, and therefore also distinguish the two modes. See Figure~\ref{figure:dTF} for intuition.

Section~\ref{section:instantiations} describes constructions of dTFs from LWE and group actions. We note that none of these constructions give us ``clean'' dTFs; each assumption gives rise to a different notion of noise/error that deviates from the natural ideal definition. To capture both these different notions of noise, we introduce a probability distribution over the domain $\mathcal{X}$, and an correctness-error parameter $\epsilon$ in our definition. Although the construction of dTFs from group actions gives us inverse-polynomial correctness error at first, we show an amplification lemma in Section~\ref{section:dTF-amplification} that allows us to obtain negligible correctness error.


\begin{figure}[t]
    \centering
\begin{subfigure}{0.30\textwidth}
    \begin{tikzpicture}
        \draw (-1, 0) ellipse (0.5 and 1) node[above](X1){} node[below](X2){};
        \node at (-1,1.3){$\mathcal{X}$};
        \draw (2, 0) ellipse (1 and 2);
        \node at (2, 2.3){$\mathcal{Y}$};
        \draw (2, 0.7) circle (0.5) node[left](Y1){};
        \draw (2, -0.7) circle (0.5) node[left](Y2){};
        \draw[->, bend left=20] (X1) to node[midway, above] {$f_{0}$} (Y1);
        \draw[->, bend left=-10] (X2) to node[midway, below] {$f_{1}$} (Y2);
    \end{tikzpicture}
    \subcaption{$\mu = 0$}
\end{subfigure}\hspace{0.2\textwidth}
\begin{subfigure}{0.30\textwidth}
    \begin{tikzpicture}
        \draw (-1, 0) ellipse (0.5 and 1) node[above](X1){} node[below](X2){};
        \node at (-1,1.3){$\mathcal{X}$};
        \draw (2, 0) ellipse (1 and 2);
        \node at (2, 2.3){$\mathcal{Y}$};å
        \draw (2, 0) ellipse (0.5 and 0.7) node[above](Y1){} node[below](Y2){};
        \draw[->, bend left=20] (X1) to node[midway, above] {$f_{0}$} (Y1);
        \draw[->, bend left=-20] (X2) to node[midway, below] {$f_{1}$} (Y2);
    \end{tikzpicture}
    \subcaption{$\mu = 1$}
\end{subfigure}
\caption{Dual-mode trapdoor functions (dTFs): In mode $\mu = 0$, the functions $f_{0}$ and $f_{1}$ have disjoint images, and in mode $\mu = 1$ they have the same image. The two modes are computationally indistinguishable.\label{figure:dTF}}
\end{figure}
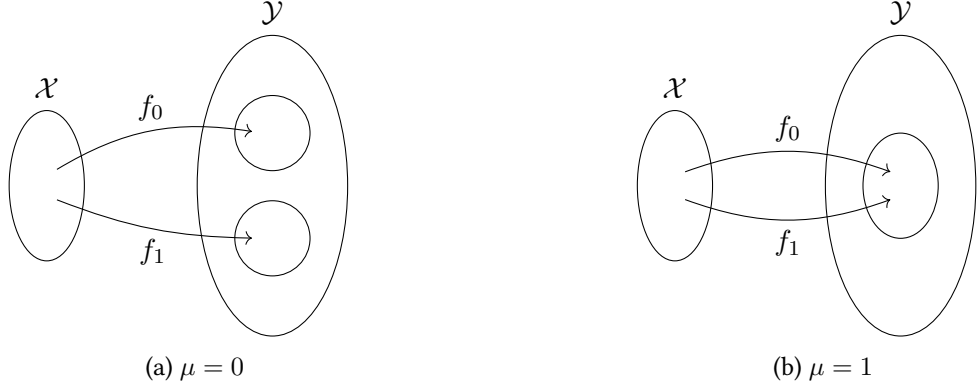

\begin{definition}[$\epsilon$-Weak Dual-Mode Trapdoor Function Family]\label{definition:weak-lossy-function}
Let $\lambda$ be the security parameter, and let $\mathcal{X}$, $\mathcal{Y}$ be finite sets.\footnote{We assume that there is an efficient binary encoding that maps elements $x \in \mathcal{X}$ to binary strings of length $t$, say. For simplicity, we do not use additional notation to distinguish between a set element $x\in \mathcal{X}$ and its binary representation. We write $\bar{\mathcal{X}}$ to denote $\{0,1\}^t$, and when we write $d \cdot x$ for $d \in \bar{\mathcal{X}}$, we think of $x$ as its binary encoding, and the dot product as being over $\mathbb{F}_2$.} Let $\mathcal{K}$ be a finite set of keys. A family of functions
\begin{align*}
    \mathcal{F} = \{f_{k, b} : \mathcal{X} \rightarrow \mathcal{Y}\}_{k \in \mathcal{K}, b \in \{0,1\}}
\end{align*}
is called a \textbf{$\epsilon$-weak dual-mode trapdoor function family} if there exists an associated a family of probability distributions $\mathcal{D} = \{ D_{k, b} \}_{k \in \mathcal{K}, b \in \{0,1\}}$ over $\mathcal{X}$ such that the following conditions hold.
\begin{enumerate}
    \item{\textbf{Efficient key generation.}} There is an efficient probabilistic algorithm $\mathsf{Gen}_\mathcal{F}$ which takes as input the security parameter $1^\lambda$ and a ``mode'' bit $\mu \in\{0,1\}$, and generates a key $k(\mu) \in \mathcal{K}$ together with a trapdoor $t_k$,
    \begin{align*}
        (k(\mu), t_k) \leftarrow \mathsf{Gen}_{\mathcal{F}}(1^\secp, \mu).
    \end{align*}
    \item{\textbf{Efficient function evaluation.}} There is an efficient deterministic algorithm that takes as input any $k\in\mathcal{K}$, $b \in \{0,1\}$, and $x\in\mathcal{X}$ and computes $f_{k, b}(x)$.
    \item{\textbf{Efficient state preparation.}} There is a quantum polynomial-time algorithm that prepares, for every $k\in \mathcal{K}, b \in \{0,1\}$, a state negligibly close in trace distance to the following state:
    \begin{align*}
        \sum_{x \in \mathcal{X}} \sqrt{D_{k, b}(x)} \ket{x}.
    \end{align*}
    \item{\textbf{Efficient Partial Inversion and Phase Computation with the Trapdoor.}}\label{property:efficient-phase-computation} There exists an efficient deterministic algorithm that takes as input $t_k, y$ and outputs the set 
    \begin{align*}
        \{b \in \{0,1\} : \exists x \in \mathcal{X} \text{ s.t. } f_{k, b}(x) = y\}.
    \end{align*}
    
    Additionally, there exists another efficient deterministic algorithm that takes as input $y \in \mathcal{Y}$, $d\in \bar{\mathcal{X}}$ and $b \in \{0,1\}$, along with trapdoor $t_k$, and computes (up to a common normalization factor) $\alpha_{k, y, d}(b)$, defined as
    \begin{align*}
        \alpha_{k, y, d}(b) := \sum_{x \in \mathcal{X}} (-1)^{d \cdot x} \sqrt{D_{k, b}(x)}\;.
    \end{align*}
    When $k$, $y$ and $d$ are clear from the context, we will sometimes abbreviate $\alpha_{k, y, d}(b)$ as $\alpha_b$.
    \item{\textbf{Dual-mode.}} Intuitively, we want the images of functions $f_{k, 0}$ and $f_{k, 1}$ to be either disjoint ($\mu=0$) or nearly identical ($\mu = 1$), depending on the mode $\mu$. More precisely, for any $\mu \in \{0,1\}$, for all but a negligible fraction of keys $(k(\mu), \cdot) \leftarrow \mathsf{Gen}_{\mathcal{F}}(1^\lambda, \mu)$,
    \begin{enumerate}
        \item\label{item:disjoint-mode} Disjoint mode ($\mu = 0$): For $x \leftarrow D_{k, 0}$, $x' \leftarrow D_{k, 1}$ we have that $f_{k, 0}(x) \neq f_{k, 1}(x')$ with probability $1$.
        \item\label{item:lossy-mode} Lossy mode ($\mu = 1$): 
        for $b\leftarrow \{0,1\}$, $x\leftarrow D_{k, b}$, $y = f_{k, b}(x)$, with probability $1-\epsilon$ there exists some $s \in \{0,1\}$ such that $\alpha_{k, y, d}(0) = (-1)^s \cdot \alpha_{k, y, d}(1)$ for all $d \in \bar{\mathcal{X}}$.
    \end{enumerate}
    \item{\textbf{Mode Indistinguishability.}} Given a key $k(\mu)$ generated with $\mathsf{Gen}(1^\lambda, \mu)$, no efficient algorithm can guess $\mu$ with greater than negligible advantage. Formally, for all quantum polynomial-time procedures $\mathcal{A}$, there exists a negligible function $\negl(\cdot)$ such that
    \begin{align*}
        \left| \Pr_{(k, t_k) \leftarrow \mathsf{Gen}_{\mathcal{F}}(1^\secp, 0)}[\mathcal{A}(k)] - \Pr_{(k, t_k) \leftarrow \mathsf{Gen}_{\mathcal{F}}(1^\secp, 1)}[\mathcal{A}(k)] \right| \le \negl(\lambda).
    \end{align*}
\end{enumerate}
\end{definition}

\begin{remark}
    If $\epsilon$ is negligible in the security parameter, we simply say the function family is a dual-mode trapdoor function family.
\end{remark}
For our amplification lemma (Section~\ref{section:dTF-amplification}), it will be convenient for us to define a stronger notion of dTFs, where the functions are injective and efficiently invertible with a trapdoor.

\begin{definition}\label{definition:injective-and-invertible}
We additionally call an $\epsilon$-weak dTF family $\mathcal{F}$ \textbf{injective and invertible} if the following properties hold respectively.
\begin{enumerate}
    \item For all $k \in \mathcal{K}$, $b \in \{0,1\}$, $f_{k, b}$ is injective on the support of $D_{k, b}$. That is, for all $x \neq x' \in \support(D_{k, b})$, we have that $f_{k,b}(x) \neq f_{k,b}(x')$.
    \item There exists an efficient deterministic algorithm $\mathsf{Inv}_{\mathcal{F}}$ such that for all $y \in \mathcal{Y}$ and key-trapdoor pairs $(k, t_k)$ generated by $\mathsf{Gen}_{\mathcal{F}}$,
    \begin{align*}
        \{(b, x) \mid b \in \{0,1\}, x \in \mathcal{X}, y = f_{k, b}(x)\} \leftarrow \mathsf{Inv}_{\mathcal{F}}(t_k, y).
    \end{align*}
    \item For every $b, b' \in \{0,1\}$ and $x, x'\in \mathcal{X}$, $f_{k, b}(x) = f_{k, b'}(x')$ implies that $D_{k, b}(x) = D_{k, b'}(x')$.
\end{enumerate}
\end{definition}

\subsection{$4$-to-$2$ dTFs and their construction from dTFs}\label{section:4to2dTFs}
In this subsection, we define $4$-to-$2$ dTFs and give a simple transformation that takes a dTF family and outputs a $4$-to-$2$ dTF family. We define a $4$-to-$2$ dual-mode trapdoor function to be a family of $4$-tuples of functions $f_{b_1, b_2}$ indexed by $b_1, b_2 \in \{0,1\}$. In mode $\mu = 0$, functions $f_{0,0}$ and $f_{0,1}$ have the same image, as do functions $f_{1,0}$ and $f_{1,1}$. Further, these two images are disjoint. Symmetrically, in mode $\mu = 1$, functions $f_{0,0}$ and $f_{1,0}$ share an image, as do functions $f_{0, 1}$ and $f_{1,1}$. The intuition for this primitive is best conveyed through a picture (see Figure~\ref{figure:4-2-dTF}).\footnote{We remark that there is a natural way to generalize dTFs and $4$-to-$2$ dTFs so that they are specializations of the same primitive. In short, we can define a $k$-mode trapdoor function family to be a collection of tuples of injective functions $(f_i)_{i \in \mathcal{I}}$ with finite index set $\mathcal{I}$. The two modes are defined by two distinct partition functions $p_0, p_1:\mathcal{I}\rightarrow \{0,1\}$ of $\mathcal{I}$. In mode $\mu \in [k]$ for some $k\in\mathbb{N}$, two functions $f_i, f_{i'}$ have the same image if $p_\mu(i) = p_{\mu}(i')$; otherwise $f_i, f_{i'}$ have disjoint images. We conjecture that this generalized notion could be useful in other applications. %
However, we present separate definitions of both dTFs and $4$-to-$2$ dTFs for the sake of clarity, at the cost of some redundancy.}

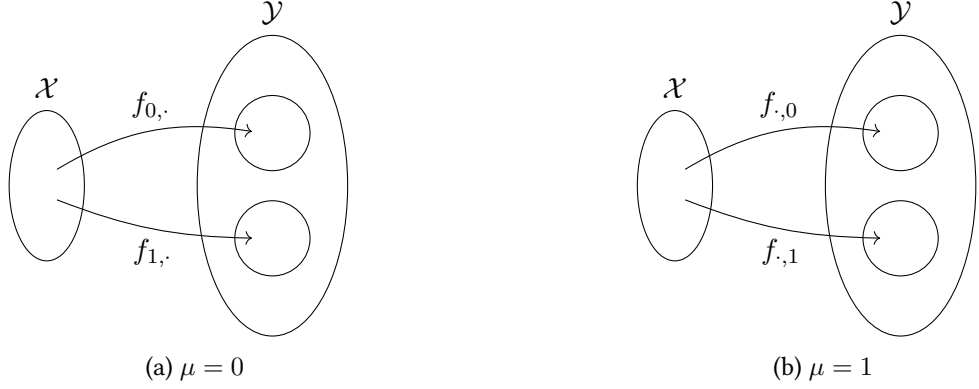
\begin{figure}[t]
    \centering
\begin{subfigure}{0.30\textwidth}
    \begin{tikzpicture}
        \draw (-1, 0) ellipse (0.5 and 1) node[above](X1){} node[below](X2){};
        \node at (-1,1.3){$\mathcal{X}$};
        \draw (2, 0) ellipse (1 and 2);
        \node at (2, 2.3){$\mathcal{Y}$};
        \draw (2, 0.7) circle (0.5) node[left](Y1){};
        \draw (2, -0.7) circle (0.5) node[left](Y2){};
        \draw[->, bend left=20] (X1) to node[midway, above] {$f_{0, \cdot}$} (Y1);
        \draw[->, bend left=-10] (X2) to node[midway, below] {$f_{1, \cdot}$} (Y2);
    \end{tikzpicture}
    \subcaption{$\mu = 0$}
\end{subfigure}\hspace{0.2\textwidth}
\begin{subfigure}{0.30\textwidth}
    \begin{tikzpicture}
        \draw (-1, 0) ellipse (0.5 and 1) node[above](X1){} node[below](X2){};
        \node at (-1,1.3){$\mathcal{X}$};
        \draw (2, 0) ellipse (1 and 2);
        \node at (2, 2.3){$\mathcal{Y}$};
        \draw (2, 0.7) circle (0.5) node[left](Y1){};
        \draw (2, -0.7) circle (0.5) node[left](Y2){};
        \draw[->, bend left=20] (X1) to node[midway, above] {$f_{\cdot, 0}$} (Y1);
        \draw[->, bend left=-10] (X2) to node[midway, below] {$f_{\cdot, 1}$} (Y2);
    \end{tikzpicture}
    \subcaption{$\mu = 1$}
\end{subfigure}
\caption{$4$-to-$2$ Dual-mode trapdoor functions ($4$-to-$2$ dTFs): In mode $\mu = 0$, the functions $f_{b_1, b_2}, f_{b'_1, b'_2}$ have the same image if and only if $b_1 = b'_1$, and otherwise they have disjoint images. In mode $\mu =1$, the functions $f_{b_1, b_2}, f_{b'_1, b'_2}$ have the same image if and only $b_2 = b'_2$ and otherwise they have disjoint images. The two modes are computationally indistinguishable.\label{figure:4-2-dTF}}
\end{figure}

\begin{definition}[$\epsilon$-weak $4$-to-$2$ dual-mode trapdoor functions]
Let $\lambda$ be the security parameter, and let $\mathcal{X}$, $\mathcal{Y}$ be finite sets. Let $\mathcal{K}$ be a finite set of keys. A family of functions
\begin{align*}
    \mathcal{F} = \{f_{k, b_1, b_2} : \mathcal{X} \rightarrow \mathcal{Y}\}_{k \in \mathcal{K}, b_1, b_2 \in \{0,1\}}
\end{align*}
is a \textbf{$\epsilon$-weak $4$-to-$2$ dual-mode trapdoor function family} if there exists a family of probability distributions $\mathcal{D} = \{D_{k, b_1, b_2}\}_{k \in \mathcal{K}, b_1, b_2 \in\{0,1\}}$ over $\mathcal{X}$ such that the following conditions hold.
\begin{enumerate}
    \item{\textbf{Efficient key generation.}} There is an efficient probabilistic algorithm $\mathsf{Gen}_\mathcal{F}$ which takes as input the security parameter $1^\lambda$ and a ``mode'' bit $\mu \in\{0,1\}$, and generates a key $k(\mu) \in \mathcal{K}$ together with a trapdoor $t_k$,
    \begin{align*}
        (k(\mu), t_k) \leftarrow \mathsf{Gen}_{\mathcal{F}}(1^\secp, \mu).
    \end{align*}
    \item{\textbf{Efficient function evaluation.}} There is an efficient deterministic algorithm that takes as input $k\in\mathcal{X}$, $b_1, b_2\in\{0,1\}$, $x\in\mathcal{X}$ and computes $f_{k, b_1, b_2}(x)$.
    \item{\textbf{Efficient state preparation.}} There is a quantum polynomial-time algorithm that prepares for every $k \in \mathcal{K}$, $b_1, b_2\in\{0,1\}$, a state that is negligibly close in trace distance to the state $\sum_{x \in \mathcal{X}} \sqrt{D_{k,b_1,b_2}(x)} \ket{x}$.
    \item{\textbf{Efficient partial inversion and phase computation with the trapdoor.}} There exists an efficient deterministic algorithm that takes as input $t_k, y$ and outputs the set 
    \begin{align*}
        \left\{(b_1, b_2) \in \{0,1\}^2 : \exists x \in \mathcal{X} \text{ s.t. } f_{k, b_1, b_2}(x) = y\right\}.
    \end{align*}
    Additionally, there exists another efficient deterministic algorithm that takes as input $y\in\mathcal{Y}$, $d\in\bar{\mathcal{X}}$ and $b_1, b_2 \in \{0,1\}$, along with trapdoor $t_k$, and computes (up to a common normalization factor) $\alpha_{k,y, d}(b_1, b_2)$, defined as
    \begin{align*}
        \alpha_{k, y, d}(b_1, b_2) := \sum_{x \in \mathcal{X}} (-1)^{d \cdot x} \sqrt{D_{k, b_1, b_2}(x)}\;.
    \end{align*}
    When $k$, $y$ and $d$ are clear from the context, we will sometimes abbreviate $\alpha_{k, y, d}(b_1, b_2)$ as $\alpha_{b_1, b_2}$.
    \item{\textbf{Dual-mode.}} Intuitively, we want the functions $f_{k, b_1, b_2}$ and $f_{k, b'_1, b'_2}$ to have identical or disjoint images, depending on whether $b_{\mu+1} = b'_{\mu+1}$, in mode $\mu$. More precisely, for all but a negligible fraction of keys $(k(\mu), \cdot) \leftarrow \mathsf{Gen}_{\mathcal{F}}(1^\lambda, \mu)$,
    \begin{enumerate}
        \item Case 1 ($b_{\mu+1} \neq b'_{\mu+1}$): For $x\leftarrow D_{k, b_1, b_2}$, $x'\leftarrow D_{k, b'_1, b'_2}$, we have that $f_{k, b_1, b_2}(x) \neq f_{k, b'_1, b'_2}(x')$ with probability $1$.
        \item Case 2 ($b_{\mu+1} = b'_{\mu+1}$): For $\tilde{b}_1, \tilde{b}_2 \leftarrow \{0,1\}$, $x \leftarrow D_{k, \tilde{b}_1, \tilde{b}_2}$, $y = f_{k, \tilde{b}_1, \tilde{b}_2}(x)$, for all $d \in \bar{\mathcal{X}}$, with probability at least $1-\epsilon$, it must be the case that there exists some $s \in \{0,1\}$ such that $\alpha_{b_1, b_2} = (-1)^s \cdot \alpha_{b'_1, b'_2}$.
    \end{enumerate}
    \item{\textbf{Mode Indistinguishability.}} Given a key $k(\mu)$ generated with $\mathsf{Gen}(1^\lambda, \mu)$, no efficient algorithm can guess $\mu$ with greater than negligible advantage.
\end{enumerate}
\end{definition}

\begin{lemma}\label{lemma:dTF-to-4-2-dTF}
Suppose there is an $\epsilon$-weak dual-mode trapdoor function family $\mathcal{F}$. Then, there exists an $\epsilon$-weak $4$-to-$2$ dual-mode trapdoor function family $\mathcal{G}$.
\end{lemma}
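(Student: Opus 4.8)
The plan is to build $\mathcal{G}$ as a ``product of two independent copies'' of $\mathcal{F}$, with the two copies placed in complementary modes. Concretely, I set $\mathcal{X}' = \mathcal{X} \times \mathcal{X}$ and $\mathcal{Y}' = \mathcal{Y} \times \mathcal{Y}$, take a key of $\mathcal{G}$ to be a pair $K = (k_1, k_2)$, and define
$$ g_{K, b_1, b_2}(x_1, x_2) := \bigl(f_{k_1, b_1}(x_1),\, f_{k_2, b_2}(x_2)\bigr), \qquad D_{K, b_1, b_2} := D_{k_1, b_1} \times D_{k_2, b_2}. $$
The generator $\mathsf{Gen}_{\mathcal{G}}(1^\lambda, \mu)$ calls $\mathsf{Gen}_{\mathcal{F}}$ twice: for $\mu = 0$ it draws $k_1$ in disjoint mode ($\mu_1 = 0$) and $k_2$ in lossy mode ($\mu_2 = 1$), and for $\mu = 1$ it swaps these, storing both trapdoors in $t_K = (t_{k_1}, t_{k_2})$. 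The guiding intuition is that a coordinate whose underlying copy is in disjoint mode ``tracks'' its bit (the images of its two branches are separated), while a coordinate in lossy mode ``collapses'' (the images of its two branches coincide); hence in mode $\mu$ the images of $g_{K,b_1,b_2}$ and $g_{K,b_1',b_2'}$ agree exactly when $b_{\mu+1} = b_{\mu+1}'$, which is precisely the $4$-to-$2$ requirement.

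First I would dispatch the three ``syntactic'' properties. Efficient generation, evaluation, and state preparation are immediate; the last holds because $D_{K,b_1,b_2}$ is a product distribution, so the required superposition is the tensor product of the two single-copy superpositions guaranteed by $\mathcal{F}$. For partial inversion, running the $\mathcal{F}$ inverter on each coordinate of $y = (y_1, y_2)$ yields the sets $B_i = \{b : \exists x,\, f_{k_i,b}(x) = y_i\}$, and the desired preimage-bit set is simply $B_1 \times B_2$. The crucial observation for the phases is that the binary encoding of $(x_1, x_2)$ is the concatenation of encodings, so writing $d = (d_1, d_2)$ the exponent factors as $d \cdot (x_1, x_2) = d_1 \cdot x_1 + d_2 \cdot x_2$, and therefore
$$ \alpha_{K, y, d}(b_1, b_2) = \alpha_{k_1, y_1, d_1}(b_1)\cdot \alpha_{k_2, y_2, d_2}(b_2). $$
Each factor is computable by $\mathcal{F}$ up to a normalization common across its own bit, so the product is computable up to a single normalization common across $(b_1, b_2)$, as required.

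The substance of the proof, and the step I expect to be the main obstacle, is verifying the dual-mode property, especially Case~$2$. Take mode $\mu = 0$, so $k_1$ is disjoint and $k_2$ is lossy. In Case~$1$ ($b_1 \neq b_1'$) disjointness of the $k_1$-branch images forces the first coordinates of $g_{K,b_1,b_2}(x)$ and $g_{K,b_1',b_2'}(x')$ to differ with probability $1$, a direct invocation of the disjoint-mode clause of $\mathcal{F}$ on $k_1$. In Case~$2$ ($b_1 = b_1'$) the factorization gives $\alpha_{K,y,d}(b_1,b_2)/\alpha_{K,y,d}(b_1,b_2') = \alpha_{k_2,y_2,d_2}(b_2)/\alpha_{k_2,y_2,d_2}(b_2')$, so I must transfer the sign relation from the lossy copy $k_2$. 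The care needed here is threefold: the marginal law of $(\tilde b_2, x_2, y_2)$ induced by sampling $(\tilde b_1, \tilde b_2)$ and $x \leftarrow D_{K,\tilde b_1,\tilde b_2}$ is exactly the law appearing in the lossy clause of $\mathcal{F}$ for $k_2$, so the bad event has probability at most $\epsilon$; the single sign $s_2$ produced by that clause works for all $d_2$, hence the derived sign $s = s_2 \cdot (b_2 \oplus b_2')$ works for all $d$; and since the event depends only on $y_2$, the $\epsilon$ does not accumulate across the pair. Mode $\mu = 1$ is symmetric with the roles of the two coordinates interchanged. Finally I would establish mode indistinguishability by a two-step hybrid, switching $k_1$ from disjoint to lossy (reduction to $\mathcal{F}$ on $k_1$, with the reduction sampling $k_2$ itself) and then $k_2$ from lossy to disjoint (reduction on $k_2$), thereby moving $(k_1(0), k_2(1))$ to $(k_1(1), k_2(0))$; the ``all but a negligible fraction of keys'' guarantees for $K$ follow from those for $k_1$ and $k_2$ by a union bound.
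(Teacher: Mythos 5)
Your proposal is correct and follows essentially the same route as the paper: the same two-copy product construction with the two keys in complementary modes, the same tensor-product distributions and factorization $\alpha_{K,y,d}(b_1,b_2)=\alpha_{k_1,y_1,d_1}(b_1)\cdot\alpha_{k_2,y_2,d_2}(b_2)$ driving both phase computation and the Case-2 dual-mode argument, and the same two-step hybrid for mode indistinguishability. Your extra care about the marginal law of $(\tilde b_2,x_2,y_2)$ and the non-accumulation of $\epsilon$ is a correct (and slightly more explicit) rendering of a point the paper treats implicitly.
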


\begin{proof}
Suppose $\mathcal{F} = \{f_{k, b} : \mathcal{X} \rightarrow \mathcal{Y}\}_{k \in \mathcal{K}_{\mathcal{F}}, b \in \{0,1\}}$ is a $\epsilon$-weak dual-mode trapdoor function family. Define
\begin{align*}
    \mathcal{G} = \{g_{k, b_1, b_2}: \mathcal{X} \times \mathcal{X} \rightarrow \mathcal{Y} \times \mathcal{Y}\}_{k \in \mathcal{K}_{\mathcal{G}}, b_1, b_2 \in \{0,1\}}
\end{align*}
for some $\mathcal{K}_\mathcal{G} \subseteq \mathcal{K}_{\mathcal{F}} \times \mathcal{K}_{\mathcal{F}}$.
such that for all $k = (k_1, k_2) \in \mathcal{K}_{\mathcal{G}}$, $b_1, b_2 \in \{0,1\}$, and $x_1, x_2 \in \mathcal{X}$, define
\begin{align*}
    g_{k, b_1, b_2}(x_1, x_2) := \Bigl( f_{k_1, b_1}(x_1),  f_{k_2, b_2}(x_2) \Bigr).
\end{align*}
For convenience, we will sometimes write $f_1 = f_{k_1, b_1}$ and $f_2 = f_{k_2, b_2}$. We now show that $\mathcal{G}$ is a $4$-to-$2$ dual-mode trapdoor function family.

\paragraph{Efficient key generation and function evaluation.} We define $\mathsf{Gen}_{\mathcal{G}}(1^\lambda, \mu)$ to work as follows. We sample the key-trapdoor pairs
\begin{align*}
    (k_1, t_{k_1}) &\leftarrow \mathsf{Gen}_{\mathcal{F}}(1^\lambda, \mu), \text{ and}\\
    (k_2, t_{k_2}) &\leftarrow \mathsf{Gen}_{\mathcal{F}}(1^\lambda, 1-\mu).
\end{align*}
We output the key $k = (k_1, k_2)$ and the trapdoor $t_k = (t_{k_1}, t_{k_2})$. Since $\mathsf{Gen}_{\mathcal{F}}$ is efficient, so is $\mathsf{Gen}_{\mathcal{G}}$. Efficient function evaluation follows directly from the definition.

\paragraph{Efficient state preparation.} If $\mathcal{D} = \{D_{k, b}\}_{k\in\mathcal{K}_{\mathcal{F}}, b \in \{0,1\}}$ is the family of probability distributions over $\mathcal{X}$ associated with $\mathcal{F}$.
Let $\mathcal{D}' = \{D'_{k,b_1, b_2}\}_{k = (k_1, k_2) \in \mathcal{K}_{\mathcal{G}}, b_1, b_2\in\{0,1\}}$ be the family of distributions associated with $\mathcal{G}$, defined as $D'_{k, b_1, b_2}(x_1, x_2) := D_{k_1, b_1}(x_1) \cdot D_{k_2, b_2}(x_2)$. Then we can efficiently prepare the state as a tensor product:
\begin{align*}
    \sum_{x_1, x_2\in \mathcal{X}} & \sqrt{D'_{k, b_1, b_2}(x_1, x_2)}\ket{x_1, x_2} \\ & = \sum_{x_1 \in \mathcal{X}} \sqrt{D_{k_1, b_2}(x_1)} \ket{x_1} \otimes \sum_{x_2 \in \mathcal{X}} \sqrt{D_{k_2, b_2}(x_2)} \ket{x_2}.
\end{align*}

\paragraph{Efficient partial inversion with the trapdoor.} Given as input $y = (y_1, y_2)$ and the trapdoor $t_k = (t_{k_1}, t_{k_2})$, we can partially invert $f_{k_1, \cdot}$ and $f_{k_2, \cdot}$ on $y_1$ and $y_2$ respectively to get $B_1 = \{b_1 : \exists f_{k_1, b_1}(x) = y_1\}$ and $B_2 = \{b_2 : \exists f_{k_2, b_2}(x) = y_2\}$. Output $B_1 \times B_2$.

\paragraph{Efficient phase computation with the trapdoor.}
Fix $y = (y_1, y_2) \in \mathcal{Y}^2$ and $d = (d_1, d_2) \in \bar{\mathcal{X}}^2$. Then,
\begin{align}
    \alpha_{k, y, d}(b_1, b_2) &= \sum_{x_1, x_2 \in \mathcal{X}} (-1)^{d_1 \cdot x_1 + d_2 \cdot x_2} \cdot \sqrt{D'_{k, b_1, b_2}(x_1, x_2)}\nonumber\\
    &= \sum_{x_1, x_2 \in \mathcal{X}} (-1)^{d_1 \cdot x_1 + d_2 \cdot x_2} \cdot \sqrt{D_{k_1, b_1}(x_1) \cdot D_{k_2, b_2}(x_2)}\nonumber\\
    &= \left(\sum_{x_1 \in \mathcal{X}}  (-1)^{d_1 \cdot x_1} \sqrt{D_{k_1, b_1}(x_1)}\right) \cdot \left(\sum_{x_2 \in \mathcal{X}} (-1)^{d_2 \cdot x_2} \sqrt{D_{k_2, b_2}(x_2)}\right)\nonumber\\
    &= \alpha_{k_1, y_1, d_1}(b_1) \cdot \alpha_{k_2, y_2, d_2}(b_2)\;.\label{equation:phase-computation-4-2}
\end{align}
Given the trapdoor, since we can effiently compute $\alpha_{k_1, y_1, d_1}(b_1)$ and $\alpha_{k_2, y_2, d_2}(b_2)$ individually, we can also compute $\alpha_{k, y, d}(b_1, b_2)$.

\paragraph{Dual mode.} We will prove the dual-mode property for the case where $\mu = 0$; the case where $\mu = 1$ is symmetric. When $\mu = 0$, the keys $k_1(0)$ and $k_2(1)$ were generated in modes $0$ and $1$ respectively.

If $b_1 \neq b'_1$, since $k_1(0)$ is in mode $0$, we know that the images of $f_{k_1, b_1}$ and $f_{k_1, b'_1}$ are disjoint, and so are the images of $g_{k, b_1, b_2}$ and $g_{k, b'_1, b'_2}$. 

If $b_1 = b'_1$, consider sampling $\tilde{b}_1, \tilde{b}_2 \leftarrow \{0,1\}$, $x_1 \leftarrow D_{k_1, \tilde{b}_1}$, $x_2 \leftarrow D_{k_2, \tilde{b}_2}$, and setting $y_1 = f_{k_1, \tilde{b}_1}(x_1)$ and $y_2 = f_{k_2, \tilde{b}_2}(x_2)$. Since $k_2(1)$ is in mode $1$, we know with probability at least $1-\epsilon$, for any $d_2\in \bar{\mathcal{X}}$ there exists some $s\in\{0,1\}$ such that $\alpha_{k_2, y_2, d_2}(b_2) = (-1)^s \cdot \alpha_{k_2, y_2, d_2}(b'_2)$. By Equation~\eqref{equation:phase-computation-4-2}, we get that with probability at least $1-\epsilon$ for any $d_1, d_2\in \bar{\mathcal{X}}$,
\begin{align*}
    \alpha_{k, y, d}(b_1, b_2) &= \alpha_{k_1, y_1, d_1}(b_1) \cdot \alpha_{k_2, y_2, d_2}(b_2) = \alpha_{k_1, y_1, d_1}(b'_1) \cdot \alpha_{k_2, y_2, d_2}(b_2)\\
    &= \alpha_{k_1, y_1, d_1}(b'_1) \cdot (-1)^s \cdot \alpha_{k_2, y_2, d_2}(b'_2) =  (-1)^s \cdot \alpha_{k, y, d}(b'_1, b'_2).
\end{align*}

\paragraph{Mode-Indistinguishability.} Consider the following hybrid experiments:
\begin{enumerate}
    \item Hybrid 1: Generate $k = (k_1, k_2)$ by running $k_1, t_1 \leftarrow \mathsf{Gen}_{\mathcal{F}}(1^\lambda, \mu)$ and $k_2, t_2 \leftarrow \mathsf{Gen}_{\mathcal{F}}(1^\lambda, 1 - \mu)$, and output $k = (k_1, k_2)$.
    \item Hybrid 2: Generate $k = (k_1, k_2)$ by running $k_1, t_1 \leftarrow \mathsf{Gen}_{\mathcal{F}}(1^\lambda, \mu)$ and $k_2, t_2 \leftarrow \mathsf{Gen}_{\mathcal{F}}(1^\lambda, \mu)$, and output $k = (k_1, k_2)$.
    \item Hybrid 3: Generate $k = (k_1, k_2)$ by running $k_1, t_1 \leftarrow \mathsf{Gen}_{\mathcal{F}}(1^\lambda, 1 - \mu)$ and $k_2, t_2 \leftarrow \mathsf{Gen}_{\mathcal{F}}(1^\lambda, \mu)$, and output $k = (k_1, k_2)$.
\end{enumerate}
Note that Hybrid 1 corresponds to the setting where $k$ is generated as $k, t \leftarrow \mathsf{Gen}_{\mathcal{G}}(1^\lambda, \mu)$, and Hybrid 3 corresponds to the setting where $k$ is generated as $k, t \leftarrow \mathsf{Gen}_{\mathcal{G}}(1^\lambda, 1 - \mu)$.
By the mode-indistinguishability of $\mathcal{F}$, Hybrids 1 and 2 are computationally indistinguishable, and similarly Hybrids 2 and 3 are computationally indistinguishable. Therefore, Hybrids 1 and 3 are computationally indistinguishable. This completes the proof.
\end{proof}

\subsection{Amplification Lemma for Dual-Mode Trapdoor Functions}\label{section:dTF-amplification}
In this section, we give an amplification lemma that transforms $\epsilon$-weak injective and invertible dTFs (see Definition~\ref{definition:injective-and-invertible}) to $\epsilon^\ell$-weak dTFs. This transformation is based on the XOR lemma of \cite{sahai2003complete}.
\ifnum\llncs=1
See the full version of the paper for the proof.
\fi

\begin{lemma}[Amplification Lemma for dTFs] \label{lemma:dTF-amplification}
Let $\lambda$ be the security parameter. Suppose there exists an injective and invertible $\epsilon$-weak dual-mode trapdoor function family for some $\epsilon > 0$. Then, there exists a $\epsilon^\ell$-weak dual-mode trapdoor function family for any $\ell \le \poly(\lambda)$.
\end{lemma}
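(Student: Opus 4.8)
The plan is to take $\ell$ independent instances of the given injective and invertible $\epsilon$-weak dTF and glue them together by XOR-secret-sharing the index bit, mirroring the XOR lemma of \cite{sahai2003complete}. Concretely, $\mathsf{Gen}_{\mathcal{G}}(1^\lambda,\mu)$ samples $(k_i,t_{k_i})\leftarrow \mathsf{Gen}_{\mathcal{F}}(1^\lambda,\mu)$ for $i\in[\ell]$, all in the \emph{same} mode $\mu$, and sets $k=(k_1,\dots,k_\ell)$ and $t_k=(t_{k_1},\dots,t_{k_\ell})$. The amplified function with index $b\in\{0,1\}$ has domain $\{\vec c\in\{0,1\}^\ell:\bigoplus_i c_i=b\}\times\mathcal{X}^\ell$ and is defined by $g_{k,b}(\vec c,\vec x)=(f_{k_1,c_1}(x_1),\dots,f_{k_\ell,c_\ell}(x_\ell))$, with associated distribution $D'_{k,b}(\vec c,\vec x)=2^{-(\ell-1)}\prod_i D_{k_i,c_i}(x_i)$. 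The index bit $b$ is thus spread across the copies as a parity, so that the image $\vec y$ carries no direct copy of $b$.

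First I would dispatch the syntactic properties exactly as in Lemma~\ref{lemma:dTF-to-4-2-dTF}. State preparation prepares a uniform superposition over $\{\vec c:\bigoplus_i c_i=b\}$ (an affine subspace, hence easy) and, controlled on each $c_i$, the base state $\sum_x\sqrt{D_{k_i,c_i}(x)}\ket{x}$. Partial inversion inverts each coordinate $y_i$ separately using $t_{k_i}$, and the combined phase $\alpha_{k,\vec y,(\vec e,\vec d)}(b)$ expands, via the product structure of $D'$, into a signed sum over shares $\vec c$ of parity $b$ of $\prod_i \alpha_{k_i,y_i,d_i}(c_i)$; injectivity makes each per-copy factor a single computable term and invertibility supplies the preimages, so this is efficiently computable with $t_k$. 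Mode indistinguishability follows from a hybrid argument flipping the copies from mode $0$ to mode $1$ one at a time, each step reducing to single-copy mode indistinguishability of $\mathcal{F}$. For the disjoint mode ($\mu=0$), since for all-but-negligibly-many keys each base pair $f_{k_i,0},f_{k_i,1}$ has disjoint images, each coordinate $y_i$ determines $c_i$ uniquely; hence $\vec y$ determines $\bigoplus_i c_i=b$, so $\mathrm{Image}(g_{k,0})$ and $\mathrm{Image}(g_{k,1})$ are disjoint with probability $1$.

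The crux is the lossy mode ($\mu=1$), where I would show the amplified family is $\epsilon^\ell$-weak. Call copy $i$ \emph{good} at $y_i$ if the base lossy-mode phase relationship holds there. The key claim is that a \emph{single} good copy $i^\star$ already forces the amplified phase relationship at $\vec y$: pairing each parity-$0$ share $\vec c$ with the parity-$1$ share obtained by flipping coordinate $i^\star$ gives a bijection between the two summation domains, and the good-copy relationship lets me relate the two products $\prod_i\alpha_{k_i,y_i,d_i}(c_i)$ term-by-term. This is precisely the XOR-lemma phenomenon of \cite{sahai2003complete}: the amplified family fails only when \emph{every} copy is bad. Then I would argue independence: averaging over $b\leftarrow\{0,1\}$ makes $\vec c$ uniform on $\{0,1\}^\ell$ and the pairs $(c_i,x_i)$ independent across copies, each marginally distributed as in the base lossy-mode experiment; since each copy is bad with probability at most $\epsilon$ and the keys are independent, all copies are simultaneously bad with probability at most $\epsilon^\ell$, yielding an $\epsilon^\ell$-weak family.

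The main obstacle is carrying out this lossy-mode analysis at the level of the $\alpha$-phases rather than informal ``same image'' reasoning, and in particular correctly accounting for the sign $s$: flipping the good coordinate $c_{i^\star}$ both toggles the parity and, together with the Hadamard-measurement outcome $\vec e$ on the secret-sharing register, contributes a sign that must be folded into the correction bit. I would therefore prove the precise statement that for every $(\vec e,\vec d)$ there is a sign, efficiently computable from $t_k$, relating $\alpha_{k,\vec y,(\vec e,\vec d)}(0)$ and $\alpha_{k,\vec y,(\vec e,\vec d)}(1)$ whenever some copy is good, and check that it is of the form demanded by the definition (and consumed correctly downstream). The injectivity and invertibility hypotheses of Definition~\ref{definition:injective-and-invertible} are what make this tractable: injectivity collapses each per-copy phase to a single term with a well-defined preimage, property~$3$ guarantees the matched magnitudes $|\alpha_{k_i,y_i,d_i}(0)|=|\alpha_{k_i,y_i,d_i}(1)|$ at a shared image, and $\mathsf{Inv}_{\mathcal{F}}$ makes all preimages and signs explicitly computable with the trapdoor; note that the amplified family need not itself be injective or invertible, since the lemma only asks for an $\epsilon^\ell$-weak dTF.
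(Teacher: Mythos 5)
Your proposal is correct and follows the same XOR-amplification strategy as the paper (secret-share the index bit $b$ across $\ell$ copies; a single ``good'' copy in the lossy mode forces the amplified phase relationship, so the family fails only when all copies fail), but it differs in two ways worth noting. First, you sample $\ell$ independent keys, one per copy, whereas the paper reuses a \emph{single} key $k$ and evaluates $f_{k,r_i}(x_i)$ on $\ell$ i.i.d.\ inputs; the paper's choice makes key generation and mode indistinguishability literally those of $\mathcal{F}$ (no hybrid needed), while yours costs the $\ell$-step hybrid you describe but makes the independence of the per-copy failure events (and hence the $\epsilon^\ell$ bound, after averaging over $b$ to decouple the parity constraint) more transparent. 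Second, for the lossy mode the paper runs an induction on partial phases $\beta_{i,b}$, propagating $|\beta_{i,0}|=|\beta_{i,1}|$ forward from the first good copy $i^*$, whereas you give a direct bijection between parity-$0$ and parity-$1$ shares by flipping coordinate $i^\star$; both work and yours is arguably cleaner. The one step you should spell out is efficient phase computation: $\alpha_{k,\vec y,(\vec e,\vec d)}(b)$ is a sum over $2^{\ell-1}$ shares, so ``each per-copy factor is a single computable term'' does not by itself give a polynomial-time algorithm. You need either the character identity $\sum_{\vec c:\oplus_i c_i=b}(-1)^{\vec e\cdot\vec c}\prod_i\alpha_i(c_i)=\frac{1}{2}\bigl(\prod_i(\alpha_i(0)+(-1)^{e_i}\alpha_i(1))+(-1)^b\prod_i(\alpha_i(0)-(-1)^{e_i}\alpha_i(1))\bigr)$ or the paper's two-state recursion on $(\beta_{i,0},\beta_{i,1})$ to evaluate it in time polynomial in $\ell$; this is a one-line fix, not a flaw in the approach.
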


\ifnum\llncs=0
\begin{proof}
Let $\mathcal{F} = \{f_{k, b} : \mathcal{X} \rightarrow \mathcal{Y}\}_{k \in \mathcal{K}, b \in \{0,1\}}$ be an injective and invertible, $\epsilon$-weak dual-mode trapdoor function family. Define the function family $\mathcal{G}$ as
\begin{gather*}
    \mathcal{G} = \{g_{k, b} : \mathcal{X}^\ell \times \{0,1\}^{\ell -1} \rightarrow \mathcal{Y}^\ell\}_{k \in \mathcal{K}, b \in \{0,1\}},\\
    g_{k, b}(x_1, \ldots, x_\ell, r_1, \ldots, r_{\ell-1}) := \left(y_1, \ldots, y_\ell\right),
\end{gather*}
where $r_\ell = b \oplus r_1 \oplus \cdots \oplus r_{\ell}$ and $y_i = f_{k, r_i}(x_i)$ for $i \in [\ell]$. We will show that $\mathcal{G}$ is a $\epsilon^\ell$-weak dual-mode trapdoor function family. If $\mathcal{D} = \{D_{k, b}\}_{k, b}$ is the family of probability distributions associated with $\mathcal{F}$, we define a family of probability distributions $\mathcal{D}' = \{D'_{k, b}\}_{k, b}$ associated with $\mathcal{G}$ as follows:
\begin{align*}
    D'_{k, b}(x_1, \ldots, x_\ell, r_1, \ldots, r_{\ell-1}) := \frac{1}{2^{\ell-1}} \cdot \prod_{i=1}^\ell D_{k,r_i}(x_i),
\end{align*}
where $r_\ell = b \oplus r_1 \oplus \cdots \oplus r_{\ell}$.

\paragraph{Key generation, efficient state preparation, and mode indistinguishability.} The key generation algorithm for $\mathcal{G}$ is the same as that of $\mathcal{F}$, so the efficient key generation and mode indistinguishability directly for $\mathcal{G}$ follow from the same properties of $\mathcal{F}$. The efficient state preparation property for $\mathcal{G}$ also follows directly from that of $\mathcal{F}$.

\paragraph{Efficient function evaluation.} Given the key, the function can also be evaluated efficiently: to evaluate $g_{k, b}$ on input $x_1, \ldots, x_\ell, r_1, \ldots, r_{\ell-1}$, we first compute $r_\ell := b \oplus r_1 \oplus \cdots \oplus r_\ell$. Since $f_{k, \cdot}$ can be efficiently evaluated, we compute $y_i = f_{k, r_i}(x_i)$ for $i \in [\ell]$ and output $(y_1, \ldots, y_\ell)$.

\paragraph{Efficient partial inversion with a trapdoor.} There is also a straightforward efficient algorithm that performs partial inversion: given the trapdoor $t_k$ and $y = (y_i)_{i\in [\ell]}$, for each $y_i$, run the partial inversion algorithm for $\mathcal{F}$ to the set $R_i := \{r \in \{0,1\} : \exists x \text{ s.t. } f_{k, r}(x) = y_i\}$. On a valid input, every $R_i$ must be non-empty; otherwise output $\emptyset$. If $|R_i| = 1$ for every $i$ (so we can write $R_i = \{\hat{r}_i\}$), then there is only on possible value of $b$, so we output $\{\hat{b}\}$, where $\hat{b} := \hat{r}_1 \oplus \cdots \oplus \hat{r}_\ell$. If for any $i$, $R_i = \{0,1\}$, this means that $y$ has preimages under both $g_{k, 0}$ and $g_{k, 1}$, so we output $\{0,1\}$.

For the rest of the proof, we focus on efficient phase computation and the dual-mode property. The proof is by induction, and it will be useful to first set up some notation. For a fixed $y = (y_1, \ldots, y_\ell)$ in the image of $g_{k, b}$, and for a fixed $(d, s) = (d_1, \ldots d_\ell, s_1, \ldots, s_{\ell-1}) \in \bar{\mathcal{X}}^\ell \times \{0,1\}^{\ell-1}$, define $S_{i, b}$ and $\beta_{i, b}$ for each $i \in [\ell]$ as follows:
\begin{align*}
    S_{i, b} &:= \left\{(x_1, \ldots, x_{i}, r_1, \ldots, r_{i-1}): \exists r_i \in \{0,1\} \text{ s.t. } r_1 \oplus \cdots \oplus r_{i} = b, f_{k, r_1}(x_1) = y_1, \ldots, f_{k, r_{i}}(x_{i}) = y_{i}\right\},
\end{align*}
and
\begin{align*}
    \beta_{i, b} &:= \frac{1}{2^{i-1}} \cdot \sum_{\substack{x_1, \ldots, x_{i},\\r_1, \ldots, r_{i-1} \in S_{i, b}}}  \left( \prod_{j=1}^{i} (-1)^{d_j \cdot x_j} \cdot \sqrt{D_{k, r_j}(x_j)} \right) \cdot \left(\prod_{j=1}^{i-1} (-1)^{s_j \cdot r_j}\right)\\
    &= \frac{1}{2^{i-1}} \cdot \sum_{\substack{x_1, \ldots, x_{i},\\r_1, \ldots, r_{i-1} \in S_{i, b}}} (-1)^{\sum_{j=1}^{i} d_j \cdot x_j + \sum_{j=1}^{i-1} s_j \cdot r_j} \cdot \left( \prod_{j=1}^{i} \sqrt{D_{k, r_j}(x_j)} \right)
\end{align*}
Note that $\alpha_{k, y, (d,s)}(b) = \beta_{\ell, b}$, and $S_{\ell, b}$ is the set of all preimages of $y$ under $g_{k, b}$. For $i \in [\ell]$, $r \in \{0,1\}$, let $\mathbbm{1}[X_{i, r}]$ be the indicator for the event that there exists some $x_{i, r}\in\mathcal{X}$ such that $f_{k, r}(x_{i, r}) = y_i$. Since $\mathcal{F}$ is an injective and invertible dTF family, such an inverse $x_{i, r}$ must be unique if it exists. Writing $\beta_{i, r}$ in terms of $\beta_{i-1, 0}$ and $\beta_{i-1, 1}$, we get
\begin{align}
    \beta_{i, b} &= \mathbbm{1}[X_{i, 0}] \cdot \frac{1}{2} \cdot (-1)^{d_i \cdot x_{i,0}} \cdot \sqrt{D_{k, 0}(x_{i,0})} \cdot \sum_{\substack{x_1, \ldots, x_{i-1},\\r_1, \ldots, r_{i-2} \in S_{i-1, b}}} (-1)^{\sum_{j=1}^{i-1} d_j \cdot x_j + \sum_{j=1}^{i-2} s_j \cdot r_j} \cdot \prod_{j=1}^{i-1}\sqrt{D_{k, r_j}(x_j)} \nonumber \\
    &\, + \mathbbm{1}[X_{i, 1}] \cdot \frac{1}{2} \cdot (-1)^{d_i \cdot x_{i, 1} + s_{i-1}} \cdot \sqrt{D_{k, 1}(x_{i, 1})} \cdot \sum_{\substack{x_1, \ldots, x_{i}\\r_1, \ldots, r_{i-2}\\\in S_{i-1, 1-b}}} (-1)^{\sum_{j=1}^{i-1} d_j \cdot x_j  + \sum_{j=1}^{i-2} s_j \cdot r_j} \cdot \prod_{j=1}^{i-1}\sqrt{D_{k, r_j}(x_j)} \nonumber \\
    &= \frac{1}{2} \sqrt{D_{k, 0}(x_{i,0})} \cdot \left(\mathbbm{1}[X_{i, 0}] \cdot (-1)^{d_i \cdot x_{i,0}} \cdot \beta_{i-1, b}  + \mathbbm{1}[X_{i, 1}] \cdot (-1)^{d_i \cdot x_{i, 1} + s_{i-1}} \cdot \beta_{i-1, 1-b} \right)\;,\label{equation:induction}
\end{align}
where the last inequality follows since $D_{k, 0}(x_{i, 0}) = D_{k, 1}(x_{i, 1})$, since $\mathcal{F}$ is an injective and invertible dTF.
\paragraph{Efficient Phase Computation with Trapdoor.} We observe that in the base case, $$S_{1, b} = \{x_1 : f_{k, b}(x_1) = y_1\},$$ and since $\mathcal{F}$ is an injective and invertible dTF family, $\beta_{1, b}$ is efficiently computable. For the same reason, the phases $(-1)^{d_i \cdot x_{i,0}}$ and $(-1)^{d_i \cdot x_{i, 1} + s_{i-1}}$ are efficiently computable for all $i \in [\ell]$. Then, by Equation~\eqref{equation:induction}, we can sequentially compute $\beta_{i, r}$ for $i = 1, \ldots, \ell$, and $r\in\{0,1\}$, to get finally $\alpha_b = \alpha_{k, y, (d,s)}(b) = \beta_{\ell, b}$.

\paragraph{Dual-mode.} In the disjoint mode ($\mu = 0$), since the images of $f_{k, 0}$ and $f_{k, 1}$ are disjoint, for each $i \in [\ell]$, there exist unique $r_i, x_i$ such that $f_{k, r_i}(x_i) = y_i$. This implies that there is a unique $b = r_1 \oplus \cdots \oplus r_\ell$ such that $y$ is in the image of $g_{k, b}$. In the lossy mode ($\mu = 1$), for $i\in[\ell]$ letting $z_i \in \{0,1\}$ such that $(-1)^{z_i} = (-1)^{d_i \cdot x_{i, 1} + s_{i-1} - d_i \cdot x_{i,0}}$, Equation~\eqref{equation:induction} implies
\begin{align*}
    |\beta_{i, b}| = \frac{1}{2} \cdot \sqrt{D_{k, 0}(x_{i, 0})} \cdot \Bigl|\mathbbm{1}[X_{i, 0}] \cdot \beta_{i-1, b}  + \mathbbm{1}[X_{i, 1}] \cdot (-1)^{z_i} \cdot \beta_{i-1, 1-b}\Bigr|\;.
\end{align*}
If $\mathbbm{1}[X_{i, 0}] = 1$ and $\mathbbm{1}[X_{i, 1}] = 0$, we have that $|\beta_{i, b}| = \frac{1}{2} \cdot \sqrt{D_{k, 0}(x_{i, 0})} \cdot |\beta_{i-1, b}|$. Symmetrically, if on the other hand $\mathbbm{1}[X_{i, 0}] = 0$ and $\mathbbm{1}[X_{i, 1}] = 1$, we have that $|\beta_{i, b}| = |\beta_{i-1, 1-b}|$. In the case where $\mathbbm{1}[X_{i, 0}] = \mathbbm{1}[X_{i, 1}] = 1$, 
\begin{align*}
    |\beta_{i, b}| = \frac{1}{2} \cdot \sqrt{D_{k, 0}(x_{i, 0})} \cdot \Bigl|\beta_{i-1, b} + (-1)^{z_i} \beta_{i-1, 1-b} \Bigr| = |\beta_{i, 1-b}| \;.
\end{align*}
Therefore, in all cases, if $|\beta_{i-1, 0}| = |\beta_{i-1, 1}|$, we have that $|\beta_{i,0}| = |\beta_{i, 1}|$. 

All that remains is to prove the base case: we want to show that there exists some $i^* \in [\ell]$ for which $|\beta_{i^*,0}| = |\beta_{i^*, 1}|$. By the lossy-mode property of $\epsilon$-weak dTF $\mathcal{F}$, with probability\footnote{the probability is taken over the choice of $y$, which is sampled by first sampling $x_1, \ldots, x_\ell \leftarrow \mathcal{X}$ and $r_1, \ldots, r_{\ell-1} \leftarrow \{0,1\}$ and setting $y = g_{k, b}(x_1, \ldots, x_\ell, r_1, \ldots, r_{\ell-1})$. Let $i^*$ be the smallest such $i$.} at least $1 - \epsilon^\ell$, there exists some $i \in [\ell]$ such that $\mathbbm{1}[X_{i,0}] = \mathbbm{1}[X_{i,1}] = 1$. Let $i^*$ be the smallest such $i$. For all $j < i^*$, exactly one of $|\beta_{j, 0}|$ and $|\beta_{j, 1}|$ is nonzero, the other being $0$. Further, $|\beta_{i^*, 0}| = |\beta_{i^*, 1}| = \frac{1}{2} \cdot \sqrt{D_{k, 0}(x_{i^*,0})}$, 
satisfying the base case for our induction. Therefore in the lossy mode, with probability at least $1 - \epsilon^\ell$, $|\beta_{\ell, 0}| = |\beta_{\ell, 1}|$. Since the $\beta_{i, r}$'s are real-valued, this means that there exists some $s\in \{0,1\}$ such that $\beta_{\ell, 0} = (-1)^s \cdot \beta_{\ell, 1}$. This completes the proof.
\end{proof}
\fi

\section{Remote State Preparation of the DSS  Gadget}\label{section:dss-rsp}
We construct our QFHE scheme by dequantizing the client in the scheme of Dulek, Schaffner and Speelman~\cite{dulek2016quantum}, which we will henceforth refer to as DSS. The DSS QFHE scheme builds on the Pauli one-time pad-based template of Broadbent and Jeffery~\cite{broadbent2015quantum}. Concretely, a quantum state $\ket{\psi}$ is encrypted by applying a random Pauli one-time pad to get $|\tilde{\psi}\rangle := \X^x \Z^z \ket{\psi}$. The keys $x, z$ of the one-time pad are encrypted under a classical homomorphic encryption scheme to get ciphertexts $\tilde{x}, \tilde{z}$. The one-time padded state $|\tilde{\psi}\rangle$ along with these encryptions $\tilde{x}, \tilde{z}$ form the encryption of the state $\ket{\psi}$.

Evaluating any Clifford gate is easy, since Clifford gates nicely ``commute'' through the Pauli one-time pad; we only need to update the one-time pads homomorphically according to standard rules of Clifford-Pauli commutation. On the other hand when the non-Clifford $\T$ gate is applied to a Pauli one-time padded state, we end up with an extra non-Pauli error in the form of a $\P$ gate, since
\begin{align*}
    \T \X^x \Z^z \ket{\psi} = \P^x \X^x \Z^z \T \ket{\psi}.
\end{align*}
The natural solution would be to apply a $\P^{\dagger}$ gate to this state if $x = 1$, and otherwise do nothing. However it is not clear how to do this, because the evaluator only has access to an encryption $\tilde{x}$ of the one-time pad key $x$. 

Dulek et al.~\cite{dulek2016quantum} solve this issue by ``obfuscating'' the decryption algorithm (hard-coded with the secret key $sk$) in the form of a quantum teleportation gadget $\Gamma(sk)$, henceforth called the DSS gadget or the DSS teleportation gadget. The client prepares this (quantum) gadget during key generation, and sends it to the evaluator as part of the evaluation key. When the qubit $\P^x \X^x \Z^z \T \ket{\psi}$ is teleported through the gadget (in a way that depends on $\tilde{x}$), the gadget implicitly decrypts $\tilde{x}$ and applies the $(\P^\dagger)^x$ correction to the teleported qubit. More specifically, we can think of running the DSS gadget as a blackbox that takes as input a state $\rho$ and an encryption $\HE.\Enc_{pk}(x)$ for some $x \in \{0,1\}$ and outputs the state
\begin{align*}
    \X^{x'} \Z^{z'} (\P^\dagger)^x \rho (\P)^x \Z^{z'} \X^{x'},
\end{align*}
for some one-time pad keys $x', z' \in \{0,1\}$, along with encryptions $\HE.\Enc_{pk'}(x')$ and $\HE.\Enc_{pk'}(z')$ under an independent public key $pk'$. This functionality is visualized in Figure~\ref{fig:DSS}, up to a renaming of the variables $x', z'$.

In this section, we describe a protocol by which the evaluator is able to prepare the DSS gadget $\Gamma(sk)$ on its own, with the help of some {\em classical} information generated by the client during the key generation procedure. In other words, we give a remote state preparation protocol for the DSS gadget. As a result, the client is entirely classical. This protocol is described in Section~\ref{section:dss-gadget-rsp-protocol}. The complete description of the structure of the DSS gadget is quite involved, so we give only an abstract description of its structure in Section~\ref{section:abstract-description-DSS}, which will suffice for our purposes. We defer the details of the structure of the DSS gadget to %
\ifnum\llncs=0
Appendix~\ref{section:DSSgadget}.
\else
Appendix~A of the full version.
\fi

At the heart of our remote state preparation protocol is a simpler protocol that takes as input a three-qubit state, say on registers labeled $0, 1, 2$, along with a ``$4$-to-$2$'' dual-mode trapdoor function key $k(\mu)$ generated in mode $\mu$. The output is the (one-time padded) state $\ket{\Phi^+}_{2, \mu} \otimes \ket{+}_{1-\mu}$, where the qubit in register $\mu$ forms a Bell pair with qubit $2$, and these two registers are in tensor product with the qubit in register $(1-\mu)$. In other words, this protocol allows the remote state preparation of a Bell pair between two of three qubits, while hiding the identity of the qubits that form the Bell pair. This protocol for remote state preparation of a ``hidden Bell pair'' is described in Section~\ref{section:RSPhiddenBellPair}.

\subsection{An abstract description of the DSS gadget}\label{section:abstract-description-DSS}
The DSS gadget is composed of a set of entangled qubits to form a perfect matching of Bell pairs. To facilitate teleportation, the gadget also defines a set of Bell-basis measurements for the evaluator to perform. 

The complete description of this matching and measurement instructions is quite involved, so we defer our the exposition to %
\ifnum\llncs=0
Appendix~\ref{section:DSSgadget}. %
\else
Appendix~A of the full version. %
\fi
For the purpose of this section, we only need an abstract specification of the structure of the DSS gadget, which we will now describe.

The DSS gadget is defined by a tuple $(Q, Q_1, Q_2, P, \pi, \nu)$, where each element of the tuple is described below. Let $n = |sk|$ be the length of the secret key, and let $\mathcal{C}$ be the space of ciphertexts for a single bit message for the classical homomorphic encryption scheme $\HE$.
\begin{itemize}
    \item $Q$ is the set of qubit registers in the DSS gadget.
    \item $(Q_1, Q_2)$ is a partition of $Q$, across which the entangled Bell pairs form a bipartite graph.
    \item $P \subseteq Q$ is the set of registers with an inverse phase gate $\P^\dagger$ applied.
    \item $\pi :  \{0,1\}^n \times Q_1 \rightarrow Q_2$ is an injective function that defines the matching of qubits that form the maximally entangled Bell pairs of the gadget. If $sk$ is the secret key, then for every $q\in Q_1$, the qubits in registers $q, \pi(sk, q)$ form a Bell pair. We write $\pi^{sk}$ as shorthand for $\pi(sk, \cdot)$.\footnote{Although $\pi$ is public and known to the evaluator, $\pi^{sk}$ is still private and known only to the client.}
    \item $\nu: Q_2 \times \mathcal{C} \rightarrow Q_1 \cup \{\bot\}$ defines the measurements the evaluator must perform to teleport the input qubit through the gadget. These measurements depend on the ciphertext $\tilde{x}$, which the evaluator has access to. For every $q \in Q_2$, the evaluator performs a Bell basis measurement on the qubits $q, \nu(\tilde{x}, q)$ (in a specified order) and records the measurements to update the one-time pad keys.
\end{itemize}

All of $Q, Q_1, Q_2, P, \pi, \nu$ are public and known to the evaluator. Together with the secret key $sk$, they determine the structure of the DSS gadget, that is, they determine which qubit pairs are entangled to form Bell pairs. The function $\pi^{sk} := \pi(sk, \cdot)$ defines this bijection from $Q_1$ to $Q_2$. Concretely, the gadget $\Gamma(sk)$ consists of the state
\begin{align*}
    \gamma_{x, z}(sk) := \bigotimes_{q \in Q_1} \X^{x_{q}} \Z^{z_{q}} (\P^\dagger)^{p_{q}} \ket{\Phi^+}\bra{\Phi^+}_{q, \pi^{sk}(q)} (\P)^{p_{q}} \Z^{z_{q}} \X^{x_{q}},
\end{align*}
where $x, z \in \{0,1\}^{|Q_1|}$ are one-time pad keys and $p \in \{0,1\}^{|Q_1|}$ is defined as $p_q = 1$ if $q \in P$ and $0$ otherwise. The gadget $\Gamma(sk)$ also includes classical encryptions $\tilde{x}, \tilde{z}$ of the one-time pad keys under an independent public key $pk'$. We write $\Gamma_{pk'}(sk) = (\gamma_{x, z}(sk), \tilde{x}, \tilde{z})$ to denote this.

The function $\pi$ has the additional property that for any $q \in Q_1$, there exists some $i(q)$ such that $sk_{i(q)}$ determines the value of $\pi(\cdot, q)$. That is, $\pi(\cdot, q)$ can take one of two distinct values, and depends on only the $i(q)$th bit of $sk$ (as opposed to some global function of $sk$). We denote these two possibilities as $\pi_0(q)$ and $\pi_1(q)$, the value of $\pi(sk, q)$ when $sk_{i(q)} = 0$ and $sk_{i(q)} = 1$ respectively.

\subsection{Remotely Preparing a Hidden Bell Pair}\label{section:RSPhiddenBellPair}
In this subsection, we give a protocol to remotely prepare a hidden Bell pair. At a high level, given a key $k(\mu)$ of a $4$-to-$2$ dual-mode trapdoor function (which can be instantiated given just a dual-mode trapdoor function), and three registers in the state $\ket{+}_0 \otimes \ket{+}_1 \otimes \ket{+}_2$, the protocol produces a one-time padded version of the following state
\begin{align*}
    \ket{\Phi^+}_{2, \mu} \otimes \ket{+}_{1 - \mu}.
\end{align*}
To use this protocol in the remote state preparation of the DSS gadget, we also need to consider input states which are one-time padded with $\Z$ gates and the states in registers $0$ and $1$ could be potentially entangled with other registers.

\begin{lemma}\label{lemma:RSP-hidden-bell-pair}
Let $\mathcal{F}$ be a $\epsilon$-weak $4$-to-$2$ dual-mode trapdoor function family, and let $\HE$ be a fully homomorphic encryption scheme. For a bit $\mu \in \{0, 1\}$, let $k, t \leftarrow \KeyGen_{\mathcal{F}}(1^\lambda, \mu)$ be a key-trapdoor pair, let $(pk, evk, sk) \leftarrow \HE.\KeyGen(1^\lambda)$ be a key set, and let $\tilde{t} \leftarrow \HE.\Enc_{pk}(t)$ be an encryption of the trapdoor $t$ under $pk$. Consider a BQP machine with access to the state\footnote{The single-qubit gates $\Z^{z_0}, \Z^{z_1}$ are applied to the first qubits of the $0,1$ registers.}
\begin{align}\label{equation:hidden-bell-pair-input}
    (\Z^{z_0} \otimes \Z^{z_1} \otimes \I) \Bigl(\ket{\psi_0}_0 \otimes \ket{\psi_1}_1 \otimes \ket{+}_2\Bigr)
\end{align}
on three registers (labelled $0, 1, 2$), where $\ket{\psi_i} = \frac{1}{\sqrt{2}}\left(\ket{0} \ket{\phi^0_i} + \ket{1} \ket{\phi^1_i}\right)$ for $i \in \{0,1\}$. Let the BQP machine also have access to $k, \tilde{t}, pk, evk$ as well as encryptions of $z_0, z_1$ under $pk$. Then, there is a BQP algorithm that computes a state that is within negligible trace distance from the state
\begin{align}\label{equation:threebit-goalstate}
    \frac{1}{\sqrt{2}}(\Z^{r_0} \otimes \Z^{r_1} \otimes \X^{r_2}) \Bigl((\ket{00}\ket{\phi^0_\mu} + \ket{11} \ket{\phi^1_\mu})_{2, \mu} \otimes \ket{\psi_{1-\mu}}_{1-\mu}\Bigr),
\end{align}
along with encryptions of $r_0, r_1, r_2$ under $pk$, with probability at least $1-\epsilon$.
\end{lemma}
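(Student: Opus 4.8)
The plan is to follow the construction sketched in the technical overview and make each step rigorous, tracking the $4$-to-$2$ dTF phase structure exactly. First I would have the BQP machine prepare, using the efficient state preparation property of $\mathcal{F}$ applied to the three control qubits together with a uniform preimage register, the state proportional to
\begin{align*}
    \sum_{u,v,w\in\{0,1\}} \ket{u}_0\ket{v}_1\ket{w}_2 \otimes \sum_{x\in\mathcal{X}} \sqrt{D_{k,u\oplus w,\, v\oplus w}(x)}\,\ket{x}_3 \otimes \ket{f_{k,u\oplus w,\, v\oplus w}(x)}_4,
\end{align*}
where the indices $(b_1,b_2) = (u\oplus w, v\oplus w)$ are computed coherently into the choice of which of the four functions to evaluate. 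I then measure register $4$ to obtain some $y\in\mathcal{Y}$ and measure register $3$ in the Hadamard basis to obtain a string $d\in\bar{\mathcal{X}}$. Since the machine does not hold the trapdoor in the clear but only its encryption $\tilde t$, all subsequent trapdoor-dependent corrections (partial inversion to learn which $(b_1,b_2)$ are consistent with $y$, and the phase computation of $\alpha_{b_1,b_2}$) must be performed homomorphically under $\HE$ using $evk$ and $\tilde t$; the resulting classical corrections then live under encryption and are folded into the output one-time-pad ciphertexts $\tilde r_0,\tilde r_1,\tilde r_2$.

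The core of the argument is to show that after these measurements the three control registers are left in the claimed hidden Bell-pair state up to Pauli corrections. I would argue this for $\mu=0$ (the $\mu=1$ case being symmetric by the defining symmetry of the $4$-to-$2$ dTF). In mode $\mu=0$, functions with the same first index $b_1$ collide, so measuring $y$ projects onto a fixed value of $b_1 = u\oplus w$, linking registers $0$ and $2$, while $b_2 = v\oplus w$ ranges freely; after the Hadamard-basis measurement on register $3$, the dual-mode lossy-side guarantee (Case 2 of the $4$-to-$2$ definition) gives that $\alpha_{b_1,0}$ and $\alpha_{b_1,1}$ agree up to sign $(-1)^s$ with probability at least $1-\epsilon$, which is exactly the condition that the superposition over the free index collapses to a clean $\ket{+}$ on register $1-\mu$ rather than a distorted state. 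Carrying the $\ket{\phi^b_i}$ ancillas and the input $\Z$-one-time-pad through this computation, and collecting the measurement outcomes $y,d$ together with the coherent phases into the Pauli keys $r_0,r_1,r_2$, yields the target state in Equation~\eqref{equation:threebit-goalstate}.

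I expect the main obstacle to be the bookkeeping of the one-time pad and the phase $s$: one must verify that the input $\Z^{z_0},\Z^{z_1}$ pads, the measurement-induced $\Z^z$ phase (with $z = d\cdot(x_{b_1,0}\oplus x_{b_1,1})$ as in the overview), and the sign $s$ from the lossy-mode collision all combine into well-defined new keys $r_0,r_1,r_2$ whose encryptions the server can compute homomorphically from $\tilde t$, $evk$, and the encrypted $z_0,z_1$. The second delicate point is that everything trapdoor-dependent happens under $\HE$, so I would invoke the correctness of $\HE$ to argue that the homomorphically computed corrections match the ideal ones except with negligible probability, and combine this with the $1-\epsilon$ success probability of the lossy-mode guarantee and the negligible state-preparation error to conclude that the output is within negligible trace distance of Equation~\eqref{equation:threebit-goalstate} with probability at least $1-\epsilon$. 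The indistinguishability of $\mu$ plays no role in correctness here; it is only needed later for security of the full scheme.
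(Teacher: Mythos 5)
Your proposal follows essentially the same route as the paper's proof: prepare the superposition over $(u,v,w)$ with the $4$-to-$2$ dTF indexed by the XORs of the control bits, measure the image register to pin down $b_{\mu+1}$ (entangling register $\mu$ with register $2$), measure the preimage register in the Hadamard basis, invoke the lossy-side guarantee $\alpha_{b_1,0}=(-1)^s\alpha_{b_1,1}$ to collapse the free index into $\ket{\psi_{1-\mu}}$, and homomorphically recover $r_2=b_{\mu+1}$ and $s$ from $\tilde t$ to set $r_0=z_0\oplus s$, $r_1=z_1\oplus s$. The bookkeeping you flag as the main obstacle is exactly what the paper carries out, and your error accounting (state-preparation error, $\HE$ correctness, and the $1-\epsilon$ event) matches the lemma's guarantee.
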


\begin{proof}
First, we will introduce some notation that will be useful throughout the proof. Define the function $i: \{0, 1\}^3 \rightarrow \{0,1\}^2$ to be $i(u, v, w) := (u \oplus v, u \oplus w)$. Note that $i(u,v,w) = i(1-u, 1-v, 1-w)$. For readability, we will write $f_{i(u, v, w)}$ or $f_{uvw}$ to denote $f_{k, i(u, v, w)}$, depending on the context, where $k$ is the key given as an input. We will also abbreviate $D_{k,i(u,v,w)}$ as $D_{i(u, v, w)}$ or $D_{uvw}$.

We can rewrite the input state as
\begin{align*}
    \ket{\chi^{(1)}} =   (\Z^{z_0} \otimes \Z^{z_1} \otimes \I) \, \frac{1}{2 \sqrt{2}} \sum_{u, v, w \in \{0, 1\}} (\ket{v}\ket{\phi_0^v})_0 \otimes (\ket{w}\ket{\phi_1^w})_1 \otimes \ket{u}_2\;.
\end{align*}
By controlling appropriately on $u, v, w$, we can then prepare a state that is negligibly close in trace distance to the following state:
\begin{align*}
    & \ket{\chi^{(2)}} = (\Z^{z_0} \otimes \Z^{z_1} \otimes \I \otimes \I) \\
    & 
    \sum_{\substack{u, v, w \in \{0, 1\}\\ x \in \mathcal{X}} } \frac{1}{2 \sqrt{2}} \cdot \sqrt{D_{uvw}(x)} \cdot (\ket{v}\ket{\phi_0^v})_0 \otimes (\ket{w}\ket{\phi_1^w})_1 \otimes \ket{u}_2 \otimes \ket{x}_3.
\end{align*}
We now use the evaluation algorithm for $\mathcal{F}$ to coherently evaluate $f_{uvw}$ in register $4$.
\begin{align*}
    & \ket{\chi^{(3)}} = (\Z^{z_0} \otimes \Z^{z_1} \otimes \I \otimes \I \otimes \I)\\
    &\sum_{\substack{u, v, w \in \{0, 1\}\\ x \in \mathcal{X}} } \frac{\sqrt{D_{uvw}(x)}}{2 \sqrt{2}}  \cdot (\ket{v}\ket{\phi_0^v})_0 \otimes (\ket{w}\ket{\phi_1^w})_1 \otimes \ket{u}_2 \otimes \ket{x}_3 \otimes \ket{f_{uvw}(x)}_4.
\end{align*}
Now, we measure the fourth register to obtain some $y \in \mathcal{Y}$. At this point it is useful to introduce some more notation: let $\mathcal{X}_{b_1, b_2}$ be the set of all pre-images of $y$ under $f_{b_1, b_2}$, that is, $\mathcal{X}_{b_1, b_2} = \{x \mid y = f_{b_1, b_2}(x)\}$. Again, we will abuse notation and write $\mathcal{X}_{uvw}$ to denote $\mathcal{X}_{i(u, v, w)}$. Finally, let $\mathcal{Y}_{b_1, b_2}$ the image of $f_{b_1, b_2}$ when evaluated on the support of $D_{b_1, b_2}$.

For simplicity, let us first consider the case where $\mu = 0$. The case where $\mu = 1$ is symmetric.  Let $a \in \{0,1\}$ be such that $y \in \mathcal{Y}_{a, 0} \cup \mathcal{Y}_{a, 1}$. By the dual-mode property we know that there is exactly one such $a$. (Note that $a$ is not known to the evaluator in the clear at this point.) After the measurement of the fourth register, the superposition on the remaining registers collapses to have only values of $u, v, w$ and $x$ such that $y = f_{uvw}(x)$. In this case, since $y \in \mathcal{Y}_{a, 0} \cup \mathcal{Y}_{a, 1}$, we must have that $u \oplus v = a$. The resulting state is (up to normalization factors)
\begin{align*}
    \ket{\chi^{(4)}} & \propto (\Z^{z_0} \otimes \Z^{z_1} \otimes \I \otimes \I ) \cdot \\ &  \sum_{\substack{u, v, w \in \{0,1\}\\u \oplus v = a\\x \in \mathcal{X}_{uvw}}} \sqrt{D_{uvw}(x)} \cdot (\ket{v} \ket{\phi_0^v})_0 \otimes (\ket{w} \ket{\phi_1^w})_1 \otimes \ket{u}_2 \otimes \ket{x}_3\\
    &= (\Z^{z_0} \otimes \Z^{z_1} \otimes \I \otimes \I ) \cdot \\ 
    & \sum_{\substack{v, w \in \{0,1\}\\x \in \mathcal{X}_{a, a \oplus v \oplus w}}} \sqrt{D_{a, a \oplus v \oplus w}(x)} \cdot  (\ket{v} \ket{\phi_0^v})_0 \otimes (\ket{w} \ket{\phi_1^w})_1 \otimes \ket{v \oplus a}_2 \otimes \ket{x}_3.
\end{align*}
Rearranging the registers and writing $\ket{v \oplus a}$ as $\X^a \ket{v}$, we see that we almost have the desired state as in Equation~\ref{equation:threebit-goalstate}, except we want to uncompute register $3$,
\begin{align*}
    \ket{\chi^{(4)}} \propto & (\Z^{z_0} \otimes \Z^{z_1} \otimes \X^a \otimes \I)
    \cdot \\ & \sum_{\substack{v, w \in \{0,1\}\\x \in \mathcal{X}_{a, a \oplus v \oplus w}}} \sqrt{D_{a, a \oplus v \oplus w}(x)} \cdot  (\ket{v} \otimes \ket{v} \ket{\phi_0^v})_{2, 0} \otimes (\ket{w} \ket{\phi_1^w})_1 \otimes \ket{x}_3.
\end{align*}
Towards this end, we measure register $3$ in the Hadarmard basis. First applying the Hadamard transform on register $3$, we get a state proportional to
\begin{align*}
    & (\Z^{z_0} \otimes \Z^{z_1} \otimes \X^a \otimes \I) \cdot  \\
    & \sum_{\substack{v, w \in \{0,1\}\\x \in \mathcal{X}_{a, a \oplus v \oplus w}}} \sum_{d \in \bar{\mathcal{X}}} (-1)^{d \cdot x} \cdot \sqrt{D_{a, a \oplus v \oplus w}(x)} \cdot (\ket{v} \otimes \ket{v} \ket{\phi_0^v})_{2, 0} \otimes (\ket{w} \ket{\phi_1^w})_1 \otimes \ket{d}_3\;.
\end{align*}
Measuring register $3$, suppose we get outcome $d \in \bar{\mathcal{X}}$. The resulting state is then proportional to
\begin{align*}
   & \ket{\chi^{(5)}}  \propto  \,(\Z^{z_0} \otimes \Z^{z_1} \otimes \X^a \otimes \I) \\
   &\sum_{v, w \in \{0,1\}} \sum_{x \in \mathcal{X}_{a, a \oplus v \oplus w}} (-1)^{d \cdot x} \cdot \sqrt{D_{a, a \oplus v \oplus w}(x)} \cdot (\ket{v} \otimes \ket{v} \ket{\phi_0^v})_{2, 0} \otimes (\ket{w} \ket{\phi_1^w})_1.
\end{align*}
For convenience, let 
\begin{align*}
    \beta_{\nu} := \alpha_{k, y, d} (a, a \oplus \nu) = \sum_{x \in \mathcal{X}_{a, a \oplus \nu}} (-1)^{d \cdot x} \cdot \sqrt{D_{a, a \oplus \nu}(x)} \;, 
\end{align*}
so that we can rewrite the state as
\begin{align*}
    \ket{\chi^{(5)}} \propto (\Z^{z_0} \otimes \Z^{z_1} \otimes \X^a \otimes \I) \sum_{v, w \in \{0,1\}} \beta_{v \oplus w} \cdot (\ket{v} \otimes \ket{v} \ket{\phi_0^v})_{2, 0} \otimes (\ket{w} \ket{\phi_1^w})_1.
\end{align*}
We know that with probability at least $1-\epsilon$ over the measurement outcome $y$, $\beta_0 \in \{ \pm \beta_1 \}$. Let $s \in \{0,1\}$ be such that $\beta_1 = (-1)^s \cdot \beta_0$, we can factor out a global phase of $\beta_0$ and equivalently write the state as follows:
\begin{align*}
    \ket{\chi^{(5)}}  &\propto (\Z^{z_0} \otimes \Z^{z_1} \otimes \X^a \otimes \I) \sum_{v, w \in \{0,1\}} (-1)^{s \cdot (v \oplus w)} (\ket{v} \otimes \ket{v} \ket{\phi_0^v})_{2, 0} \otimes (\ket{w} \ket{\phi_1^w})_1\\
    &= (\Z^{z_0 \oplus s} \otimes \Z^{z_1 \oplus s} \otimes \X^a \otimes \I) \sum_{v, w \in \{0,1\}}  (\ket{v} \otimes \ket{v} \ket{\phi_0^v})_{2, 0} \otimes (\ket{w} \ket{\phi_1^w})_1.
\end{align*}
We can now homomorphically compute encryptions of the new one-time pad keys, since we are also given an encryption of the trapdoor. First, by the property of of efficient partial inversion, we can obtain all possible values of $(b_1, b_2)$ such that $y \in \mathcal{Y}_{b_1, b_2}$, and therefore obtain (an encryption of) the value of $r_2 = a$. Next, by the property of efficient phase computation, we can homomorphically compute $\beta_0, \beta_1$ and therefore $s$, which allows us to compute encryptions of $r_0 = z_0 \oplus s$ and $r_1 = z_1 \oplus s$.
\end{proof}

\subsection{Remotely Preparing the DSS Gadget}\label{section:dss-gadget-rsp-protocol}
We dequantize the DSS client by giving a protocol $\mathsf{GadgetKeyGen}(sk, pk')$ that allows the key generation to be classical, so that it outputs classical ``gadget keys'' $gk(sk)$ that depend on but computationally hide a secret key $sk$.
Using the gadget keys $gk$, the evaluator can then run $\mathsf{RSPGenGadget}$ to prepare a teleportation gadget state $\Gamma_{pk'}(sk)$ that can then be used as in \cite{dulek2016quantum} for the evaluation of $\T$ gates. The gadget key must also include trapdoor information encrypted under an independent public key $pk'$, so that the evaluator can homomorphically update its one-time pad key encryptions.

\subsubsection*{The client's procedure for generating the gadget keys.}
Suppose the length of the secret key is $|sk| = n$ bits. The client uses this procedure to ``encrypt'' the secret key $sk$ (in the form of keys of a ``$4$-to-$2$'' dual-mode trapdoor function family $\mathcal{F}$), which provide instructions to the evaluator to prepare teleportation gadget $\Gamma_{pk'}(sk)$. The procedure is as follows.\\

\noindent
$\underline{\mathsf{GadgetKeyGen}(sk, pk')}$
\begin{enumerate}
    \item For every $i \in [n]$, generate key-trapdoor pairs $k_i, t_i \leftarrow \KeyGen_\mathcal{F}(1^\secp, sk_i)$, giving the $i$th bit of $sk$ as input.
    \item Encrypt the trapdoors $t_i$ using the public key $pk'$, to get ciphertexts $\tilde{t}_i \leftarrow \HE.\Enc_{pk'}(t_i)$ for all $i \in [n]$.
    \item Output the gadget key $gk = (k_i, \tilde{t}_i)_{i=1}^n$.
\end{enumerate}

\subsubsection*{The evaluator's procedure for preparing the DSS gadget.} The evaluator uses the gadget keys generated by the client to prepare the DSS gadget. The procedure works as follows.\\

\noindent
$\underline{\mathsf{RSPGenGadget}(gk, \tilde{sk}, pk', evk')}$
\begin{enumerate}
    \item Parse the gadget key as $gk = (k_i, \tilde{t}_i)_{i=1}^n$.
    \item Prepare $|Q|$ qubit registers, each holding state $\ket{+}$.
    \item Maintain encryptions $\tilde{x}_q, \tilde{z}_q$ of the one-time pad keys for each register $q \in Q_1$ under $pk'$. Initialize all of these to encryptions of $0$.
    \item For every $q \in Q_1$, run the protocol described in Lemma~\ref{lemma:RSP-hidden-bell-pair} on registers $\pi_0(q)$, $\pi_1(q)$ and $q$ in the place of registers labeled $0$, $1$ and $2$ in the order. Let $i(q) \in [n]$ be such that the $i(q)$th bit $sk_{i(q)}$ of $sk$ determines the value of $\pi^{sk}(q)$ Also as input to the protocol to prepare the hidden Bell pair, provide $k_{i(q)}, \tilde{t}_{i(q)}, \tilde{sk}, pk', evk'$ and $\tilde{z}_{\pi_0(q)}$ and $\tilde{z}_{\pi_1(q)}$.
    \item As the output of the hidden Bell pair procedure, we get encryptions of the updated one-time pad keys, so we replace $\tilde{x}_q$, $\tilde{z}_{\pi_0(q)}$ and $\tilde{z}_{\pi_1(q)}$ with the updated keys.
    \item Finally, apply an inverse phase gate $\P^\dagger$ to the qubits in registers $q \in P$. Since $\P^\dagger$ is Clifford, simply update the encrypted one-time pad keys to reflect this change.
\end{enumerate}

\begin{lemma}\label{lemma:DSS-RSP-correctness}
Given a gadget key $gk \leftarrow \mathsf{GadgetKeyGen}(sk, pk')$, public key $pk'$ and evaluation key $evk'$, $\mathsf{RSPGenGadget}$ correctly outputs $\Gamma_{pk'}(sk) = (\gamma_{x, z}(sk) , \tilde{x}, \tilde{z})$, where
\begin{align}\label{equation:gamma-sk}
    \gamma_{x, z}(sk) := \bigotimes_{q \in Q_1} \X^{x_q} \Z^{z_q} (\P^\dagger)^{p_q} \ket{\Phi^+} \bra{\Phi^+}_{q, \pi^{sk}(q)}(\P)^{p_q} \Z^{z_q} \X^{x_q},
\end{align}
where $x, z \in \{0,1\}^{|Q_1|}$ are one-time pad keys and $\tilde{x}$ and $\tilde{z}$ are their encryptions under public key $pk'$.
\end{lemma}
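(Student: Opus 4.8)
The plan is to prove the lemma by induction on the iterations of the main loop of $\mathsf{RSPGenGadget}$, treating each invocation of the hidden-Bell-pair protocol (Lemma~\ref{lemma:RSP-hidden-bell-pair}) as the atomic step. Fix any ordering $q_1, \ldots, q_m$ (with $m := |Q_1|$) in which the loop processes the registers of $Q_1$. The invariant I would maintain is: after the first $j$ iterations, (i) for each processed $q_\ell$ with $\ell \le j$, registers $q_\ell$ and $\pi^{sk}(q_\ell)$ hold a Bell pair $\X^{x_{q_\ell}}\Z^{z_{q_\ell}}\ket{\Phi^+}_{q_\ell, \pi^{sk}(q_\ell)}$ whose one-time-pad keys have encryptions under $pk'$ stored in $\tilde{x}, \tilde{z}$; (ii) every register not yet consumed into a Bell pair is in the state $\Z^{z}\ket{+}$ for a tracked (encrypted) mask $z$; and (iii) all entanglement among the already-formed Bell pairs is preserved. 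The base case $j=0$ is immediate since every register is initialized to $\ket{+}$ with zero masks.

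For the inductive step, consider the iteration processing $q := q_{j+1}$, which runs Lemma~\ref{lemma:RSP-hidden-bell-pair} with the key $k_{i(q)}$ on registers $\pi_0(q), \pi_1(q), q$ in the roles $0, 1, 2$. Since $\mathsf{GadgetKeyGen}$ generated $k_{i(q)}$ in mode $\mu = sk_{i(q)}$, the lemma creates a Bell pair between the role-$2$ register $q$ and the role-$\mu$ register $\pi_\mu(q)$, and by the abstract specification of $\pi$ we have $\pi_{sk_{i(q)}}(q) = \pi^{sk}(q)$; thus the pair is wired between exactly the registers demanded by $\gamma_{x,z}(sk)$. The register $q \in Q_1$ is fresh (in $\ket{+}$) at this point because $Q_1$ and $Q_2$ are disjoint, so a register of $Q_1$ is only ever touched during its own iteration, as the role-$2$ register. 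The unused register $\pi_{1-\mu}(q)$ is returned unchanged up to a new $\Z$ mask, which—together with the general (possibly entangled) form allowed for the role-$0$ and role-$1$ inputs by Lemma~\ref{lemma:RSP-hidden-bell-pair}—is precisely what re-establishes invariants (ii) and (iii).

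The main obstacle, and the reason the hidden-Bell-pair lemma was stated for arbitrary entangled inputs on registers $0$ and $1$, is the bookkeeping when a single register of $Q_2$ participates in several iterations. Concretely, $\pi^{sk}(q)$ may already carry a $\Z$ mask from having served as the unused register of an earlier iteration, and the register $\pi_{1-\mu}(q)$ left unused in the current iteration may itself already be half of a previously-formed Bell pair. In the first case I would absorb the accumulated mask into the role-$\mu$ input $\Z^{z}\ket{+}$ of Lemma~\ref{lemma:RSP-hidden-bell-pair}, whose output folds it into the new pad $z_q$; in the second case the general-input guarantee returns the register, and hence the earlier Bell pair, intact up to an extra $\Z$ mask. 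Both $\Z$ updates can be attributed to the relevant Bell pair's key $z_{q'}$ via the stabilizer identity $(\Z \otimes \Z)\ket{\Phi^+} = \ket{\Phi^+}$, and every update is performed on the encrypted keys using the homomorphic evaluation enabled by the encrypted trapdoors $\tilde{t}_{i(q)}$ and the partial-inversion and phase-computation guarantees invoked inside Lemma~\ref{lemma:RSP-hidden-bell-pair}.

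It remains to combine the pieces. After all $m$ iterations, invariant (i) yields exactly the tensor product $\bigotimes_{q\in Q_1} \X^{x_q}\Z^{z_q}\ket{\Phi^+}\bra{\Phi^+}_{q, \pi^{sk}(q)} \Z^{z_q}\X^{x_q}$, with all keys correctly encrypted under $pk'$. The final step of $\mathsf{RSPGenGadget}$ applies $\P^\dagger$ to each register in $P$ and updates the encrypted pads; since $\P^\dagger$ is Clifford this introduces exactly the $(\P^\dagger)^{p_q}$ conjugations of $\gamma_{x,z}(sk)$, matching Equation~\eqref{equation:gamma-sk}. Finally, since each of the $m$ invocations of Lemma~\ref{lemma:RSP-hidden-bell-pair} succeeds with probability at least $1-\epsilon$, a union bound gives overall correctness with probability at least $1 - m\,\epsilon$, which is $1-\negl(\lambda)$ whenever $\mathcal{F}$ is a (negligible-error) dual-mode trapdoor function family.
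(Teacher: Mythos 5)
Your proof is correct and follows essentially the same route as the paper's: an induction over the iterations of $\mathsf{RSPGenGadget}$ showing that the inputs to the hidden-Bell-pair protocol always have the form required by Lemma~\ref{lemma:RSP-hidden-bell-pair}, with the same case analysis for registers of $Q_2$ that are revisited across iterations. The only detail worth making explicit is why the role-$\mu$ register $\pi^{sk}(q)$ can only carry a $\Z$ mask and cannot already be half of a previously formed Bell pair --- namely, the injectivity of $\pi^{sk}$ --- which is the justification the paper states for that step.
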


\begin{proof}
We argue that throughout the $\mathsf{RSPGenGadget}(gk, pk', evk')$, the inputs to the procedure in Lemma~\ref{lemma:RSP-hidden-bell-pair}, to prepare hidden Bell pairs, are of the form of Equation~\eqref{equation:hidden-bell-pair-input}. As $\mathsf{RSPGenGadget}$ processes the registers $q \in Q_1$ one at a time, it prepares the corresponding component $\ket{\Phi^+} \bra{\Phi^+}_{q, \pi^{sk}(q)}$ of the tensor product state $\gamma_{x,z}(sk)$.
In the beginning, for the first register $q \in Q_1$, since all the registers are in the state $\ket{+}$, the input condition is satisfied, with $\ket{\psi_0} = \ket{\psi_1} = \ket{+}$. Let $\mu = sk_{i(q)}$, where $i(q)\in [n]$ is such that $sk_{i(q)}$ determines the value of $\pi^{sk}(q)$. After running the hidden Bell pair procedure, we obtain the state
\begin{align*}
    (\Z^{r_0} \otimes \Z^{r_1} \otimes \X^{r_2}) \left((\ket{00} + \ket{11})_{2, \mu} \otimes \ket{+}_{1-\mu}\right),
\end{align*}
as desired. For every subsequent $q \in Q_1$, the register $q$ is untouched thus far and therefore holds the $\ket{+}$ state. If $\mu = sk_{i(q)}$ and $q' = \pi^{sk}(q) \in Q_2$, we know that $q'$ cannot be already entangled to another qubit in $Q_1$ since $\pi^{sk}$ is injective. Therefore the state on register $\mu$ is $\Z^{z_\mu} \ket{\psi_\mu}$, where $\ket{\psi_\mu} = \ket{+}$. Further, the state on register $1-\mu$ must be in one of the following two states: either it was previously untouched, in which case it simply holds the state $\ket{+}$, or it is entangled with some other $\bar{q} \in Q_1$, in which case their joint state is of the form
\begin{align*}
    (\Z^{z_{1-\mu}} \otimes \X^{r}) \ket{\psi_{1-\mu}} = (\Z^{z_{1-\mu}} \otimes \X^{r}) \left( \ket{00} + \ket{11} \right)_{1-\mu, \bar{q}}\;,
\end{align*}
for some $z_{1-\mu}, r \in \{0,1\}$. Again, this satisfies the input conditions in Equation~\eqref{equation:hidden-bell-pair-input}. Therefore the procedure in Lemma~\ref{lemma:RSP-hidden-bell-pair} succeeds in correctly preparing the DSS gadget state (up to negligible trace distance error).
\end{proof}

\section{Our QFHE scheme}\label{section:our-scheme}
In this section, we give a quantum (leveled) fully homomorphic encryption scheme, built in a black-box way from compact {\emph classical} fully homomorphic encryption and dual-mode trapdoor functions.

\begin{theorem}\label{theorem:main-theorem}
    Let $\lambda$ be the security parameter. Suppose there exists a dual-mode trapdoor function family $\mathcal{F}$. Suppose further that there exists a leveled fully homomorphic encryption scheme $\mathsf{HE}$ with a decryption algorithm $\HE.\Dec$ in $\mathsf{NC}_1$, that is, $\HE.\Dec$ is computable by a boolean function fan-in $2$ circuit of depth $O(\log(\lambda))$. Then, there exists a compact quantum leveled fully homomorphic encryption scheme $\mathsf{QFHE}$ with a classical key generation procedure.
\end{theorem}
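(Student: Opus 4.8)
The plan is to instantiate the DSS scheme of~\cite{dulek2016quantum} while replacing its quantum client entirely by the \emph{classical} procedures $\mathsf{GadgetKeyGen}$ and $\mathsf{RSPGenGadget}$ from Section~\ref{section:dss-gadget-rsp-protocol}. Concretely, $\QFHE.\KeyGen(1^\lambda, 1^L)$ samples a chain of independent $\HE$ key tuples $(pk_j, evk_j, sk_j)_{j=0}^{L}$, one per $\T$-layer, runs $gk_j \leftarrow \mathsf{GadgetKeyGen}(sk_j, pk_{j+1})$ for each $j<L$, and sets $pk := pk_0$, $sk := sk_L$, and $evk := \big((evk_j)_j, (gk_j)_j, (pk_j)_j\big)$. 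Encryption applies a Pauli one-time pad and appends $\HE.\Enc_{pk_0}$ of the pad keys (for a classical input this is a purely classical ciphertext, as promised); $\Eval$ first runs $\mathsf{RSPGenGadget}$ to reconstruct each layer's gadget $\Gamma_{pk_{j+1}}(sk_j)$ and then follows DSS verbatim, commuting Cliffords through the pad and correcting the $\P$-error of each $\T$-gate by teleporting through the gadget; $\Dec$ runs $\HE.\Dec_{sk_L}$ on the pad keys and removes the pad. Since $\HE.\Dec \in \mathsf{NC}^1$, Barrington's theorem converts it into a polynomial-size branching program, so each gadget has $|Q| = \poly(\lambda)$ wires and every step above is efficient.

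\textbf{Correctness, compactness, and classical key generation.} Passing $\mathcal{F}$ through Lemma~\ref{lemma:dTF-to-4-2-dTF} yields a $4$-to-$2$ dTF with negligible correctness error $\epsilon$, so Lemma~\ref{lemma:DSS-RSP-correctness} guarantees that $\mathsf{RSPGenGadget}$ produces each $\Gamma_{pk_{j+1}}(sk_j)$ within negligible trace distance of the true DSS gadget, except with probability $\epsilon=\negl(\lambda)$ per hidden Bell pair. A union bound and the triangle inequality over the $\poly(\lambda)$ Bell pairs and the $L=\poly(\lambda)$ layers keep the accumulated error negligible, after which the correctness of the DSS evaluation applies as a black box and the output is $\negl(\lambda)$-close to $\Phi_{\mathsf C}(\rho)$. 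Compactness is immediate: the final ciphertext is a Pauli-padded state plus $\HE.\Enc_{pk_L}$ of the pad keys, so $\QFHE.\Dec$ is the fixed circuit ``run $\HE.\Dec_{sk_L}$, then undo the pad,'' which is independent of $\mathsf C$ and inherits the compactness of $\HE$. Finally, $\mathsf{GadgetKeyGen}$ only samples dTF keys and forms classical $\HE$ ciphertexts of the trapdoors, so $\QFHE.\KeyGen$ is entirely classical.

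\textbf{Security, the main obstacle.} Quantum IND-CPA will reduce to the mode indistinguishability of $\mathcal{F}$ together with the (post-quantum) semantic security of $\HE$, and the crux is a hybrid over the key chain whose aim is to show that $evk$ is computationally indistinguishable from an evaluation key that is independent of \emph{every} $sk_j$. Once that is established, the challenge ciphertext lives under $pk_0$ with $sk_0$ unused and the adversary holding none of the secret keys, so semantic security of $\HE$ hides the pad keys and the Pauli one-time pad perfectly hides the plaintext, giving IND-CPA. I would prove the key claim by peeling off the chain from the top downward: at layer $j$ I first invoke semantic security under $pk_{j+1}$ (the adversary has no $sk_{j+1}$) to replace the encrypted trapdoors $\tilde t_{j,i}$ in $gk_j$ by encryptions of $0$, and only then invoke mode indistinguishability to swap each dTF key $k_i(sk_{j,i})$ for $k_i(0)$, rendering $gk_j$ independent of $sk_j$.

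\textbf{Where the difficulty lies.} The ordering of these steps is essential and is the delicate part of the argument. The mode-indistinguishability step at layer $j$ can be run only after its trapdoors have been erased, since otherwise the reduction would need to output $\HE.\Enc_{pk_{j+1}}(t_{j,i})$ for the challenge key $k_i$ whose trapdoor it does not possess; symmetrically, the semantic-security step under $pk_{j+1}$ can be invoked only after layer $j+1$ has already been made $sk_{j+1}$-independent, so that the reduction can simulate $gk_{j+1}$ without knowing $sk_{j+1}$. Discharging these interleaved dependencies layer by layer, and verifying that the per-step advantages sum to $\negl(\lambda)$ over the $L=\poly(\lambda)$ layers and $\poly(\lambda)$ dTF keys, is where essentially all of the work resides; the remaining reductions are routine.
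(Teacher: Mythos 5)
Your proposal matches the paper's proof essentially step for step: the same layered key chain with $\mathsf{GadgetKeyGen}$/$\mathsf{RSPGenGadget}$, correctness via Lemma~\ref{lemma:DSS-RSP-correctness}, and a top-down hybrid that first erases the encrypted trapdoors under $pk_{j+1}$ via semantic security of $\HE$ and only then invokes mode indistinguishability to zero out the gadget keys --- exactly the paper's $\mathsf{S}_{\ell+1}\to\mathsf{S}'_\ell\to\mathsf{S}_\ell$ sequence in Lemma~\ref{lemma:security}. The only cosmetic difference is that the paper terminates the hybrid chain at the Broadbent--Jeffery scheme and cites its quantum IND-CPA security, whereas you argue the final hybrid's security directly from the Pauli one-time pad, which amounts to the same thing.
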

We will now describe our QFHE scheme.

\paragraph{\textbf{Key Generation.}}\label{section:key-generation}
Let $L$ be the maximum $\mathsf{T}$-depth of the circuits we want to homomorphically evaluate. Then, the client generates the keys as follows.\\

\noindent
$\underline{\mathsf{QFHE}.\mathsf{KeyGen}(1^\lambda, 1^L)}$
\begin{enumerate}
    \item For $i = 0$ to $L$, independently generate key $(pk_i, sk_i, evk_i) \leftarrow \HE.\KeyGen(1^\secp)$.
    \item Set the public key as $pk = (pk_i)_{i=0}^L$.
    \item Set the secret key as $sk = (sk_i)_{i=0}^L$.
    \item For $i = 0$ to $L-1$, compute $gk_i
    \leftarrow \mathsf{GadgetKeyGen}(sk_i, pk_{i+1})$. Compute an encryption of $sk_i$ under $pk_{i+1}$. Set the evaluation key $evk$ to be
    \begin{align*}
        \Bigl( gk_i,
        \HE.\Enc_{pk_{i+1}}(sk_i)\Bigr)_{i=0}^{L-1} , (evk_i)_{i=0}^{L}
    \end{align*}
\end{enumerate}

\paragraph{\textbf{Encryption.}}
The encryption procedure is identical to that of \cite{dulek2016quantum}. For completeness, we descibe it here.\\

\noindent
$\underline{\mathsf{QFHE}.\mathsf{Enc}_{pk}(\ket{\psi})}$
\begin{enumerate}
    \item Let $\ket{\psi}$ be an $n$-qubit input state. Sample uniformly random keys $x, z \leftarrow \{0, 1\}^n$. Parse the public key as $pk = (pk_i)_{i=0}^L$.
    \item Output the quantum one-time-padded state $\X^x \Z^z \ket{\psi}$ along with encryptions of $x, z$ under the first public key $pk_0$.
\end{enumerate}
Note that if the input is classical, $\ket{\psi}$ is just a standard basis state and the encryption is also classical.

\paragraph{\textbf{Circuit Evaluation.}}\label{section:circuit-evaluation}
The circuit evaluation procedure is almost the same as that of \cite{dulek2016quantum}, except that the evaluator needs to construct the teleportation gadget each time for each $\mathsf{T}$ gate it needs to perform. For completeness, we provide a description of the entire circuit evaluation procedure as in \cite{dulek2016quantum} with this modification. Without loss of generality, the quantum circuit $C$ we want to evaluate is written using a universal gate set consisting of the Clifford group generators $\{\mathsf{H}, \mathsf{P}, \mathsf{CNOT}\}$ and the (non-Clifford) $\mathsf{T}$ gate. Note that while the Clifford gates might affect multiple qubits at the same time, the $\T$ gates only affect a single qubit.
Before evaluation of a gate $U$, the encryption of the $n$-qubit state $\rho$ is of the form
\begin{align*}
    X^x Z^z \rho X^x Z^z = (\X^{x_1} \Z^{z_1} \otimes \cdots \otimes \X^{x_n} \Z^{z_n}) \rho (\X^{x_1} \Z^{z_1} \otimes \cdots \otimes \X^{x_n} \Z^{z_n})^\dagger.
\end{align*}
The evaluating party also holds encryptions $\tilde{x}_1^{[i]}, \ldots, \tilde{x}_n^{[i]}$ and $\tilde{z}_1^{[i]}, \ldots, \tilde{z}_n^{[i]}$ of the one-time pad keys, with respect to the $i$th key set $pk_i$ for some $i$ (initially, $i=0$). The goal is to produce an encrypted state $(\X^{x'}\Z^{z'}) U \rho U^\dagger (\X^{x'}\Z^{z'})^\dagger$, along with encryptions of the new keys $x', z'$ of the quantum one-time pad.
If $U$ is a Clifford gate, these encryptions will remain in the $i$th key set, but if $U$ is a $\T$ gate, all the encryptions are transferred to the $(i+1)$th key set.\\

\noindent
$\underline{\QFHE.\Eval(C, \rho, \tilde{x}^{[0]}, \tilde{z}^{[0]}, evk)}$
\begin{itemize}
    \item If $U$ is a Clifford gate, the evaluator simply applies $U$ to the encrypted qubit. Since $U$ commutes with the Pauli group, the evaluator only needs to update the encrypted keys in a straightforward way.
    \item If $U = \T$, the evaluator applies the $\T$ gate to the appropriate wire $w$. Supposing the qubit at wire $w$ starts off in the state $(\X^{x_w} \Z^{z_w}) \rho_w (\X^{x_w} \Z^{z_w})^\dagger$, after the $\T$ gate is applied, it is now in the state
    \begin{align*}
        (\P^{x_w} \X^{x_w} \Z^{z_w} \T) \rho_w (\P^{x_w} \X^{x_w} \Z^{z_w} \T)^\dagger.
    \end{align*}
    At this point, to correct the $\P^{x_w}$ error, Dulek et al.~\cite{dulek2016quantum} use the quantum gadget $\Gamma_{pk_{i+1}}(sk_i)$ that is given as input to the evaluator. In our scheme, the evaluator has to construct the quantum gadget itself from the classical evaluation key, using the evaluation key $gk_i, \tilde{t}^{[i+1]}_i, 
    \tilde{sk}^{[i+1]}_i, pk_{i+1}, evk_{i+1}$,
    \begin{align*}
        \Gamma_{pk_{i+1}}(sk_i) \leftarrow \mathsf{RSPGenGadget}
    \Bigl(gk_i, \tilde{t}^{[i+1]}_i, pk_{i+1}, evk_{i+1}\Bigr).
    \end{align*}
    The $\mathsf{RSPGenGadget}$ protocol is described later in this section. Now, as in DSS, the evaluator performs a sequence of measurements as specified by the function $\nu(\tilde{x}_w^{[i]}, \cdot)$. This has the effect of the qubit at wire $w$ and 
    \begin{align*}
        (\X^{x'_w} \Z^{z'_w}) \T \rho \T^\dagger (\X^{x'_w} \Z^{z'_w})^\dagger.
    \end{align*}
    The evaluator also recrypts $\tilde{x}^{[i]}_w$ and $\tilde{y}_w^{[i]}$ under the $(i+1)$th key, to get $\tilde{x}^{[i+1]}_w$ and $\tilde{y}_w^{[i+1]}$. Using the measurement outcomes, the classical information part of the gadget $\Gamma_{pk_{i+1}}(sk_i)$ and the recryptions $\tilde{x}^{[i+1]}_w$, $\tilde{y}_w^{[i+1]}$ the evaluator homomorphically computes encryptions of the new keys $(\tilde{x}'_w)^{[i+1]}$, $(\tilde{y}'_w)^{[i+1]}$. After these computations, the evaluator also recrypts the keys of all the wires into the $(i+1)$th key set.
\end{itemize}

\paragraph{\textbf{Decryption.}}
As input, we are given the state $\ket{\psi'}$, which is hopefully close to the state $\X^x \Z^z C\ket{\psi}$, where $C$ is the circuit and $C\ket{\psi}$ is the desired output state, except it is quantum one-time-padded with keys $x, z \in \{0, 1\}^n$. We are also given encryptions $\tilde{x}, \tilde{z}$ of the one-time-pad keys $x, z$ under $pk_L$.\\

\noindent
$\underline{\QFHE.\Dec_{sk}(\ket{\psi'}, \tilde{x}, \tilde{z})}$
\begin{enumerate}
    \item Decrypt encryptions of the one-time-pad keys using $sk_L$ to get $x, z$.
    \item Apply the Pauli correction $\X^x \Z^z$ to $\ket{\psi'}$ and output the resulting state.
\end{enumerate}

\subsection{Proof of Theorem~\ref{theorem:main-theorem}}

We first prove the quantum IND-CPA security of our scheme by constructing a sequence of hybrid schemes that transforms our scheme to the scheme by Broadbent and Jeffery~\cite{broadbent2015quantum}, which we will denote as $\mathsf{BJ}$.

\begin{lemma}\label{lemma:security}
Suppose that $\HE$ is a IND-CPA-secure classical homomorphic encryption scheme. Then, $\QFHE$ is a quantum IND-CPA secure quantum homomorphic encryption scheme.
\end{lemma}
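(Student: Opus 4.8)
The plan is to prove quantum IND-CPA security via a hybrid argument that gradually transforms our scheme $\QFHE$ into the Broadbent-Jeffery scheme $\mathsf{BJ}$, whose security is already known to follow from the IND-CPA security of the underlying classical $\HE$. The key observation is that the quantum part of our encryption procedure is \emph{identical} to that of $\mathsf{BJ}$ and $\mathsf{DSS}$: the ciphertext of a state $\ket{\psi}$ is just the quantum one-time pad $\X^x\Z^z\ket{\psi}$ together with classical $\HE$-encryptions of $(x,z)$. Since the quantum one-time pad perfectly hides $\ket{\psi}$ given that $x,z$ are uniformly random and hidden, the only thing an adversary could exploit is information leaked through the evaluation key $evk$. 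Thus the heart of the argument is showing that the adversary's view of $evk$ in our scheme is computationally indistinguishable from its view in $\mathsf{BJ}$.

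First I would recall that the adversary $\mathcal{A}=(\mathcal{A}_1,\mathcal{A}_2)$ receives $(pk,evk)$ before producing its challenge state, and that $pk=(pk_i)_{i=0}^L$ consists of independent $\HE$ public keys, while $evk$ consists of the gadget keys $gk_i$, the encryptions $\HE.\Enc_{pk_{i+1}}(sk_i)$, and the $evk_i$. I would then define the hybrids by replacing, one key-level $i$ at a time, the components that depend on the true secret key $sk_i$ with components that carry no information about $sk_i$. Concretely, recall that $gk_i = (k_j, \tilde{t}_j)_{j}$ where each $k_j \leftarrow \KeyGen_{\mathcal{F}}(1^\lambda, (sk_i)_j)$ is generated in the mode dictated by the $j$th bit of $sk_i$. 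By the \textbf{mode indistinguishability} property of the dual-mode trapdoor function family (and hence, via Lemma~\ref{lemma:dTF-to-4-2-dTF}, of the $4$-to-$2$ dTF family), a key generated in mode $0$ is computationally indistinguishable from a key generated in mode $1$. I would use this to switch each $k_j$ to a key generated in a fixed mode (say mode $0$) independent of $sk_i$, via a sub-hybrid over the bits $j$, appealing to mode indistinguishability at each step. Simultaneously, the ciphertexts $\HE.\Enc_{pk_{i+1}}(sk_i)$ and the encrypted trapdoors $\tilde{t}_j \leftarrow \HE.\Enc_{pk_{i+1}}(t_j)$ are encrypted under $pk_{i+1}$, whose secret key $sk_{i+1}$ is never needed to simulate the adversary's view; hence by IND-CPA security of $\HE$ under $pk_{i+1}$, I can replace these by encryptions of $0$ (or arbitrary fixed strings) without detection.

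The hybrids should be ordered carefully, proceeding from the top level $i=L$ downward, so that when I am modifying the components tied to $sk_i$ — which are encrypted under $pk_{i+1}$ — the reduction to IND-CPA of $\HE$ under $pk_{i+1}$ can embed the challenge public key and simulate everything else (including the correctly-distributed components at other levels) using freshly generated keys and trapdoors. After all levels have been processed, the evaluation key no longer depends on any $sk_i$ beyond what is already present in $\mathsf{BJ}$, and the resulting scheme is exactly (or statistically close to) $\mathsf{BJ}$ by construction. I would then invoke the known quantum IND-CPA security of $\mathsf{BJ}$ from \cite{broadbent2015quantum} to conclude that no efficient adversary wins in the final hybrid with non-negligible advantage, and chain the indistinguishabilities back to the original game. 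The main obstacle I anticipate is handling the \emph{ordering and independence} of the key levels correctly: because the gadget key $gk_i$ encodes $sk_i$ while its trapdoors are encrypted under $pk_{i+1}$, one must ensure that in each reduction the relevant secret key is genuinely not needed to produce the rest of the view, and that the mode-indistinguishability and $\HE$-IND-CPA reductions do not circularly depend on a secret key that the reduction is trying to protect. Making this precise — essentially verifying that the evaluation key can be sampled without $sk_{i+1}$ whenever we are rewriting the $sk_i$-dependent parts — is the delicate bookkeeping step on which the whole argument rests.
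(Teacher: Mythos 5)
Your proposal is correct and follows essentially the same route as the paper: a telescoping hybrid argument from $\mathsf{QFHE}$ down to $\mathsf{BJ}$, alternating between (i) using IND-CPA of $\HE$ under $pk_{i+1}$ to zero out the encrypted trapdoors (and encrypted $sk_i$) and (ii) using mode indistinguishability of the dTF family to switch the gadget keys to ones encoding the all-zero string, in exactly that order so that neither reduction needs a secret key it is trying to protect. The ordering/bookkeeping concern you flag is precisely the reason the paper interposes the intermediate hybrids $\mathsf{S}'_\ell$ between $\mathsf{S}_{\ell+1}$ and $\mathsf{S}_\ell$.
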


\begin{proof}
Consider the following sequence of hybrid variations of our scheme. For all $\ell \in \{0,1,\ldots, L\}$, let $\mathsf{S}_\ell$ be identical to our scheme except for the key generation procedure: for every $i \ge \ell$, $\mathsf{S}_\ell.\mathsf{KeyGen}$ replaces the output $gk_i(sk_i)$, $\mathsf{HE}.\mathsf{Enc}_{pk_{i+1}}(t_{gk_i})$ of the subroutine call $\mathsf{GadgetKeyGen}(sk_i, pk_{i+1})$ as follows. The the gadget key $gk_i(sk_i)$ encoding the $i$th secret key is replaced with a gadget key $gk(0^{|sk_i|})$ encoding the all-zero string obtained by $$gk(0^{|sk_i|}), \tilde{t} \leftarrow \mathsf{GadgetKeyGen(0^{|sk_i|}, pk_{i+1})}~.$$ The encryption of $t_{gk_{i}}$ under the public key $pk_{i+1}$ is replaced with an encryption $\mathsf{HE}.\mathsf{Enc}_{pk_{i+1}}(0^{|t_{gk_{i}}|})$ of the all-zero string $0^{|t_{gk_{i}}|}$ under $pk_{i+1}$.

Also, for every $\ell \in \{0,1 \ldots L-1\}$, define scheme $\mathsf{S}'_\ell$ be identical to $\mathsf{S}_{\ell+1}$, except that the encryption $\mathsf{HE}.\mathsf{Enc}_{pk_{i+1}}(t_{gk_i})$ of the $i$th trapdoor under $pk_{i+1}$ is replaced with an encryption of the all-zero string $0^{|t_{gk_{i}}|}$ under the same public key.

First, we claim that for any adversary $\mathcal{A}$, for all $\ell \in \{0,1,\ldots,L-1\}$, there exists a negligible function $\negl$ such that
\begin{align*}
    \Pr[\QINDCPA_{\langle \mathsf{S}_{\ell + 1}, \mathcal{A} \rangle}] - \Pr[\QINDCPA_{\langle \mathsf{S}'_\ell, \mathcal{A}\rangle}] \le \negl(\lambda).
\end{align*}
Otherwise, we can construct an adversary $\mathcal{A}'_{\mathsf{HE}}$ that breaks the IND-CPA security of the classical $\mathsf{HE}$ scheme by simulating the $\QINDCPA_{\langle \rangle}$ game for $\mathcal{A}$.

Similarly, for any adversary $\mathcal{A}$, it must be the case that for all $\ell \in \{0,1,\ldots, L-1\}$, there exists a negligible function $\negl$ such that
\begin{align*}
    \Pr[\QINDCPA_{\langle \mathsf{S}'_\ell, \mathcal{A}\rangle}] - \Pr[\QINDCPA_{\langle \mathsf{S}_{\ell}, \mathcal{A} \rangle}] \le \negl(\lambda).
\end{align*}
Otherwise, we can construct an adversary $\mathcal{A}'_{\mathcal{F}}$ that breaks the mode indistinguishability property of the dual-mode trapdoor function family $\mathcal{F}$ by simulating $\QINDCPA$ game for $\mathcal{A}$.

Note that $\mathsf{S}_L$ is exactly our scheme $\mathsf{QFHE}$, and $\mathsf{S}_0$ gives the adversary no extra information compared to the $\mathsf{BJ}$ scheme, so that for all adversaries $\mathcal{A}$,
\begin{align*}
    \Pr[\QINDCPA_{\langle\mathsf{S}_L, \mathcal{A} \rangle}(1^\lambda)] &= \Pr[\QINDCPA_{\langle\mathsf{QFHE}, \mathcal{A} \rangle}(1^\lambda)], \text{ and}\\
    \Pr[\QINDCPA_{\langle\mathsf{S}_0, \mathcal{A} \rangle}(1^\lambda)] &\le \Pr[\QINDCPA_{\langle \mathsf{BJ} , \mathcal{A} \rangle}(1^\lambda)]
\end{align*}
\noindent
Therefore,
\begin{align}\label{equation:qfhe-bj}
    \Pr[\QINDCPA_{\langle\mathsf{QFHE}, \mathcal{A} \rangle}(1^\lambda)] &- \Pr[\QINDCPA_{\langle \mathsf{BJ} , \mathcal{A} \rangle}(1^\lambda)] \le \Pr[\QINDCPA_{\langle\mathsf{S}_L, \mathcal{A} \rangle}(1^\lambda)] - \Pr[\QINDCPA_{\langle\mathsf{S}_0, \mathcal{A} \rangle}(1^\lambda)]
\end{align}
We can write the right-hand side as a telescoping series
\begin{align*}
     \Pr[\QINDCPA_{\langle\mathsf{S}_L, \mathcal{A} \rangle}(1^\lambda)] - \Pr[\QINDCPA_{\langle\mathsf{S}_0, \mathcal{A} \rangle}(1^\lambda)]
     = \sum_{\ell = 0}^{L-1} &\left(\Pr[\QINDCPA_{\langle\mathsf{S}_{\ell + 1}, \mathcal{A} \rangle}(1^\lambda)] - \Pr[\QINDCPA_{\langle\mathsf{S}'_{\ell}, \mathcal{A} \rangle}(1^\lambda)] \right) \\ 
    & + \left(\Pr[\QINDCPA_{\langle\mathsf{S}'_{\ell}, \mathcal{A} \rangle}(1^\lambda)] - \Pr[\QINDCPA_{\langle\mathsf{S}_{\ell}, \mathcal{A} \rangle}(1^\lambda)]\right)\\
    &\le 2 L \cdot \negl(\lambda) \le \negl'(\lambda),
\end{align*}
for some other negligible function $\negl'(\cdot)$.
By Equation~\eqref{equation:qfhe-bj}, we know that there exists a neglible function $\negl'$ such that
\begin{align*}
    \Pr[\QINDCPA_{\langle\mathsf{QFHE}, \mathcal{A} \rangle}(1^\lambda)] - \Pr[\QINDCPA_{\langle \mathsf{BJ} , \mathcal{A} \rangle}(1^\lambda)] \le \negl'(\lambda).
\end{align*}
By the quantum IND-CPA security of the $\mathsf{BJ}$ scheme, we get that no quantum polynomial-time adversaries $\mathcal{A}$ can win the quantum IND-CPA game $\QINDCPA_{\langle \mathsf{QFHE}, \cdot \rangle}$ with respect to our scheme $\mathsf{QFHE}$ with more than negligible advantage.
\end{proof}

\begin{proof}[Proof of Theorem~\ref{theorem:main-theorem}]
Lemma~\ref{lemma:security} shows that the scheme $\QFHE$ is quantum IND-CPA secure. Lemma~\ref{lemma:DSS-RSP-correctness} shows the correctness of the protocol for preparing the DSS gadget on the evaluator side. Therefore, since the DSS scheme is correct and leveled fully homomorphic, so is ours. Finally, since our decryption scheme (which is identical to that of DSS) does not depend on the evaluated circuit, $\QFHE$ is compact.
\end{proof}

\section{Instantiations}\label{section:instantiations}
Our scheme in Section~\ref{section:our-scheme} allows us to build QFHE from generic classical FHE and dual-mode trapdoor functions. We now list the known instantiations of these primitives from various (concrete) cryptographic assumptions.

\ifnum\llncs=0
\subsection{Instantiating post-quantum classical (leveled) FHE}
\fi

\paragraph{Learning With Errors.}
The Learning With Errors assumption~\cite{regev2009lattices} can be used to contruct both (leveled) fully homomorphic encryption as well as dual-mode trapdoor functions.
Extensive research has been done on constructing FHE from LWE, culminating in the work of Brakerski and Vaikuntanathan~\cite{brakerski2014efficient} that builds FHE from the standard LWE assumption with polynomial modulus-to-noise ratio.

\ifnum\llncs=1
Mahadev~\cite{mahadev2018classical} constructs a dual-mode noisy trapdoor claw-free function family assuming the hardness of Learning With Errors (with superpolynomial modulus). This construction also immediately gives a dual-mode trapdoor function family from the hardness of LWE for superpolynomial modulus-to-noise ratio. In the full version, we give a description of this construction.

We also observe that we can also obtain (a $q$-ary generalization of) dual-mode trapdoor functions from the quantum hardness of LWE with a polynomial modulus-to-noise ratio from the work of Brakerski~\cite{DBLP:conf/crypto/Brakerski18}, which builds a QFHE scheme from the same assumption. The full version of this paper contains a brief description of the construction.
\fi

\paragraph{Ring Learning With Errors.}
Brakerski, Gentry and Vaikuntanathan also give an FHE scheme from Ring-LWE~\cite{DBLP:journals/toct/BrakerskiGV14}, which can be directly imported into our QFHE scheme.
\ifnum\llncs=1
The dual-mode trapdoor function family from LWE of \cite{mahadev2018classical} can be directly translated to one from Ring-LWE. We formally describe such a construction from ring-LWE in the full version.
\fi

\paragraph*{Post-Quantum FHE from Post-quantum Indistinguishability Obfuscation + perfectly re-randomizable encryption.}
The work of Canetti, Lin, Tessaro and Vaikuntanathan~\cite{canetti2015obfuscation} obtains leveled classical FHE from subexponentially-secure IO and perfectly re-rerandomizable encryption. Their scheme is post-quantum as long as it is instantiated with subexponentially-secure post-quantum IO and post-quantum secure perfectly re-randomizable encryption. Although we have candidates for post-quantum IO~\cite{gentry2015graph, bartusek2018return, brakerski2020factoring, wee2021candidate}, we do not have proofs of their post-quantum security under widely-believed assumptions. On the other hand, by an observation of Wichs~\cite{Wichs}, there exists an El Gamal-style perfectly rerandomizable encryption scheme based on group actions.

\ifnum\llncs=0
\subsection{Instantiating dTLFs}
\fi

\paragraph*{Post-Quantum dTF from Isogeny-based group action assumptions.}
Isogeny-based cryptography is a promising alternative to lattices for post-quantum cryptography. Initiated by Couveignes~\cite{couveignes2006hard} in 1997, it has led to new constructions of several cryptographic primitives such as collision-resistant hashing~\cite{charles2009cryptographic}, key exchange~\cite{rostovtsev2006public, stolbunov2010constructing}, digital signatures~\cite{stolbunov2009reductionist} and key escrow~\cite{teske2006elliptic}. More recent work has led to practical key exchange and public key encryption schemes, such as SIDH~\cite{jao2011towards, de2014towards} and CSIDH~\cite{castryck2018csidh}. Although the attack by Castryck and Decru~\cite{castryck2023efficient} (classically) breaks SIDH, CSIDH is still plausibly (post-quantum) secure.

Alamati, De Feo, Montgomery and Patranabis~\cite{alamati2020cryptographic} present a framework based on group actions with natural cryptographic hardness properties to model isogeny-based assumptions and schemes.\footnote{An important caveat is that the isogeny-based group actions have less-than-ideal algorithmic properties; for example, it is not always possible to efficiently compute the group action for any group element. One approach to fix this issue, taken by CSI-FiSh~\cite{beullens2019csi}, is to perform a preprocessing step and compute the group structure in the form of relation lattice of low-norm generators.} This simplifying framework enables them to construct several new primitives, such as projective hashing, dual-mode PKE, two-message statistically sender-private OT, and a Naor-Reingold style PRF. They also introduce the Linear Hidden Shift assumption, based on which they construct symmetric KDM-secure encryption. Follow-up work of Alamati, Malavolta and Rahimi~\cite{alamati2022candidate} introduces the \textit{Extended} Linear Hidden Shift assumption, and gives a construction of a (weak) trapdoor claw-free function from group actions under this assumption.

In Section~\ref{sec:groupactions}, we extend their work to construct dual-mode trapdoor functions (with negligible correctness error) from the same assumption.

\ifnum\llncs=0
\subsubsection*{Learning With Errors (Superpolynomial modulus-to-noise ratio)}
Mahadev~\cite{mahadev2018classical} constructs a dual-mode noisy trapdoor claw-free function family assuming the hardness of Learning With Errors (although with superpolynomial modulus). We slightly adapt this construction to match our definition of dTFs. Assuming the hardness of LWE for superpolynomial modulus-to-noise ratio, we obtain a dTF family.
\begin{lemma}
Let $\lambda$ be a security parameter. Let $q \ge 2$ be a prime. Let $n, m\ge 1$ be polynomially bounded functions of $\lambda$ such that $m = \Omega(n \log q)$. Suppose that $\LWE_{m, n, q, \rho_{\mathbb{Z}_{q}, \sigma}}$ is hard for superpolynomial modulus-to-noise ratio $q/\sigma$. Then, there exists an injective and invertible dual-mode trapdoor function family.
\end{lemma}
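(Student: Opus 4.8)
The plan is to adapt Mahadev's LWE-based noisy trapdoor claw-free function (NTCF) construction to match Definition~\ref{definition:weak-lossy-function}, verifying each of the six required properties together with the injectivity and invertibility conditions of Definition~\ref{definition:injective-and-invertible}. The underlying object is the standard LWE function family: given a matrix $\matA \in \mathbb{Z}_q^{m \times n}$ and a target vector, one defines $f_{k,b}(\vecs, \vece) = \matA \vecs + b \cdot \vecv + \vece$ for appropriately chosen $\vecv$, where the distribution $D_{k,b}$ over inputs is (a discretization of) a truncated discrete Gaussian on the noise term $\vece$. The key $k$ is $(\matA, \matA \vecs_0 + \vecv)$ in one mode and a uniform pair in the other, so that mode indistinguishability is precisely the decision-LWE assumption.

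\textbf{Key steps.} First I would instantiate $\mathsf{Gen}_\mathcal{F}$ using $\mathsf{GenTrap}$ from Theorem~\ref{theorem:mp12}: in the disjoint mode $\mu = 0$ the two functions $f_{k,0}, f_{k,1}$ are centered at lattice cosets separated far enough that their (truncated-Gaussian) supports do not collide, giving disjoint images with probability $1$; in the lossy mode $\mu = 1$ the key is set up (using an LWE sample as the shift $\vecv$) so that $f_{k,0}$ and $f_{k,1}$ have essentially the same image. Second, efficient function evaluation is immediate, and efficient state preparation of $\sum_x \sqrt{D_{k,b}(x)}\ket{x}$ follows from standard techniques for preparing coherent (truncated) discrete Gaussian states. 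Third, for efficient partial inversion and phase computation I would invoke the $\mathsf{Invert}$ algorithm of Theorem~\ref{theorem:mp12}: given the trapdoor $t_\matA$ and $y$, one recovers the coset and hence the set of valid $b$, and recovers the unique preimage $x_{b}$, from which $\alpha_{k,y,d}(b) = \sum_x (-1)^{d\cdot x}\sqrt{D_{k,b}(x)}$ can be computed in closed form because $D_{k,b}$ is a shifted copy of a known Gaussian. Injectivity on the support and existence of $\mathsf{Inv}_\mathcal{F}$ likewise follow from $\mathsf{Invert}$ together with the bound $\|\vece\|_2 \le q/(C\sqrt{n\log q})$.

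\textbf{The main obstacle} will be verifying the lossy-mode phase condition (property~\ref{item:lossy-mode}): that with probability $1-\epsilon$ there exists $s$ with $\alpha_{k,y,d}(0) = (-1)^s \alpha_{k,y,d}(1)$ for \emph{all} $d$. This reduces to showing that the two preimage distributions $D_{k,0}, D_{k,1}$ restricted to a fixed $y$ are, as normalized amplitude vectors, equal up to sign. Here the Hellinger-distance machinery (Lemma~\ref{lemma:hellinger-shifted-discrete-gaussian}) is the crucial tool: the two branch states are coherent truncated-Gaussian states whose centers differ by a short vector, so their Hellinger distance is bounded by $O(\sqrt{m}\,\|\vece\|/B)$, which is negligible precisely when $q/\sigma$ is superpolynomial (this is why the superpolynomial modulus is needed). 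I would argue that the $\alpha_0, \alpha_1$ agree up to sign except on a set of $y$ of measure $\epsilon = \negl(\lambda)$, using Lemma~\ref{lemma:hellinger-trace-distance} to convert the Hellinger bound into a statement about the amplitudes. Since this yields negligible $\epsilon$, the resulting family is a genuine (not merely $\epsilon$-weak) injective and invertible dTF, so no appeal to the amplification lemma is required.
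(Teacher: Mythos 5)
Your construction coincides with the paper's (the $\mathsf{GenTrap}$-based function $f_{k,b}(\vecx,\vece)=\matA\vecx+b\cdot\vecb+\vece$ with $\vecb$ uniform in the disjoint mode and an LWE sample $\matA\vecs+\bar{\vece}$ in the lossy mode), and the disjoint-mode, inversion, injectivity and mode-indistinguishability arguments go through as you describe. The genuine gap is in your treatment of the lossy-mode amplitude condition. That condition demands an \emph{exact} identity $\alpha_{k,y,d}(0)=(-1)^s\,\alpha_{k,y,d}(1)$ for all but an $\epsilon$-fraction of $y$, and negligible Hellinger distance between $D_{k,0}$ and $D_{k,1}$ cannot deliver it: Hellinger distance controls an aggregate $\ell_2$ discrepancy of the amplitude vectors over $y$, which is perfectly compatible with the two amplitudes being slightly but strictly unequal in magnitude for \emph{every} $y$. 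Concretely, if both branches use the same unshifted truncated Gaussian, the unique preimages of a given $y$ have noise components $\vece_0$ and $\vece_0-\bar{\vece}$, so $|\alpha_0|/|\alpha_1|=\sqrt{\rho_{\mathbb{Z}_q^m,B_1}(\vece_0)/\rho_{\mathbb{Z}_q^m,B_1}(\vece_0-\bar{\vece})}\neq 1$ for essentially every $y$; the failure probability is then close to $1$, not $\negl(\lambda)$, and no conversion via Lemma~\ref{lemma:hellinger-trace-distance} rescues an exact statement from an approximate one.

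The repair, which is what the paper does, is to \emph{define} $D_{k,1}$ in the lossy mode as the truncated Gaussian shifted by the LWE error $\bar{\vece}$, so that the bijection between preimages carries $D_{k,0}(\vecx_0,\vece_0)$ exactly onto $D_{k,1}(\vecx_0-\vecs,\vece_0-\bar{\vece})$; the amplitude condition then holds exactly (for every $y$ in the support), with no Hellinger argument needed. The Hellinger machinery is instead spent on \emph{efficient state preparation}: since $\bar{\vece}$ is not computable from the key alone, the evaluator cannot prepare $\sum_x\sqrt{D_{k,1}(x)}\ket{x}$ directly, so it prepares the unshifted Gaussian state, and Lemmas~\ref{lemma:hellinger-shifted-discrete-gaussian} and~\ref{lemma:hellinger-trace-distance} show this is negligibly close in trace distance to the required state precisely because $B_1/B_2$ (equivalently the modulus-to-noise ratio) is superpolynomial. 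Your proposal asserts that state preparation ``follows from standard techniques'' without noticing this tension, and places the superpolynomial-modulus hypothesis on the wrong property. Your closing observation that no amplification is needed is correct once the argument is arranged this way.
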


\begin{proof} Let $B^* = q/(C\sqrt{m n \log q})$, where $C > 0$ is the universal constant as in Theorem~\ref{theorem:mp12}. Let $B_1 = B^*/2$ and let $B_2 < B_1$ such that the ratio $B_1 / B_2$ is superpolynomial in the security parameter $\lambda$.

\paragraph{Key Generation.} The key $k = (\matA, \vecb) \in \mathbb{Z}_q^{m \times n} \times \mathbb{Z}_q^{m}$ and trapdoor $t_k$ is generated as follows. First $\mat{A} \in \mathbb{Z}_q^{m \times n}$ is generated along with a trapdoor $t_k = t_{\mat{A}}$ as in Theorem~\ref{theorem:mp12}~\cite{micciancio2012trapdoors}. In mode $\mu = 0$, $\vecb \leftarrow \mathbb{Z}_q^m$ is sampled uniformly at random. In mode mode $\mu = 1$, $\vec{s} \leftarrow \mathbb{Z}_q^n$ is sampled uniformly, and $\bar{\vece} \leftarrow D_{\mathbb{Z}_q^m, B_2}$ and $\vecb = \matA \vecs + \bar{\vece}$.

\paragraph{Efficient Function Evaluation.} The function $f_{k, b} : \mathbb{Z}_q^n \times \in \mathbb{Z}_q^m \rightarrow \mathbb{Z}_q^m$ takes as input $\vecx \in \mathbb{Z}_q^m$ and $\vece \in \mathbb{Z}_q^m$ such that $\|\vece\| \le B^* \sqrt{m}$, and is defined as $f_{k, b}(\vecx, \vece) := \matA \vecx + b \cdot \vecb + \vece$, which is efficiently computable given the key $k = (\matA, \vecb)$.

\paragraph{Efficient state preparation.} Let $D = U(\mathbb{Z}^n_q) \otimes \rho_{\mathbb{Z}_q^m, B_1}$ be the product distribution of the uniform distribution on $\mathbb{Z}^n_q$ and the truncated discrete Gaussian distribution $\rho_{\mathbb{Z}_q^m, B_1}$. Let $D' = U(\mathbb{Z}^n_q) \otimes (\rho_{\mathbb{Z}_q^m, B_1} + \bar{\vece})$, where $\rho_{\mathbb{Z}_q^m, B_1} + \bar{\vece}$ is the {\em shifted} truncated discrete Gaussian distribution. By Lemma~3.12 of \cite{regev2009lattices}, there is a quantum polynomial-time procedure that prepares the state 
\begin{align*}
\sum_{\vecx \in \mathbb{Z}_q^n} \sum_{\vece \in \mathbb{Z}_q^m} \sqrt{D(\vecx, \vece)} \ket{\vecx, \vece} = \sum_{\vecx \in \mathbb{Z}_q^n} \sum_{\vece \in \mathbb{Z}_q^m} \frac{1}{q^{-n/2}} \cdot \sqrt{\rho_{\mathbb{Z}_q^m, B_1}(\vece)} \ket{\vecx, \vece}\;,
\end{align*}

In mode $\mu = 0$, define $D_{k, 0} = D_{k, 1} = D$. In mode $\mu = 1$, for key $k = (\matA, \vecb = \matA \vecs + \bar{\vece})$, let $D_{k, 0} = D$ and let $D_{k, 1} = D'$. Although $\bar{\vece}$ is not efficiently computable without the trapdoor, and therefore we cannot work with $D'$ exactly, $D$ is very close to $D'$, and that will be good enough. Indeed since $\|\bar{\vece}\| \le B_2 \sqrt{m}$, by Lemma~\ref{lemma:hellinger-shifted-discrete-gaussian} and the parameter setting, $H(D, D') \le \negl(\lambda)$, and by Lemma~\ref{lemma:hellinger-trace-distance}, $\sum_{\vecx \in \mathbb{Z}_q^n} \sum_{\vece \in \mathbb{Z}_q^m} \sqrt{D(\vecx, \vece)} \ket{\vecx, \vece}$ and $\sum_{\vecx \in \mathbb{Z}_q^n} \sum_{\vece \in \mathbb{Z}_q^m} \sqrt{D'(\vecx, \vece)} \ket{\vecx, \vece}$ are negligibly close in trace distance.

\paragraph{Injectivity and Efficient inversion with the trapdoor.} Given the trapdoor and some $\vecy = \matA \vecx + b \cdot \vecb + \vece$ such that $\| \vece\| \le B_1 \sqrt{m}$, we can use Theorem~\ref{theorem:mp12} for each $b \in \{0,1\}$ on input $(\matA, \vecy - b \cdot \vecb)$ to recover $\vecx_b$ if it exists. This will give us all the possible preimages. This also immediately implies that $f_{k, b}$ is injective.

\paragraph{Dual-mode.} In mode $\mu = 0$, with all but negligible probability there does not exist $\vecs, \bar{\vece}$ such that $\vecb = \matA \vecs + \bar{\vece}$ such that $\|\bar{\vece}\| \le B^* \sqrt{m}$. Therefore there cannot exist $(\vecx_0, \vece_0)$ and $(\vecx_1, \vece_1)$ in the supports of $D_{k, 0}$ and $D_{k, 1}$ respectively such that $f_{k, 0}(\vecx_0, \vece_0) = f_{k, 1}(\vecx_1, \vece_1)$; otherwise, $\matA \vecx_0 + \vece_0 = \matA \vecx_1 + \vecb + \vece_1$. Rearranging, this means that $\vecb = \matA (\vecx_0 - \vecx_1) + (\vece_0 - \vece_1)$, where $\|\vece_0 - \vece_1\| \le \|\vece_0\| + \|\vece_1\| \le 2 B_1 \sqrt{m} = B^* \sqrt{m}$, which is a contradiction.

In mode $\mu = 1$, $\vecb = \matA \vecs + \bar{\vece}$ for $\|\bar{\vece}\|\le B_2 \sqrt{m}$, by the injectivity of $f_{k, b}$, there is a perfect matching $\mathcal{R}$ such that $((\vecx_0, \vece_0), (\vecx_1, \vece_1)) \in \mathcal{R}$ if and only if $f_{k, 0}(\vecx_0, \vece_0) = f_{k, 1}(\vecx_1, \vece_1)$. In particular, $f_{k, 0}(\vecx_0, \vece_0) = f_{k, 1}(\vecx_1, \vece_1)$ if and only if $\vecx_1 = \vecx_0 - \vecs$ and $\vece_1 = \vece_0 - \bar{\vece}$. Therefore, $D_{k, 0}(\vecx_0, \vece_0) = D_{k, 1}(\vecx_1, \vece_1)$.

\paragraph{Mode indistinguishability.} Assuming that the $\LWE_{n, q, \rho_{\mathbb{Z}_q, \sigma}}$ assumption is true for some superpolynomial modulus-to-noise ratio $B_2/q$, the keys in the two modes are computationally indistinguishable.
\end{proof}

\fi

\ifnum\llncs=0
\subsubsection*{Ring Learning With Errors}
The dual-mode trapdoor function family from LWE described above can be directly translated to one from Ring-LWE. We now informally describe such a construction from ring-LWE. The \cite{micciancio2012trapdoors} trapdoor sampling procedure readily generalizes to the Ring-LWE setting. Let $R_q := \mathbb{Z}_q[x]/(x^d + 1)$ be a polynomial ring.

The key consists of two sequences of polynomials $k = (\veca, \vecu) \in R_q^m \times R_q^m$. The sequence of polynomials $\veca = (a_1, a_2, \ldots, a_m) \in R^m_q$ generated along with a trapdoor $t_{\vec{a}}$. In mode $\mu = 0$, $\vecu = (u_1, u_2, \ldots, u_m) \in R^m_q$ generated uniformly at random. In mode $\mu = 1$, $\vecu = \veca \cdot s + \bar{\vece}$, where $s \in R_q$ is any (worst-case) polynomial and $\bar{\vece} \in R_q^m$ is a vector of polynomials whose each coefficient is sampled i.i.d.~from a narrow-width discrete Gaussian. Then the function $f_{k, b}$ is defined as follows. For $x \in R_q$ and $\vece \in R_q^m$, 
\begin{align*}
    f_{k, b}(x, \vece) := \veca \cdot x + b \cdot \vecu + \vece.
\end{align*}
Similar to the LWE case, in mode $\mu=0$ the probability distributions $D_{k, 0} = D_{k, 1}$ are (wider) discrete Gaussians, and in mode $\mu=1$, and the two distributions are discrete Gaussians of the same width, except that they are shifted by $\bar{\vece}$, that is $D_{k, 1} = D_{k, 0}-\bar{\vece}$.
\fi

\ifnum\llncs=0
\subsubsection*{Learning With Errors (Polynomial modulus-to-noise ratio)}
We observe that we can also obtain (a $q$-ary generalization of) dual-mode trapdoor functions from the quantum hardness of LWE with a polynomial modulus-to-noise ratio from the work of Brakerski~\cite{DBLP:conf/crypto/Brakerski18}, which builds a QFHE scheme from the same assumption. Unlike the dTF obtained from LWE with superpolynomial modulus-to-noise ratio, the functions we obtain here are not injective; however, there is still enough structure that allows us to correct the phase error introduced after the Hadamard measurement of the input register. We note that this $q$-ary variant of dTFs suffice for our purposes; instead of measuring the third register of $\ket{\chi^{(4)}}$ in the Hadamard basis in Lemma~\ref{lemma:RSP-hidden-bell-pair}, we instead first take the $q$-ary QFT and then measure the third register in the standard basis.

The \cite{DBLP:conf/crypto/Brakerski18} QFHE scheme uses the work of Bourse et al.~\cite{bourse2016fhe} that obtains circuit-private classical FHE from LWE with a polynomial modulus-to-noise ratio. In particular, we need a classical encryption scheme $\HE_{\text{Bra18}}$ that supports a rerandomizable, circuit-private homomorphic scalar addition operation, with the following additional properties. The QFHE scheme of Brakerski~\cite{DBLP:conf/crypto/Brakerski18} implicitly relies on such an encryption, and instantiates it assuming the hardness of LWE with polynomial modulus-to-noise ratio. We informally describe the dTF construction and the properties we need from the encryption scheme.

Let $P_0, P_1$ be some distributions over the ciphertext space such that ciphertexts in the support of $P_b$ decrypt to $b\in\{0,1\}$, so $\TV(P_0, P_1) = 1$ by correctness of decryption. Further, let $D$ be an efficiently sampleable distribution over finite domain $\mathcal{X}$.

\begin{enumerate}
    \item There is a probabilistic polynomial-time key generation algorithm $pk, sk, evk \leftarrow \HE_{\text{Bra18}}.\KeyGen(1^\lambda)$.
    \item The encryption algorithm $c \leftarrow \HE_{\text{Bra18}}.\Enc_{pk}(m; r)$ encrypts bits $m\in\{0,1\}$ with the public key. The ciphertext space is contained in $\mathbb{Z}_q^n$ for some $q, n \in \mathbb{N}$.
    \item There exists an algorithm that computes scalar addition as follows: $\HE_{\text{Bra18}}.\mathsf{EvalAdd}_{evk}(c, b; x)$ takes as input an encryption $c$ of $m\in\{0,1\}$, a bit $b\in\{0,1\}$ and randomness $x \leftarrow D$, and outputs ciphertext $c'$ such that the distribution of $c'$ (taken over the choice of $x \leftarrow D$) is negligibly close in Hellinger distance to $P_{m \oplus b}$.
    \item There is an efficient quantum procedure that produces the state negligibly close to $\sum_{x\in\mathcal{X}}\sqrt{D(x)} \ket{x}$.
    \item There is a decryption algorithm $\HE_{\text{Bra18}}.\Dec_{sk}$ that correctly decrypts the ciphertext outputs of $\HE_{\text{Bra18}}.\Enc$ and $\HE_{\text{Bra18}}.\mathsf{EvalAdd}$.
\end{enumerate}

With such an encryption scheme $\HE_{\text{Bra18}}$, we can construct dTF as follows. The key generation procedure takes as input the security parameter $1^\lambda$ and the mode $\mu\in\{0,1\}$, and outputs the key-trapdoor pair $k = (k_0, k_1, evk)$ and $t_k = sk$, where $(pk, sk, evk) \leftarrow \HE_{\text{Bra18}}.\KeyGen(1^\lambda)$, $k_0 \leftarrow \HE_{\text{Bra18}}.\Enc_{pk}(0)$ and $k_1 \leftarrow \HE_{\text{Bra18}}.\Enc_{pk}(\mu)$. By the semantic security of $\HE_{\text{Bra18}}$, keys sampled from the two modes are computationally indistinguishable. The dTF is defined as
\begin{align*}
    f_{k, b}(x) = \HE_{\text{Bra18}}.\mathsf{EvalAdd}_{evk}(k_b, b; x),
\end{align*}
which is efficiently computable given the key $k = (k_0, k_1)$. In the disjoint mode ($\mu = 0$), the output of $f_{k, b}$ is always an encryption of the bit $b$, so $f_{k, 0}$ and $f_{k, 1}$ have disjoint images. In the lossy mode ($\mu = 1$), both $f_{k, 0}$ and $f_{k, 1}$ always output encryptions of $0$. However these functions are no longer injective. To ensure the lossy mode property and efficient phase computation, we need the following additional property from the encryption scheme $\HE_{\text{Bra18}}$.
\begin{enumerate}[resume]
    \item For all ciphertext outputs $y \in \mathbb{Z}_q^n$ of $\HE_{\text{Bra18}}.\mathsf{EvalAdd}$, and for all $d \in \mathbb{Z}^n_q$, it must be true that there is some $s \in \{0,1\}$ such that $\alpha_0 = (-1)^s \alpha_1$, where
    \begin{align*}
        \alpha_b = \sum_{\substack{x \in \mathcal{X}\\f_{k, b}(x) = y}} e^{\frac{-2\pi i}{q} \langle d \cdot x\rangle} \cdot \sqrt{D(x)}.
    \end{align*}
    Further, given $y, d$ as well as the secret key/trapdoor as input, there is an efficient algorithm that computes $s$ for any $y, d$.
\end{enumerate}
\fi

\section{Dual-Mode Trapdoor Functions from Group Actions}\label{sec:groupactions}
Alamati, Malavolta and Rahimi give a candidate $\epsilon$-weak trapdoor function from isogeny-based group actions~\cite{alamati2022candidate} for inverse-polynomial $\epsilon$. For the security of their construction, they introduce and rely on the Extended Linear Hidden Shift (ELHS) assumption, a strengthening of the Linear Hidden Shift assumption of Alamati et al.~\cite{alamati2020cryptographic}. We modify their construction to give an \emph{injective and invertible dual-mode trapdoor  function} based on the ELHS assumption. Although this initially gives us a $\epsilon$-weak dTF with the same inverse polynomial correctness error, Lemma~\ref{lemma:dTF-amplification} immediately implies a dTF with negligible correctness error. We start by setting up some notation and definitions.

\paragraph{Notation.} For a regular and abelian group action $\star: \mathbb{G} \times \mathbb{X} \rightarrow \mathbb{X}$, we use additive notation to denote the group operation in $\mathbb{G}$. Let $\odot$ denote the component-wise product.

\begin{definition}[Effective Group Action] A regular and abelian group action $(\mathbb{G}, \mathbb{X}, \star)$ is effective if it satisfies the following properties.
\begin{enumerate}
    \item The group $\mathbb{G}$ is finite and there exist efficient p.p.t.~algorithms for:
    \begin{enumerate}
        \item membership testing (deciding whether a binary string represents a group element),
        \item equality testing and sampling uniformly in $\mathbb{G}$, and
        \item group operation and computing the inverse of any element.
    \end{enumerate}
    \item The set $\mathbb{X}$ is finite and there exist efficient algorithms for
    \begin{enumerate}
        \item membership testing (to check if a binary string represents a valid set element), and
        \item unique representation.
    \end{enumerate}
    \item There exists a distinguished element $x_0 \in \mathbb{X}$ with known representation.
    \item There exists an efficient algorithm that given any $g \in \mathbb{G}$ and any $x \in \mathbb{X}$, outputs $g \star x$.
\end{enumerate}
\end{definition}

Isogeny-based group actions are believed to have cryptographic properties such as one-wayness and pseudorandomness, suitably defined. One such conjectured hardness property is captured by the Extended Linear Hidden Shift assumption, introduced by Alamati, Malavolta and Rahimi~\cite{alamati2022candidate}. We note that a single-copy version of the assumption as stated in \cite{alamati2022candidate} suffices for us. Alamati et al.~\cite{alamati2022candidate} need (something stronger than) the multiple-copy version, to construct a trapdoor claw-free function with the adaptive hardcore bit property, which we do not require. We now define the single-copy version of their assumption.

\begin{definition}[Single-copy Extended Linear Hidden Shift Assumption \cite{alamati2022candidate}]\label{def:elhs} Let $\lambda$ be the security parameter. Let $(\mathbb{G}, \mathbb{X}, \star)$ be an effective group action, and let $n \ge \log (|\mathbb{G}|) + \Omega (\poly(\lambda))$ be an integer. We say that the extended linear hidden shift (LHS) assumption holds over $(\mathbb{G}, \mathbb{X}, \star)$ if no quantum polynomial-time adversary can distinguish between the following two distributions with advantage greater than negligible in $\lambda$:
\begin{align}\label{equation:elhs-distributions}
    D_0 := \left(\matM, \vecm, (\veca_b, \vecb_b)_{b \in \{0,1\}}\right) \quad \text{and} \quad
    D_1 := \left(\matM, \vecm, (\vecx_b, \vecy_b)_{b \in \{0,1\}}\right)\;,
\end{align}
where $\matM \leftarrow \mathbb{G}^{n \times n}$, $\vecm \leftarrow \mathbb{G}^n$, $\vecs \leftarrow \{0,1\}^n$, $\vect \leftarrow \mathbb{G}^n$, $\veca_b \leftarrow \mathbb{X}^n$, and $\vecb_b \leftarrow \mathbb{X}^n$ for $b \in \{0, 1\}$, and
\begin{align*}
    \vecx_0 &\leftarrow \mathbb{X}^n,\\
    \vecy_0 &:= \vect \star \vecx_0,\\
    \vecx_1 &:= \left[\matM \vecs \right] \star \vecx_0,\\
    \vecy_1 &:= \left[ \matM \vecs + \vecm \odot \vecs \right] \star \vecy_0\;.
\end{align*}
\end{definition}

We make the following assumption regarding the structure of the group $\mathbb{G}$. This allows us to reason about the injectivity of the dTF family we construct. Informally, we require that the group order $\ord(\mathbb{G})$ has large prime factors. This assumption seems to be heuristically reasonable if we look at parameters for CSIDH-$512$, for which the group $\mathbb{G}$ is isomorphic to $\mathbb{Z}/N\mathbb{Z}$ where $N$ is
\begin{align*}
    N =& \; 3 \cdot 37 \cdot 1407181 \cdot 51593604295295867744293584889 \\
     &\cdot 31599414504681995853008278745587832204909 \approx 2^{257.1},
\end{align*}
and its largest prime factor is roughly $2^{134.5}$ \cite{beullens2019csi}.
\begin{assumption}\label{assumption:groupaction-invertibility}
    Let $\{\mathbb{G}_\lambda\}_{\lambda \in \mathbb{N}}$ be a collection of finite abelian group with $\lambda$ being the security parameter such that the following is true. Suppose the prime factorization of the group order is $\ord(\mathbb{G}_\lambda) = p_1^{\alpha_1} \cdots p_k^{\alpha_k}$ where $p_1 \le \ldots \le p_k$. Then, it must hold that $p_k = \lambda^{\omega(1)}$.
\end{assumption}

\begin{lemma}\label{lemma:group-action-dTF}
Suppose the single-copy Extended Linear Hidden Shift Assumption holds over $(\mathbb{G}, \mathbb{X}, \star)$ with parameters $\lambda$ and $n \le \poly(\lambda)$. Further suppose that Assumption~\ref{assumption:groupaction-invertibility} holds for $\mathbb{G}$. Then, there exists an injective and invertible $\epsilon$-weak dual-mode trapdoor function family, where $\epsilon \le 1/\poly(\lambda)$.
\end{lemma}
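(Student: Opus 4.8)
The plan is to adapt the weak trapdoor claw-free function of \cite{alamati2022candidate} into a \emph{dual-mode} trapdoor function whose two modes are exactly the unstructured and structured distributions $D_0, D_1$ of the single-copy ELHS assumption (Definition~\ref{def:elhs}). Concretely, I would let the key carry an ELHS-style tuple $\bigl(\matM, \vecm, (\vecu_b, \vecv_b)_{b \in \{0,1\}}\bigr)$ and, on input $\vecr \in \{0,1\}^n$, define the two-component function $f_{k,b}(\vecr) = \bigl([\matM \vecr]\star \vecu_b,\ [\matM \vecr + \vecm \odot \vecr]\star \vecv_b\bigr)$, which is precisely the shape the ELHS twist is built for. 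Mode $\mu = 1$ (lossy) is generated by sampling the key from the structured distribution, so that $\vecu_1 = [\matM\vecs]\star\vecu_0$ and $\vecv_1 = [\matM\vecs + \vecm\odot\vecs]\star\vecv_0$ for a secret shift $\vecs$; mode $\mu = 0$ (disjoint) is generated from the unstructured distribution, where the bases $(\vecu_b, \vecv_b)$ are independent and uniform. Mode indistinguishability (property~6) is then immediate: a distinguisher for the two modes is, verbatim, a distinguisher for $D_0$ versus $D_1$, and is ruled out by ELHS (note that the reduction needs only the key, not the trapdoor). The distribution $D_{k,b}$ is taken to be the sampleable distribution over $\{0,1\}^n$ underlying state preparation, so efficient state preparation is a Hadamard transform and efficient evaluation follows from the effective group action.

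The trapdoor is the randomness used to generate the bases, i.e. the discrete logarithms $\vecc_{u,b}, \vecc_{v,b}$ of $\vecu_b, \vecv_b$ relative to the distinguished element $x_0$ (known because the client generates them), together with the shift $\vecs$ in the lossy mode. The key point that makes the trapdoor effective despite the hardness of vectorization is that the action is applied \emph{coordinate-wise} and the $\vecm \odot \vecr$ twist separates the two candidates $r_i \in \{0,1\}$ in each coordinate. Given $y = (y^{(1)}, y^{(2)})$ and a candidate $b$, I would recover $\vecr$ one coordinate at a time: the group element $h_i$ with $h_i \star y^{(1)}_i = y^{(2)}_i$ equals $m_i r_i + (\vecc_{v,b} - \vecc_{u,b})_i$, so testing the two computable candidates $(\vecc_{v,b} - \vecc_{u,b})_i$ and $m_i + (\vecc_{v,b} - \vecc_{u,b})_i$ against the (effectively computable) action reveals $r_i$ whenever $m_i$ is nontrivial. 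Running this for both $b$ yields the partial-inversion set and the full inverse of Definition~\ref{definition:injective-and-invertible}. Phase computation (property~4) then follows: from the recovered preimage $\vecr_b$ we output $\alpha_{k,y,d}(b)$ up to a common normalization as $(-1)^{d \cdot \vecr_b}$. Injectivity is where Assumption~\ref{assumption:groupaction-invertibility} enters: a large prime factor $p_k = \lambda^{\omega(1)}$ of $\ord(\mathbb{G})$ forces the subset-sum map $\vecr \mapsto \matM\vecr$ to be injective on $\{0,1\}^n$ for all but a $(3/p_k)^n = \negl(\lambda)$ fraction of $\matM$, by a union bound over the at most $3^n$ low-norm differences $\vecr - \vecr'$ (each of which is killed by a random $\matM$ only with probability at most $p_k^{-n}$, since some coordinate of the difference is a unit modulo $p_k$).

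Finally I would establish the dual-mode property. In the disjoint mode the independent uniform bases place $(\vecu_1, \vecv_1)$ outside the size-$\le 3^n$ shift-set of $(\vecu_0, \vecv_0)$ except with negligible probability, so the images of $f_{k,0}$ and $f_{k,1}$ are disjoint. In the lossy mode the $\vecs$-shift provides a matching between the preimages of $f_{k,0}$ and $f_{k,1}$: matched preimages are related by $\vecs$ with equal probabilities $D_{k,0} = D_{k,1}$, so the two amplitudes have equal magnitude and the relating sign $s = d \cdot \vecs$ is computed by the trapdoor directly from $\vecs$ and $d$, without ever inverting the action. I expect the main obstacle to be pinning down this lossy-mode condition with only \emph{inverse-polynomial} error while simultaneously retaining injectivity: the bounded domain $\{0,1\}^n$ interacts delicately with the shift $\vecs$ (a naive encoding makes the matching fail on a large fraction of inputs), so the careful input encoding and distribution of \cite{alamati2022candidate} is needed to drive the failure fraction down to $\epsilon \le 1/\poly(\lambda)$, and Assumption~\ref{assumption:groupaction-invertibility} must be invoked exactly to eliminate the low-order degeneracies that would otherwise break injectivity. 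Once the construction is shown to be an injective and invertible $\epsilon$-weak dTF with $\epsilon \le 1/\poly(\lambda)$, the amplification lemma (Lemma~\ref{lemma:dTF-amplification}) upgrades it to one with negligible correctness error, completing the proof.
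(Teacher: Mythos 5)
Your construction is the paper's construction in all essential respects: the same two-component ELHS-shaped function $f_{k,b}(\vecr) = \bigl([\matM\vecr]\star\vecx_b,\,[\matM\vecr+\vecm\odot\vecr]\star\vecy_b\bigr)$, the same assignment of the structured/unstructured ELHS distributions to the lossy/disjoint modes (so mode indistinguishability is verbatim ELHS), the same trapdoor content (the relative group elements linking the base points plus the shift $\vecs$), the same coordinate-wise brute-force inversion enabled by the $\vecm\odot\vecr$ twist, the same counting argument for disjointness, the same $\vecs$-shift matching of preimages in the lossy mode, and the same final appeal to Lemma~\ref{lemma:dTF-amplification}. Two minor technical differences are harmless: you argue injectivity from $\vecr\mapsto\matM\vecr$ via a union bound over differences, whereas the paper reads $\vecr$ off from $\vecm\odot\vecr$ and invokes Assumption~\ref{assumption:groupaction-invertibility} only to guarantee each $m_i$ has order exceeding the domain width (Lemma~\ref{lemma:random-group-element-large-order}); and your ``sign computed directly from $\vecs$ and $d$'' shortcut is not quite right once the domain is non-binary (the $\mathbb{F}_2$ phase involves the XOR of the binary \emph{encodings} of the two preimages, not the encoding of their integer difference), but your fallback of computing the phase from the recovered preimages handles this.

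The one genuine gap is exactly the point you flag and then defer: with domain $\{0,1\}^n$ the lossy-mode matching $\vecr\mapsto\vecr-\vecs$ stays inside the domain only when $r_i \ge s_i$ in every coordinate, i.e.\ for a $2^{-\mathrm{wt}(\vecs)}$ fraction of inputs, so for a typical $\vecs$ the failure probability is $1-2^{-\Theta(n)}$ rather than $1/\poly(\lambda)$, and the lemma's conclusion $\epsilon \le 1/\poly(\lambda)$ is not established. You attribute the fix to ``the careful input encoding of \cite{alamati2022candidate}'' without stating it; the paper's resolution is a one-line parameter choice rather than a delicate encoding: take the domain to be $[B]^n$ with the uniform distribution for a polynomially large $B > 2n^2$, so that $\vecr - \vecs \in [B]^n$ except with probability at most $n/(B-1) \le 1/\poly(\lambda)$ by a union bound, while inversion remains efficient (each coordinate still ranges over only $\poly(\lambda)$ candidates) and injectivity still holds because each $m_i$ has order exceeding $B$ with overwhelming probability. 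Supplying this domain and rerunning your own arguments over $[B]^n$ closes the gap and completes the proof.
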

By Lemma~\ref{lemma:dTF-amplification}, we immediately get the following corollary.

\begin{corollary}
Suppose the single-copy Extended Linear Hidden Shift Assumption holds over $(\mathbb{G}, \mathbb{X}, \star)$ with parameters $\lambda$ and $n \le \poly(\lambda)$, and that Assumption~\ref{assumption:groupaction-invertibility} holds for $\mathbb{G}$. Then, there exists an $\epsilon$-weak dual-mode trapdoor function family, where $\epsilon \le 1/\negl(\lambda)$.
\end{corollary}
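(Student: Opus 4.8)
The final corollary is immediate once Lemma~\ref{lemma:group-action-dTF} is in hand: applying the amplification transformation of Lemma~\ref{lemma:dTF-amplification} with $\ell$ chosen so that $\epsilon^\ell = \negl(\secp)$ (possible since $\epsilon \le 1/\poly(\secp) < 1$) turns the injective and invertible $\epsilon$-weak dTF of the lemma into an $\epsilon^\ell$-weak dTF with negligible correctness error. So the content lies entirely in Lemma~\ref{lemma:group-action-dTF}, and the plan targets that statement. The plan is to recast the $\epsilon$-weak trapdoor claw-free function of Alamati, Malavolta and Rahimi~\cite{alamati2022candidate} as a \emph{dual-mode} object, using single-copy ELHS (Definition~\ref{def:elhs}) to supply the two indistinguishable key distributions. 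Concretely, $\mathsf{Gen}_{\mathcal{F}}(1^\secp,\mu)$ samples $\matM \leftarrow \mathbb{G}^{n\times n}$, $\vecm \leftarrow \mathbb{G}^n$, and a base pair $(\vecx_0,\vecy_0)$ with $\vecy_0 = \vect \star \vecx_0$ for fresh $\vect \leftarrow \mathbb{G}^n$, retaining $\vect$ in the trapdoor. In the lossy mode $\mu=1$ it draws a hidden shift $\vecs \leftarrow \{0,1\}^n$ and sets $\vecx_1 = [\matM\vecs]\star\vecx_0$, $\vecy_1 = [\matM\vecs + \vecm\odot\vecs]\star\vecy_0$ per the $D_1$ rule, storing $\vecs$ too; in the disjoint mode $\mu=0$ it samples $(\vecx_1,\vecy_1)$ uniformly as in $D_0$. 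The two functions $f_{k,0},f_{k,1}$ apply the group action coordinatewise against the two base pairs, and the domain distributions $D_{k,b}$ are uniform, so the state-preparation vector $\sum_x \sqrt{D_{k,b}(x)}\,\ket{x}$ is a uniform superposition. Since the base pairs are uniform in mode $0$ and follow the shift rule in mode $1$, mode indistinguishability is \emph{exactly} single-copy ELHS: a distinguisher between the two key generators is a distinguisher between $D_0$ and $D_1$.

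With the construction fixed, I would first dispatch the routine properties. Efficient evaluation is coordinatewise application of the effective group action, and efficient key generation is the ELHS sampler plus the retained shifts. For partial inversion and phase computation with the trapdoor, the secret shifts $\vect$ (and $\vecs$) allow us, given $y$, to decide for each $b$ whether $y$ lies in the image of $f_{k,b}$ and to recover the unique preimage, from which $\alpha_{k,y,d}(b)$ is read off; because $D_{k,b}$ is uniform and inversion is exact, each $\alpha$ collapses to a single phase term weighted by $\sqrt{D_{k,b}(x)}$. I would carry the phase computation in the group-theoretic ($q$-ary) form, as the paper already does for the polynomial-modulus LWE instantiation, rather than the $\mathbb{F}_2$ form, and verify that the lossy-mode alignment then reduces to the single-phase condition of Definition~\ref{definition:weak-lossy-function}.

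Two steps carry the real weight. The first is \emph{injectivity and trapdoor invertibility} (Definition~\ref{definition:injective-and-invertible}), which is where Assumption~\ref{assumption:groupaction-invertibility} enters. A collision $f_{k,b}(x) = f_{k,b}(x')$ forces a structured nonzero combination of the columns of $\matM$ to stabilize a base point, hence (by regularity of the action) to vanish in $\mathbb{G}^n$. Reducing modulo the large prime factor $p_k = \secp^{\omega(1)}$ guaranteed by the assumption turns this into linear algebra over the field $\mathbb{Z}/p_k\mathbb{Z}$, where a uniformly random $\matM$ is injective except with probability negligible in $\secp$ (on the order of $n/p_k$); this gives injectivity of each $f_{k,b}$ with overwhelming probability, and invertibility follows from the retained shift data. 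The second weighty step is the \emph{lossy-mode property with inverse-polynomial error} $\epsilon$: the hidden shift $\vecs$ aligns the preimage structure of $f_{k,0}$ and $f_{k,1}$ so that their phase coefficients agree up to a single phase, but only on the $1-\epsilon$ fraction of images where the imperfect isogeny group action behaves ideally, which is precisely the inverse-polynomial defect inherited from \cite{alamati2022candidate}.

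The main obstacle I anticipate is this last point: establishing the lossy-mode phase-alignment with a clean single-phase guarantee and a cleanly quantified $1-\epsilon$ success probability. It requires tracking, through the coordinatewise group action and the ELHS shift relations, exactly when the two functions produce matching images \emph{and} matching preimage weights, and isolating the inverse-polynomial defect that prevents perfect alignment. The injectivity reduction to $\mathbb{Z}/p_k\mathbb{Z}$ is the second delicate point, as it is the only place Assumption~\ref{assumption:groupaction-invertibility} is used and it must be made compatible with the structured (shift-based) inputs rather than fully random ones.
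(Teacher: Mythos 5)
Your treatment of the corollary itself is exactly the paper's: the paper derives it in one line by feeding the injective and invertible $\epsilon$-weak dTF of Lemma~\ref{lemma:group-action-dTF} (with $\epsilon \le 1/\poly(\lambda)$) into the amplification of Lemma~\ref{lemma:dTF-amplification}, and your construction for Lemma~\ref{lemma:group-action-dTF} is also the same one the paper uses --- the AMR function $f_{k,b}(\vecr) = ([\matM\vecr]\star\vecx_b, [\matM\vecr+\vecm\odot\vecr]\star\vecy_b)$ on domain $[B]^n$ with uniform $D_{k,b}$, the key in mode $1$ distributed as $D_1$ and in mode $0$ as $D_0$, so that mode indistinguishability is literally single-copy ELHS. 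Two points diverge. First, for injectivity and inversion the paper does not argue about the kernel of $\matM$ at all: it uses the trapdoor identity $[-\vect - b(1-\mu)(\vecu-\vecv)]\star\vecz' = [\vecm\odot(\vecr+b\vecs)]\star\vecz$, which is \emph{coordinatewise} in $\vecm$, and invokes Assumption~\ref{assumption:groupaction-invertibility} only to show (Lemma~\ref{lemma:random-group-element-large-order}) that each entry of $\vecm$ has order exceeding $B$ with overwhelming probability, so that brute-forcing each coordinate of $\vecr$ over the polynomial-size set $[B]$ returns a unique answer. Your route via linear algebra mod $p_k$ can be made to give injectivity, but it does not by itself yield an \emph{efficient} inversion algorithm (one cannot invert $\matM$ without knowing the group structure), and efficient inversion is what Definition~\ref{definition:injective-and-invertible} and the amplification lemma actually require; the paper's coordinatewise brute force is the mechanism you are missing.

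Second, and more importantly, you have misdiagnosed the source of the inverse-polynomial error $\epsilon$, and flagged as the ``main obstacle'' something that is in fact the easiest step. The defect has nothing to do with the isogeny group action ``behaving ideally'': in the lossy mode the claw structure is exact --- $f_{k,1}(\vecr-\vecs) = f_{k,0}(\vecr)$ whenever both arguments lie in the domain, because $\vecx_1 = [\matM\vecs]\star\vecx_0$ and $\vecy_1 = [\matM\vecs+\vecm\odot\vecs]\star\vecy_0$ make the shift by $\vecs$ cancel identically. The only failure is the boundary effect that $\vecr - \vecs$ may fall outside $[B]^n$ when some coordinate of $\vecr$ equals its minimum; a union bound over the $n$ coordinates gives $\epsilon \le n/(B-1) \le 1/\poly(\lambda)$, and on the complementary event injectivity plus uniformity of $D_{k,b}$ forces $|\alpha_{k,y,d}(0)| = |\alpha_{k,y,d}(1)|$ (each sum has exactly one term of equal weight), which is the single-phase condition. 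This $\epsilon$ is inherently inverse-polynomial because $B$ must stay polynomial for the brute-force inversion to be efficient --- which is precisely why the amplification lemma, and hence the corollary as stated, is needed at all. Also note that the paper keeps the $\mathbb{F}_2$ phase convention here (the $q$-ary variant is only needed for the polynomial-modulus LWE instantiation); since each $f_{k,b}$ is injective, no multi-term phase sums arise.
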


\begin{proof}[Proof of Lemma~\ref{lemma:group-action-dTF}]
Let $B$ be an integer such that $B > 2n^2$ and $B = \poly(\lambda)$.\footnote{We have some flexibility in terms of setting the parameters: even a lower bound of just $B \ge 4n$ gives us a $1/2$-weak dTF, which we can amplify to get negligible correctness error using Lemma~\ref{lemma:dTF-amplification}.}

\paragraph{Efficient key generation.} The key-generation algorithm $\mathsf{Gen}(1^\lambda, \mu)$ works as follows.\\

\noindent
$\underline{\mathsf{Gen}(1^\lambda, \mu)}$:
\begin{enumerate}
    \item Sample $\matM \leftarrow \mathbb{G}^{n \times n}$, $\vecm \leftarrow \mathbb{G}^n$, $\vecs \leftarrow \{0,1\}^n$, and $\vect \leftarrow \mathbb{G}^n$.
    \item Sample $\vecu, \vecv \leftarrow \mathbb{G}^n$ and $\vecx_0 \leftarrow \mathbb{X}^n$ and set $\vecx_0, \vecx_1, \vecy_0, \vecy_1$ as follows:
    \begin{align*}
        \vecy_0 &:= \vect \star \vecx_0,\\
        \vecx_1 &:= \left[\matM \vecs + (1- \mu) \cdot \vecu \right] \star \vecx_0,\\
        \vecy_1 &:= \left[\matM \vecs + \vecm \odot \vecs + (1 - \mu) \cdot \vecv \right] \star \vecy_0.
    \end{align*}
    \item Set the key-trapdoor pair as
    \begin{align*}
        k &= (\matM, \vecm, \vecx_0, \vecy_0, \vecx_1, \vecy_1), \quad \text{and}\\
        t_k &= \Bigl(\vecs, \vect, (1 - \mu) \cdot (\vecu - \vecv)\Bigr).
    \end{align*}
\end{enumerate}

\paragraph{Efficient function evaluation.} Define a family of functions where for each $k, b$, we have that $f_{k, b} : [B]^n \rightarrow \mathbb{X}^{2n}$ is defined as 
\begin{align*}\label{equation:group-action-tcf}
f_{k, b}(\vecr) = \left(\vecz_{k, b}(\vecr), \vecz'_{k, b}(\vecr)\right),    
\end{align*} 
where
\begin{align*}
    \vecz_{k, b}(\vecr) &:= [\matM \vecr] \star \vecx_b, \quad \text{and}\\
    \vecz'_{k, b}(\vecr) &:= [\matM \vecr + \vecm \odot \vecr] \star \vecy_b \;.
\end{align*}
Assuming the group operation as well as the group action is efficiently computable, it is clear from the definition that the function can be efficiently evaluated give the key.

\paragraph{Efficient state preparation.} Let the distribution $D_{k, b}$ be the uniform distribution over $[B]^n$ for all $k, b$, so there is an efficient quantum procedure to prepare the state $\sum_{\vecr \in [B]^n} D_{k, b}(\vecr) \ket{\vecr}$.

\paragraph{Efficient inversion with trapdoor, and injectivity.} To invert the function $f_{k, b}$ on some output $(\vecz, \vecz')= f_{k, b}(\vecr)$, we can recover input $\vecr$ as follows. Observe that the following relation holds:
\begin{align*}
    \left[-\vect -b \cdot (1 - \mu) \cdot (\vecu - \vecv) \right] \star \vecz' = \left[ \vecm \odot (\vecr + b \cdot \vecs)\right] \star \vecz \;.
\end{align*}
Because the action on the right-hand side is applied component-wise and each entry of $\vecr$ can take on at most polynomially many values in $[B]$, we can recover each entry of $\vecr$ efficiently by brute force, since $\vecs$, $\vect$ and $(1 - \mu) \cdot (\vecu - \vecv)$ are included in the trapdoor. By Lemma~\ref{lemma:random-group-element-large-order}, we know that with all but negligible probability, every element of $\vecm$ has order larger than $B$, therefore, $\vecr$ is uniquely determined.

\paragraph{Dual-mode.} In the disjoint mode ($\mu = 0$) we claim that with all but negligible probability, the images of $f_{k, 0}$ and $f_{k, 1}$ are disjoint. Suppose for contradiction that there exist $\vecr, \vecr' \in [B]^n$ such that $f_{k, 0}(\vecr) = f_{k, 1}(\vecr')$. In particular, $z_{k, 0}(\vecr) = z_{k, 1}(\vecr')$. That is,
\begin{align*}
    [\matM \vecr] \star \vecx_0 &= [\matM \vecr'] \star \vecx_1 = [\matM (\vecr' + \vecs) + \vecu] \star \vecx_0.
\end{align*}
Since the group action is regular, we get that $\matM \vecr = \matM (\vecr' + \vecs) + \vecu$, or
\begin{align*}
    \matM (\vecr - \vecr' - \vecs) = \vecu.
\end{align*}
Varying $\vecv, \vecv', \vecs$, the left hand side can take at most $(2B+1)^n$ values. On the other hand, $\vecu$ is sampled uniformly at random from a set of size $|\mathbb{X}|^n$. Therefore, the probability over the choice of the key $k$ that there exist such values of $\vecr, \vecr'$ is at most $((2B+1)/|\mathbb{X}|)^n$, which is negligible in $\lambda$.

In the lossy mode ($\mu =1$), we recover (a simpler version of) the weak TCF of \cite{alamati2022candidate}. Observe that for any secret vector $\vecs \in \{0,1\}$, with probability at least $1 - n/(B-1)$ over the choice of $\vecr \leftarrow [B^n]$, $\vecr' := \vecr - \vecs$ is also in the domain $[B]^n$. By injectivity, this means that $|\alpha_{k, y, d}(0)| = |\alpha_{k, y, d}(1)|$ with the same probability. By the choice of parameters, $\epsilon = n/(B-1) \le 1/\poly(\lambda)$.

\paragraph{Mode indistinguishability.} In mode $\mu = 0$, since $\vecu$, $\vecv$ are sampled independently and uniformly, $\vecx_0, \vecx_1, \vecy_0, \vecy_1$ are i.i.d.~uniform random vectors, and therefore $k(0)$ follows the distribution $D_0$ in Equation~\eqref{equation:elhs-distributions}. On the other hand, in mode $\mu = 1$ the distribution of the key $k(1)$ is exactly $D_1$. Therefore the Extended Linear Hidden Shift assumption directly implies mode indistinguishability.
\end{proof}

\begin{lemma}\label{lemma:random-group-element-large-order}
    Suppose Assumption~\ref{assumption:groupaction-invertibility} holds for the collection of groups $\{\mathbb{G}_\lambda\}_{\lambda \in \mathbb{N}}$, and let $B = \poly(\lambda)$ be an integer. Then, for a random element $g \leftarrow \mathbb{G}$, we have that $\ord(g) \ge B$ with probability at least $1 - \negl(\lambda)$.
\end{lemma}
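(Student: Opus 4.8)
The plan is to reduce the statement to a single fact about the largest prime factor $p_k$ of $\ord(\mathbb{G})$. First I would observe that since $B = \poly(\lambda)$ while Assumption~\ref{assumption:groupaction-invertibility} guarantees $p_k = \lambda^{\omega(1)}$, we have $p_k > B$ for all sufficiently large $\lambda$. Consequently, any element $g$ with $\ord(g) < B$ must have order coprime to $p_k$: if $p_k$ divided $\ord(g)$, then $\ord(g) \ge p_k > B$, a contradiction. Thus it suffices to upper bound the probability that a uniformly random $g \leftarrow \mathbb{G}$ has order not divisible by $p_k$.

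To bound this probability, I would invoke the primary decomposition of the finite abelian group $\mathbb{G}$, writing $\mathbb{G} \cong \mathbb{G}_{p_k} \oplus H$, where $\mathbb{G}_{p_k}$ is the Sylow $p_k$-subgroup (of order $p_k^{\alpha_k}$) and $|H|$ is coprime to $p_k$. Writing $g = (a, h)$ with $a \in \mathbb{G}_{p_k}$ and $h \in H$, the order of $g$ is $\mathrm{lcm}(\ord(a), \ord(h))$, and $p_k \nmid \ord(g)$ holds exactly when $a$ is the identity of $\mathbb{G}_{p_k}$ (since any nonidentity element of $\mathbb{G}_{p_k}$ has order a positive power of $p_k$). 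Exactly $|H| = |\mathbb{G}|/p_k^{\alpha_k}$ of the group elements have this form, so the probability that $p_k \nmid \ord(g)$ is precisely $1/p_k^{\alpha_k} \le 1/p_k$.

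Combining the two steps yields $\Pr_{g \leftarrow \mathbb{G}}[\ord(g) < B] \le 1/p_k \le \lambda^{-\omega(1)} = \negl(\lambda)$, as desired. I expect the only subtlety — rather than a genuine obstacle — to be the bookkeeping around quantifiers: the claim is about the collection $\{\mathbb{G}_\lambda\}_{\lambda \in \mathbb{N}}$, so I would carry the ``for sufficiently large $\lambda$'' qualification explicitly and track the dependence of $p_k$ (and $\alpha_k$) on $\lambda$, concluding that the bound $1/p_k$ is negligible precisely because $p_k = \lambda^{\omega(1)}$ is superpolynomial. No structural input beyond the standard primary (Sylow) decomposition of finite abelian groups is required.
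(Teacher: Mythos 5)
Your proposal is correct and follows essentially the same route as the paper: both rest on the primary decomposition of the finite abelian group $\mathbb{G}$ and bound the probability that the $p_k$-part of a uniform element is trivial, which is the only way $\ord(g)$ can fall below $p_k > B$. The only cosmetic difference is that you work with the full Sylow $p_k$-subgroup (giving the bound $p_k^{-\alpha_k}$) whereas the paper isolates a single cyclic factor $A_{k,1}$ of order $p_k^{\beta_{k,1}}$ (giving $p_k^{-\beta_{k,1}}$); both bounds are negligible for the same reason.
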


\begin{proof}
    Since $\mathbb{G}$ is a finite abelian group, we can decompose $\mathbb{G}$ as a direct sum of cyclic groups $A_{i, j}:= \mathbb{Z}_{p_i^{\beta_{i, j}}}$ of prime-power order, where $j \in [k_i]$, $\beta_{i, j} \ge 1$ and $\sum_j \beta_{i, j} = \alpha_i$.
    Sampling a uniformly random group element $g \leftarrow \mathbb{G}$ is equivalent to sampling its components $g_{i, j} \leftarrow A_{i, j}$ for all $i \in [k]$ and $j \in [k_i]$. 
    
    Observe that $\ord(g) \ge \max_{i, j} \ord(g_{i,j})$. In particular, consider a component cyclic group, say $A_{k, 1}$, whose order is a power of the largest prime $p_k$. Since every non-identity element of this component cyclic group has order at least $p_k$, with probability at least $1 - p_k^{-\beta_{k, 1}} = 1 - \lambda^{-\omega(1)}$, a random element $g_{k, 1} \leftarrow A_{k, 1}$ has order $\ord(g_{k, 1}) \ge p_k$. Since $B = \poly(\lambda) \le p_k$, this completes the proof.
\end{proof}

\paragraph{Acknowledgements.}
This research was supported in part by DARPA under Agreement Number HR00112020023, NSF CNS-2154149, a Simons Investigator award and a Thornton Family Faculty Research Innovation Fellowship. AG was supported in addition by the Ida M. Green Fellowship from MIT.

\bibliographystyle{alpha}
\bibliography{refs.bib}

\appendix

\ifnum\llncs=0
\section{The DSS Teleportation Gadget}\label{section:DSSgadget}
In this section, we describe the structure of the gadget and how the evaluator uses the gadget to correct the $\P$ gate errors introduced when performing a $\T$ gate.
As depicted in Figure~\ref{fig:DSS}, the procedure of running the DSS gadget takes as input a state $\rho$ and an encryption $\tilde{x} = \HE.\Enc_{pk}(x)$ for some $x \in \{0,1\}$ and outputs the state
\begin{align*}
    \X^{x'} \Z^{z'} (\P^\dagger)^x \rho (\P)^x \Z^{z'} \X^{x'},
\end{align*}
for some one-time pad keys $x', z' \in \{0,1\}$, along with their encryptions $\HE.\Enc_{pk'}(x')$ and $\HE.\Enc_{pk'}(z')$ under an independent public key $pk'$. At a high level, what the gadget decrypts $\tilde{x}$ and then applies $(\P^\dagger)^x$ to the input qubit in a way that does not reveal the secret key $sk$. Dulek et al.~\cite{dulek2016quantum} use the fact that the decryption circuit is in $\mathsf{NC}^1$, that is, its depth grows logarithmically with the security parameter, to write a polynomial-size width-$5$ permutation branching program $\mathcal{P}$ that computes $\HE.\Dec$ (via Barrington's theorem~\cite{barrington1986bounded}). The permutation branching program can then be encoded as Bell pairs that wired together in a particular manner. Teleporting an input qubit through these Bell pairs has the effect of performing the decryption computation.

\paragraph{The construction.} Now we present the construction of the DSS gadget explicitly, starting with the permutation branching program we want to encode. For intuition, see Example~\ref{example:dss-gadget-or} and Figure~\ref{fig:dss-gadget} taken from \cite{dulek2016quantum} for a picture of what the DSS gadget would look like if the decryption circuit was simply an OR of the first bits of the ciphertext and the secret key. Suppose that the decryption circuit $\mathsf{HE.Dec}$ takes as input the concatenation of the secret key $sk$ and the ciphertext $\tilde{x}$, and has depth $O(\log(\secp))$, where $\secp$ is the security parameter. Then, by Barrington's theorem, we can write $\mathsf{HE.Dec}$ as a width-$5$ permutation branching program $\mathcal{P}$ of length $L = \poly(\secp)$,
\begin{align*}
    \mathcal{P} = (\langle i_1, \sigma_1^0, \sigma_1^1 \rangle, \langle i_2, \sigma_2^0, \sigma_2^1 \rangle, \ldots, \langle i_L, \sigma_L^0, \sigma_L^1 \rangle).
\end{align*}
Without loss of generality, we can assume that $L$ is even, and that the instructions alternately depend on bits from the ciphertext $\tilde{x}$ and the secret $sk$. This can be ensured by inserting dummy instructions always performing the identity permutation between any two consecutive instructions that depend on the same variable (either $sk$ or $\tilde{x}$). Suppose that the $\ell$th instruction $\langle i_\ell, \sigma_\ell^1, \sigma_\ell^0 \rangle$ depends on $\tilde{x}$ if $\ell$ is odd and $sk$ if $\ell$ is even. 

The DSS gadget encodes the even-numbered instructions using entanglement between Bell pairs and Bell basis measurements to encode the odd-numbered instructions. The gadget is arranged in layers, and teleporting the input qubit through the DSS gadget implicitly runs $\HE.\Dec_{sk}$ on $\tilde{x}$, where each layer effectively applies a permutation in $S_5$ that teleports it to one of the five registers in the next layer.

\begin{enumerate}
    \item Let $Q$ be a set of $10L$ qubit registers. Label the first $5L$ of them in layers of ten as $1_{\ell, \text{in}}, 2_{\ell, \text{in}}, \ldots, 5_{\ell, \text{in}}$ and $1_{\ell, \text{out}}, 2_{\ell, \text{out}}, \ldots, 5_{\ell, \text{out}}$ for \textit{even} $\ell \in [2L]$.
    \item Let $Q_1$ be the set of registers labeled as $1_{\ell, \text{in}}, 2_{\ell, \text{in}}, \ldots, 5_{\ell, \text{in}}$ for even $\ell \le 2L$, and $Q_2$ be the set of registers labeled as $1_{\ell, \text{out}}, 2_{\ell, \text{out}}, \ldots, 5_{\ell, \text{out}}$ for \textit{even} $\ell \in [2L]$.
    \item Let $P = \{2_{L, \text{out}}, 3_{L, \text{out}}, \ldots, 5_{L, \text{out}}\}$ be the set of registers with the inverse phase gate applied to them.
    \item The Bell pairs between $Q_1$ and $Q_2$ are determined the even-numbered instructions $\langle i_\ell, \sigma_\ell^1, \sigma_\ell^0 \rangle$, for even $\ell \in [L]$.\footnote{For now we only describe the behavior of $\pi$ (and $\nu$) on the first $5L/2$ qubits of $Q_1$ (and $Q_2$).} In particular, if $a_{\ell, \text{in}} \in Q_1$ for some $a \in [5]$, then $\pi^{sk}(a_{\ell, \text{in}}) = \sigma_\ell(a)_{\ell, \text{out}}$, where $\sigma_\ell := \sigma_\ell^{sk_{i_\ell}}$ is determined by $i_\ell$th bit of $sk$.
    \item The Bell basis measurements to be performed by the evaluator depend on the ciphertext $\tilde{x}$ and are determined by the odd-numbered instructions $\langle i_\ell, \sigma_\ell^1, \sigma_\ell^0 \rangle$, for odd $\ell \in [L]$. If $a_{\ell-1, \text{out}} \in Q_2$ for some $a \in [5]$, then $\nu(a_{\ell, \text{out}}) = \sigma_\ell(a)_{\ell, \text{in}}$, where $\sigma_\ell := \sigma_\ell^{sk_{i_\ell}}$. Labeling the input qubit  as $1_{0, \text{out}}$, we also have that $\nu(1_{0, \text{out}}, \tilde{x}) = \sigma_1(1)_{1, \text{in}}$.
\end{enumerate}

By Barrington's theorem, if $\HE.\Dec_{sk}(\tilde{x}) = 0$ then the product, say $\tau$, of the permutations coming from the evaluated instructions equals the identity permutation. In that case, consecutively applying these permutations on `$1$' results in $\tau(1) = e(1) = 1$. The teleported qubit ends up in the register $1_{L, \text{out}}$ (with no inverse phase gate).
On the other hand, if $\HE.\Dec_{sk}(\tilde{x}) = 1$ then $\tau$ would evaluate to a $5$-cycle, and consecutively applying these permutations on `$1$' results in $\tau(1) \in \{2, 3, 4, 5\}$. In this case, the teleported qubit ends up in one of registers $2_{L, \text{out}}, 3_{L, \text{out}}, \ldots, 5_{L, \text{out}}$, with an inverse phase gate applied to it.

The gadget so far correctly applies the inverse phase gate $\P^\dagger$ if and only if $\HE.\Dec_{sk}(\tilde{x}) = x = 1$. However, the qubit is at a location that is unknown to the evaluator, because $\tau$ is hidden from them! To fix this problem, \cite{dulek2016quantum} use the remaining $5L$ qubits to execute the inverse branching program afterwards, so that the desired qubit is always teleported to the same register. The inverse branching program, which we denote as $\mathcal{P}^{-1}$ is given by the instructions of the original program $\mathcal{P}$ in reverse order, using the inverse permutations for each instruction instead,
\begin{align*}
    \mathcal{P}^{-1} = \Bigl(\langle i_L, (\sigma_L^1)^{-1}, (\sigma_L^0)^{-1} \rangle, \langle i_{L-1}, (\sigma_{L-1}^1)^{-1}, (\sigma_{L-1}^0)^{-1} \rangle, \ldots, \langle i_1, (\sigma_1^0)^{-1}, (\sigma_1^0)^{-1} \Bigr).
\end{align*}
After executing the entire gadget with $\mathcal{P}^{-1}$, the qubit is guaranteed to be in the location it started in: register `$1$' of the final layer of five qubits\footnote{Note that instruction $L$ is used twice in a row, breaking the alternation between `in' and `out' registers. This can be fixed by performing measurements that correspond to the identity permutation in between.}. This completes the description of the DSS gadget.

\begin{example}\label{example:dss-gadget-or}
As a simplified example, suppose that $sk, \tilde{x} \in \{0,1\}$ the decryption function $\HE.\Dec_{sk}(\tilde{a})$ is $sk \, \mathsf{OR} \, \tilde{x}$. Then, for one possible example set of values of $\tilde{a}$ and $sk$, half of the gadget and measurements will be as given in Figure~\ref{fig:dss-gadget}. To complete this gadget, the same construction is appended, reflected horizontally.
The $\mathsf{OR}$ function on two bits can be computed using a width-5 permutation branching program of length 4, consisting of the following list of instructions:
\begin{enumerate}
\item $\langle 1, e, (12345)\rangle$
\item $\langle 2, e, (12453)\rangle$
\item $\langle 1, e, (54321)\rangle$
\item $\langle 2, (14235), (15243)\rangle$
\end{enumerate}
\end{example}

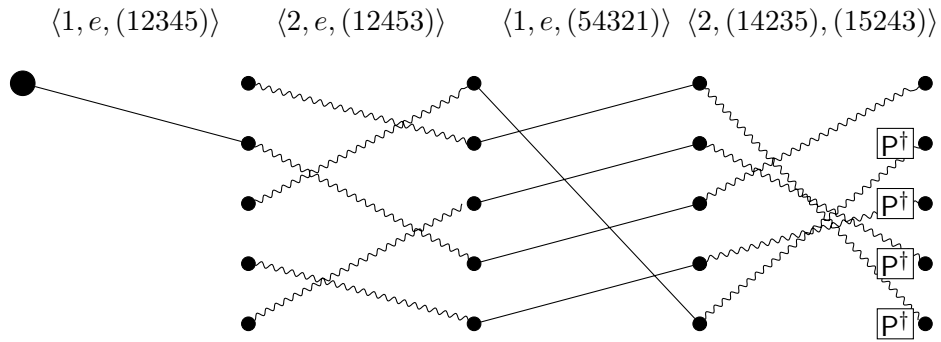
\begin{figure}
\centering
\begin{tikzpicture}[scale=0.8,every node/.style={draw,shape=circle,fill,scale=0.5},
decoration={snake, segment length=4, amplitude=1}]

\foreach \x in {2,...,5}
{
	\foreach \y in {1,...,5}
	{
		\pgfmathtruncatemacro{\label}{\x + 5 * \y}
		\node (\x-\y) at (3.75*\x,6-\y) {};
	}
}
\node[style={draw,shape=circle,fill,scale=1.8}] (1-1) at (3.75*1,6-1) {};

\draw     (1-1) -- (2-2);
          
\draw[decorate] (2-1) -- (3-2)
          (2-2) -- (3-4)
          (2-4) -- (3-5)
          (2-5) -- (3-3)
          (2-3) -- (3-1);

\draw     (3-5) -- (4-4)
          (3-4) -- (4-3)
          (3-3) -- (4-2)
          (3-2) -- (4-1)
          (3-1) -- (4-5);
 
 \draw[decorate]  (4-1) -- (5-5)
          (4-5) -- (5-2)
          (4-2) -- (5-4)
          (4-4) -- (5-3)
          (4-3) -- (5-1);

\filldraw[fill=white] (3.75*5 - 0.8,6-2 - 0.25) rectangle (3.75*5 - 0.2,6-2 + 0.25);
\node[draw=none,fill=none,scale=2] at (3.75*5 - 0.5, 6-2) {$\P^\dag$};
\filldraw[fill=white] (3.75*5 - 0.8,6-3 - 0.25) rectangle (3.75*5 - 0.2,6-3 + 0.25);
\node[draw=none,fill=none,scale=2] at (3.75*5 - 0.5, 6-3) {$\P^\dag$};
\filldraw[fill=white] (3.75*5 - 0.8,6-4 - 0.25) rectangle (3.75*5 - 0.2,6-4 + 0.25);
\node[draw=none,fill=none,scale=2] at (3.75*5 - 0.5, 6-4) {$\P^\dag$};
\filldraw[fill=white] (3.75*5 - 0.8,6-5 - 0.25) rectangle (3.75*5 - 0.2,6-5 + 0.25);
\node[draw=none,fill=none,scale=2] at (3.75*5 - 0.5, 6-5) {$\P^\dag$};

\node[draw=none,fill=none,scale=2] at (5.625,6) {$\langle 1, e, (12345)\rangle$};
\node[draw=none,fill=none,scale=2] at (9.375,6) {$\langle 2, e, (12453)\rangle$};
\node[draw=none,fill=none,scale=2] at (13.125,6) {$\langle 1, e, (54321)\rangle$};
\node[draw=none,fill=none,scale=2] at (16.875,6) {$\langle 2, (14235), (15243)\rangle$};

\end{tikzpicture}
\caption{(Taken from \cite{dulek2016quantum}, Figure~4): Structure of the (first half of the) gadget, with measurements, coming from the 5-permutation branching program for the $\mathsf{OR}$ function on the input $(sk, \tilde{x}) = (0,0)$. The example program's instructions are displayed above the permutations. The solid  lines correspond to Bell measurements, while the wavy lines
represent EPR pairs.}
\label{fig:dss-gadget}
\end{figure}
\fi

\end{document}